\documentclass[journal,draftcls,onecolumn]{IEEEtran}

\usepackage[utf8]{inputenc} 
\usepackage[T1]{fontenc}
\usepackage{url}
\usepackage{ifthen}
\usepackage{cite}
\usepackage[cmex10]{amsmath} 
\usepackage{enumitem}

\usepackage{ amsthm}
\newtheoremstyle{mystyle}
  {\topsep} 
  {\topsep} 
  {\normalfont} 
  {\parindent} 
  {\itshape} 
  {:} 
  { } 
  {} 

\theoremstyle{mystyle}

\newtheorem{prop}{Proposition}
\newtheorem{defn}{Definition}
\newtheorem{thm}{Theorem}
\newtheorem{lemma}{Lemma}

\newtheorem{claim}{Claim}

\newtheorem{cor}{Corollary}
\newtheorem{remark}{Remark}
\newtheorem{example}{Example}

\usepackage{booktabs}
\usepackage{tabularx}
\usepackage{array}
\usepackage{multirow}

\usepackage{soul}

\usepackage{cite}
\usepackage{amsmath,amssymb,amsfonts}

\usepackage{dsfont}
\usepackage{algorithmic}
\usepackage{graphicx}
\usepackage{textcomp}
\usepackage{xcolor}
\usepackage{braket}
\usepackage{hyperref}
\usepackage{amsthm}
\usepackage{url}

\usepackage{amsmath}

\usepackage{graphicx}
\usepackage{subcaption}

\newcommand{\bfC}{\mathbf{C}}

\newcommand{\bfr}{\mathbf{r}}
\newcommand{\bfZ}{\mathbf{Z}}
\newcommand{\bfD}{\mathbf{D}}
\newcommand{\bfQ}{\mathbf{Q}}
\newcommand{\bfE}{\mathbf{E}}
\newcommand{\bfF}{\mathbf{F}}
\newcommand{\bfH}{\mathbf{H}}

\newcommand{\bfX}{\mathbf{X}}
\newcommand{\bfY}{\mathbf{Y}}

\newcommand{\bfA}{\mathbf{A}}
\newcommand{\bfT}{\mathbf{T}}

\newcommand{\bfg}{\mathbf{g}}

\newcommand{\bfB}{\mathbf{B}}
\newcommand{\bfI}{\mathbf{I}}

\newcommand{\bfM}{\mathbf{M}}
\newcommand{\bfR}{\mathbf{R}}
\newcommand{\bfK}{\mathbf{K}}

\newcommand{\bfv}{\mathbf{v}}
\newcommand{\bfP}{\mathbf{P}}

\newcommand{\cF}{\mathcal{F}}
\newcommand{\calD}{\mathcal{D}}

\DeclareMathOperator*{\argmin}{arg\,min}

\newcommand{\floor}[1]{\left\lfloor #1\right\rfloor}

\usepackage{algorithm}

\title{Communication-Efficient Approximate Gradient Coding} 

\author{Sifat Munim and Aditya Ramamoorthy%
\thanks{This paper was presented in part at the IEEE International Symposium on Information Theory (ISIT), Ann Arbor, MI, USA, 2025. This work was supported in part by the National Science Foundation (NSF) under grants CCF-2523473 and CCSS-2503640.\\
The authors are with the Department of Electrical and Computer Engineering, Iowa State University, Ames, IA 50011 USA (e-mail: \{smunim, adityar\}@iastate.edu).}
}

\begin{document}
\maketitle
\begin{abstract}
Large-scale distributed learning aims at minimizing a loss function $L$ that depends on a training dataset with respect to a $d$-length parameter vector. The distributed cluster typically consists of a parameter server (PS) and multiple workers. Gradient coding is a technique that makes the learning process resilient to straggling workers. It introduces redundancy within the assignment of data points to the workers and uses coding theoretic ideas so that the PS can recover $\nabla L$ exactly or approximately, even in the presence of stragglers. Communication-efficient gradient coding allows the workers to communicate vectors of length smaller than $d$ to the PS, thus reducing the communication time. While there have been schemes that address the exact recovery of $\nabla L$ within communication-efficient gradient coding, to the best of our knowledge the approximate variant has not been considered in a systematic manner. In this work we present constructions of communication-efficient approximate gradient coding schemes. Our schemes use structured matrices that arise from bipartite graphs, combinatorial designs and strongly regular graphs, along with randomization, and algebraic techniques and constraints. We derive analytical upper bounds on the approximation error of our schemes that are tight in certain cases. Moreover, we derive a corresponding worst-case lower bound on the approximation error of any scheme. For a large class of our methods, under reasonable probabilistic worker failure models, we show that, after appropriate rescaling, the expected value of the computed gradient equals the true gradient. This in turn allows us to establish convergence of the training to a stationary point. Numerical experiments corroborate our theoretical findings.
\end{abstract}
\begin{IEEEkeywords}
Distributed computing, gradient coding, straggler, communication efficiency, structured matrices. 
\end{IEEEkeywords}

   

\section{introduction}
Large-scale distributed learning is the workhorse of modern-day machine learning (ML) algorithms. The sheer size of the data and the corresponding computational needs necessitate the usage of huge clusters for the purpose of parameter fitting in most ML problems of practical interest. Such problems are essentially a guided search over a very large space of parameters; the number of parameters can even be in the billions. Examples of such problems include the training of deep neural networks \cite{GoodBengCour16} that have made huge advances in speech and image recognition, and the training of large language models (LLMs) \cite{zhao2024surveylargelanguagemodels}.


At the top level, distributed machine learning operates by partitioning the relevant dataset into data subsets. The workers are assigned a subset of the data subsets and each worker is only responsible for processing its own data subsets. A prototypical example of this scenario is the training of deep neural networks. In this case each worker is responsible for computing the gradient (with respect to the parameter at the current iteration) of the appropriately defined loss function over its assigned data subsets. These gradients are then communicated to the central parameter server (PS) that coordinates the training, i.e., it aggregates the received gradients and determines the parameter for the next iteration. The PS broadcasts the new parameter and the iterations continue thereafter.

Distributed training is a key enabler of several ML technologies. Nevertheless, distributed clusters come with attendant challenges. A major issue is that large-scale clusters, especially those deployed within cloud platforms (e.g., Amazon Web Services (AWS)) are often heterogeneous in nature. These clusters often suffer from the problem of stragglers (slow or failed workers). Depending upon how the job is distributed amongst the workers, it is possible that the overall job execution time is limited by the speed of the slowest worker; this is clearly undesirable. For instance, the work of \cite{tandon_gradient} showed that stragglers can be up to $5\times$ slower than average workers on Amazon EC2. 
\begin{figure}[t!]
        \centering
        \includegraphics[scale = 1.0]{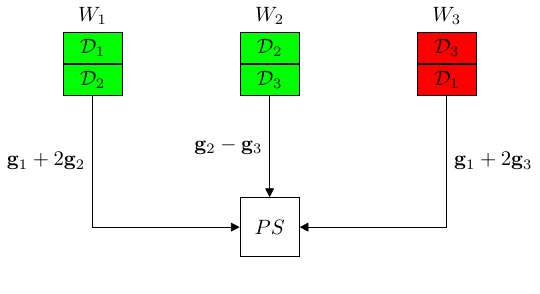}
        \caption{\label{fig:original_GC} {\small Worker $W_3$ has failed. $\mathbf{g}_i$ refers to the partial gradient on data subset $\calD_i$. Based on work completed by $W_1$ and $W_2$,  $\mathbf{g}_{1}+ \mathbf{g}_{2}+\mathbf{g}_{3}$ can be computed.}}
\end{figure}%

Distributing the dataset reduces the per-worker computational load. However, another relevant issue is the communication cost from the workers to the parameter server. In particular, the training process requires to-and-fro communication of high-dimensional vectors between the PS and the workers. When the number of parameters is very high, the communication cost of training can also be prohibitive.


Gradient coding, introduced in the work of \cite{tandon_gradient}, addresses worker slowdowns/failures by introducing redundancy within the assignment of data points to the workers (see Fig. \ref{fig:original_GC} where each data subset is replicated twice in the cluster). The workers transmit linear combinations of the gradients that are computed by them to the PS. As shown in Fig. \ref{fig:original_GC} an exact gradient coding solution allows the PS to precisely recover $\nabla L$ even in the presence of worker failures. It is straightforward to verify that if we insist on exact gradient recovery in the presence of any $s$ worker failures, then each data subset must be replicated at least $(s+1)$ times across the cluster. Thus, exact gradient recovery can be expensive. Moreover, in many parameter training scenarios, the exact gradient may not be needed for the convergence to the appropriate set of parameters, e.g., when the data distribution across the workers is i.i.d.

Accordingly, the approximate gradient coding variant \cite{dimakis_cyclic_mds} considers a setting where the gradient is recovered only approximately. The quality of the gradient is measured by its distance from the true gradient. The approximate gradient with rigorous guarantees on the gradient quality can typically be recovered even if the number of worker failures is much higher.

Communication-efficient gradient coding \cite{YeA18} considers an interesting point in the underlying design space. Specifically, it allows for trading off the redundancy in the data subset assignment for protecting against worker failures and also for communicating shorter length vectors from the workers to the PS. An example of communication-efficient gradient coding is illustrated in Fig. \ref{fig:comm_eff_GC}. However, we point out that most known constructions of communication-efficient gradient coding only consider the case of exact gradient recovery. We elaborate on these different variants of the gradient coding problem more formally in Section \ref{sec:bground}.

\begin{figure}[t!]        
        \centering
        \includegraphics[scale = 1.0]{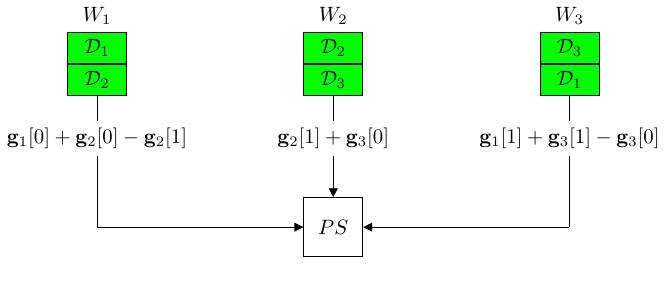}
        \caption{\label{fig:comm_eff_GC} {\small Gradient vectors are split into half \cite{YeA18}, i.e., $\mathbf{g}_{i} = [\mathbf{g}_{i}[0]^T~ \mathbf{g}_{i}[1]^T]^T$.  If all workers return their results, i.e., if there are no failures and worker $W_i$ sends $\tilde{\mathbf{g}}_i$ (labeled on the outgoing edge from $W_i$), the PS can calculate $\sum_{i=1}^3 \mathbf{g}_{i}[0] = \tilde{\mathbf{g}}_1 + \tilde{\mathbf{g}}_2$, and $\sum_{i=1}^3 \mathbf{g}_{i}[1] = \tilde{\mathbf{g}}_2 + \tilde{\mathbf{g}}_3$, i.e., the exact gradient. }}
\end{figure}

In this work, we propose approximate gradient coding techniques that are also communication-efficient. For our techniques, we provide upper bounds on the approximation error for a given communication reduction factor. Moreover, we provide a lower bound on the approximation error for any scheme with a fixed number of stragglers and communication reduction factor. Furthermore, under appropriate worker failure models, we establish a stationary-point convergence guarantee for the gradient descent algorithm. Our results are supported by numerical experiments. Table~\ref{tab:main_results} summarizes the proposed constructions and the main theoretical results.



\begin{table*}[t]
\centering
\caption{Summary of the proposed constructions and main theoretical results.}
\label{tab:main_results}
\renewcommand{\arraystretch}{1.25}
\setlength{\tabcolsep}{4pt}

\begin{tabularx}{\textwidth}{
    >{\raggedright\arraybackslash}p{3.0cm}
    >{\raggedright\arraybackslash}X
    >{\centering\arraybackslash}p{2.3cm}
    >{\centering\arraybackslash}p{2.1cm}
    >{\centering\arraybackslash}p{2.2cm}
}
\toprule
\textbf{Construction/result}
&
\textbf{Underlying structure or scope}
&
\textbf{Exact recovery when no workers straggle}
&
\textbf{Convergence guarantee}
&
\textbf{Analytical error guarantee}
\\
\midrule

\multicolumn{5}{l}{\textit{Proposed constructions}}
\\
\midrule

\multirow{3}{*}{%
\begin{tabular}[c]{@{}l@{}}
Post multiplication\\ by diagonal
matrices
\end{tabular}}
&
Balanced incomplete block design
&
No
&
Yes
&
Upper bound
\\
\cline{2-5}

&
Strongly regular graph
&
No
&
Yes, for the vertex-transitive case
&
Upper bound
\\
\cline{2-5}

&
Coset bipartite graph
&
Yes
&
Yes
&
Upper bound
\\
\midrule

Hadamard product with null-space constraints
&
Bi-regular bipartite graph or balanced incomplete block design
&
Yes
&
No
&
Upper bound
\\
\midrule

\multicolumn{5}{l}{\textit{General result}}
\\
\midrule

Worst-case lower bound
&
All communication-efficient approximate gradient coding schemes
&
---
&
---
&
Lower bound
\\
\bottomrule
\end{tabularx}
\end{table*}
The remainder of this paper is organized as follows. In Section \ref{sec:bground}, we present the relevant background on gradient coding, discuss related work, and summarize our main contributions. Section \ref{sec:prob_form} introduces the communication-efficient approximate gradient coding model and defines the approximation error metric. Section \ref{sec:structured_matrices} presents the structured assignment matrices used throughout the paper. Section \ref{sec:construction_RDM} presents our first method based on post-multiplication by diagonal matrices and derives corresponding upper bounds on the approximation error for several structured assignment matrices. Section \ref{sec:construction_RHPNSC} develops a second method based on Hadamard products with null-space constraints, which achieves exact recovery in the absence of stragglers and admits tractable error upper bounds. Section \ref{sec:lowerbound_proof} establishes a lower bound on the approximation error that applies to any communication-efficient approximate gradient coding scheme. Section \ref{sec:convergence} analyzes convergence of gradient descent under our proposed schemes and suitable straggler models. Finally, Section \ref{sec:numerical_exp} provides numerical experiments validating our theoretical bounds and comparing our constructions to baseline approaches.


\section{Background, Related Work and Summary of Contributions}
\label{sec:bground}
Let $\mathcal{D} = \{ (\mathbf{x}_i , y_i )\}_{i = 1}^N$ be a dataset where $(\mathbf{x}_i, y_i)$ are feature-label pairs. Let $L(\mathbf{w}) = \frac{1}{N} \sum_{i = 1}^N l(\mathbf{x}_i, y_i; \mathbf{w})$ be a loss function. Within learning, we wish to minimize $L$ with respect to the parameter $\mathbf{w} \in \mathbb{R}^d$ by gradient descent. Here, $l$ is a function that measures the prediction error for each point. Let $\mathbf{w}^{(t)}$ be the state of the parameter at iteration $t$. The parameter is updated as
\[ \mathbf{w}^{(t+1)} = \mathbf{w}^{(t)} - \eta^{(t)}\mathbf{g}^{(t)}, \]
where $\mathbf{g}^{(t)} := \frac{1}{N} \sum_{i = 1}^N \nabla l(\mathbf{x}_i, y_i, \mathbf{w}^{(t)})$ is the gradient of the loss function $L$ and $\eta^{(t)} \geq 0$ is the learning rate at iteration $t$.

The central goal is the computation of $\mathbf{g}^{(t)}$ in a distributed manner.
A distributed learning setup involves a PS and $n$ workers denoted $W_1, W_2, \dots, W_n$. The dataset $\mathcal{D}$ is divided into $k$ disjoint data subsets of equal size denoted by $\mathcal{D}_1, \mathcal{D}_2, \dots, \mathcal{D}_k$ which are distributed among the workers; each worker computes the gradients on its assigned data subsets. The PS receives the gradients calculated by the workers and aggregates them to find the overall gradient and hence the updated parameter.

Let $[a] \triangleq \{1,2, \dots, a\}$ for any $a \in \mathbb{N}$, and for a matrix $\bfM$, let $\text{supp}(\bfM)$ denote the set of indices of the nonzero entries of $\bfM$. 
For $i \in [n]$, let the computation load at worker $W_i$ be $\delta_i$, i.e., $W_i$ is assigned $\delta_i$ data subsets. For $j \in [k]$, the number of workers to which $\mathcal{D}_j$ is assigned is called its replication factor, denoted by $\gamma_j$. Throughout our work, we assume a regular assignment, i.e., $\delta_i = \delta$ for all $i \in [n]$, and $\gamma_j = \gamma$  for all $j \in [k]$. 



For a given scheme we define the $(n,k,\delta,\gamma)$ assignment matrix $\mathbf{A} \in \mathbb{R}^{k \times n}$ which is such that $\mathbf{A}(i,j) \ne 0$ \footnote{We use MATLAB notation at various places in this paper.} if and only if worker $W_j$ is assigned the data subset $\calD_i$. Also, note that the assignment matrix stays the same over the iterations. Moreover, our schemes for recovering $\nabla L$ will not be time dependent. Accordingly, henceforth we drop the dependence of the gradient on $t$. 

Let $\mathbf{g}_ i := \frac{1}{N} \sum_{(\mathbf{x}_j, y_j) \in \calD_i} \nabla l(\mathbf{x}_j, y_j; \mathbf{w})$ so that $\nabla L = \mathbf{g} = \sum_{i = 1}^k \mathbf{g}_i$. In the basic gradient coding protocol, worker $W_j$ calculates $\mathbf{g}_{i}$ for all $i \in \text{supp}(\mathbf{A}(:,j))$ (nonzero entries in the $j$-th column of $\mathbf{A}$) and linearly combines them to obtain 
 \begin{align}
     \tilde{\mathbf{g}}_j &= \sum_{i=1}^k \mathbf{A}(i,j) \mathbf{g}_{i}.\label{eq:linearly_comb_grad}
 \end{align}
It subsequently transmits $\tilde{\mathbf{g}}_j$ to the PS. 
The PS wants to decode $\nabla L$, so it picks a decoding vector $\mathbf{r}$ which is such that $\mathbf{r}(j) =0$ if worker $j$ is straggling (i.e., if $\tilde{\mathbf{g}}_j$ is unavailable) and calculates 
\begin{align}
\sum_{j=1}^n \mathbf{r}(j) \tilde{\mathbf{g}}_j &= \sum_{i=1}^k  \left( \sum_{j=1}^n   \mathbf{A}(i,j) \mathbf{r}(j) \right) {\mathbf{g}}_{i}, \label{eq:grad_cod_eq}
\end{align}
where we emphasize that the decoding vector $\mathbf{r}$ depends on the set of non-straggling workers. We now discuss three main threads of work within the gradient coding area.

\noindent {\bf Exact gradient coding:} 
Let $\mathbf{1}_a$ and $\mathbf{0}_a$ denote the all-ones and all-zeros column vector of size $a$ respectively. In exact gradient coding, we want to ensure that $\mathbf{A}\mathbf{r} = \mathbf{1}_k$ under any choice of at most $s$ stragglers, so that the PS can obtain $\mathbf{g}$. For exact gradient coding, each $\calD_i$ needs to be replicated at least $s+1$ times across the cluster.
The exact setting was the one that was discussed in the original work \cite{tandon_gradient} and some of the initial works in the area \cite{dimakis_cyclic_mds,reedsolomon_GC18,10.5555/3327345.3327412,tieredGC20,dynamicClusteringGC23,numerically_stableGC20}. 

\noindent {\bf Approximate gradient coding:} There are other scenarios in which the full gradient is not required, e.g., i.i.d. data distribution or too costly to work with because of the high replication factor needed. Indeed, many ML algorithms work with mini-batch SGD \cite{bottou_optimization} where the gradients are only calculated on a random subset of the data points. 
Thus, the fundamental problem within approximate gradient coding \cite{dimakis_cyclic_mds, charles2017approximate,wang2019fundamental} is to design the assignment matrix $\mathbf{A}$ such that $|| \mathbf{A} \mathbf{r}- \mathbf{1}_k||_2$ ($\ell_2$-norm) can be bounded as a function of the number of straggling workers by an appropriate choice of $\mathbf{r}$. Furthermore, one needs to understand if and under what conditions the iterative algorithm converges to a stationary point of the loss function $L(\cdot)$ in this setting. 




\noindent {\bf Communication-efficient gradient coding:} 
Within distributed training, a significant time cost is associated with the transmission of the computed gradients (vectors of length-$d$) by the workers to the PS, e.g., deep learning usually operates in the highly over-parameterized regime \cite{Belkin_2019} (many more parameters than data points). In real-world experiments, the work of \cite{alistarh2017qsgd} demonstrates that with increasing number of workers, the proportion of time spent on communication actually increases within distributed deep learning. 

For a specified assignment of data subsets to workers, if the number of failures is small, the extra redundancy can be judiciously used to transmit shorter vectors from the workers to the PS; this is known as communication-efficient gradient coding. 
For exact gradient coding, in the regular assignment setting if the replication factor is $s + m$, then the dimension of the transmitted vectors from the workers to the PS can be lowered to $d/ m$ \cite{YeA18} (see also \cite{tayyebehM21}) while still protecting against $s$ failures, i.e., one can trade off communication for computation. In this case, we call $m$ the communication reduction factor. Thus, if we know in advance that fewer failures are expected, the system can save communication time by operating in a communication-efficient mode.

\vspace{-0.1in}
\subsection{Discussion of Related Work} 
Ideas from coding theory have been the topic of intense investigation within the broad area of large-scale distributed computing within the past several years (see, e.g.,  \cite{yu2020straggler,saurav_g_analytics,DBLP:journals/corr/abs-1806-00939, LiMA16, dutta2016short, dutta_et_al20,tandon_gradient,dimakis_cyclic_mds}). 

Exact gradient coding was introduced in the work of \cite{tandon_gradient}, which showed that for regular assignments, the replication factor $\gamma \geq s+1$ and demonstrated constructions that met this bound. These include a scheme based on fractional repetition \cite{el2010fractional,olmezR16} that requires the number of workers $n$ to be a multiple of $(s+1)$, and a more general cyclic assignment-based scheme. Several variants of the exact gradient coding problem have been examined \cite{10.5555/3327345.3327412,tieredGC20,dynamicClusteringGC23,numerically_stableGC20,byzantine_GC23}. The work of \cite{dimakis_cyclic_mds} demonstrated the intimate link of exact gradient coding with coding theory (see also \cite{reedsolomon_GC18}) and introduced the formulation of approximate gradient coding. We note that the performance of the exact gradient coding schemes proposed in \cite{tandon_gradient} deteriorates in the approximate setting. For approximate gradient coding, \cite{dimakis_cyclic_mds} showed connections with the spectral properties of graphs and constructed schemes from expander graphs.  The work of \cite{charles2017approximate} (see also \cite{charles2018gradient}) observed that Ramanujan graphs (expanders with the largest spectral gap) only exist in restrictive settings and considered the usage of sparse random graphs instead. The work of \cite{glasgowW21} also presented graph based schemes in this setting. Connections of approximate gradient coding with block designs were considered in \cite{kadheKR19} and subsequently in \cite{soft_BIBD_GC22}. Using convex optimization-based techniques, \cite{11195507} derived an improved upper bound on the approximation error compared with \cite{dimakis_cyclic_mds}. Fundamental limits of approximate gradient coding were examined in \cite{wang2019fundamental}. Furthermore, several variants of the original gradient coding algorithm have been examined (see \cite{BitarSGC19, treeGC19, partial_recoveryGC23,10.1109/TIT.2024.3420222, krishnan2023sequential} among others).

Communication-efficient gradient coding was introduced in \cite{YeA18} (in the exact setting). The work of \cite{tayyebehM21} considered arbitrary assignment matrices and provided converse results. Crucially, both these works use polynomial interpolation in their solution. This leads to significant numerical instability \cite{Pan16} (see discussion in \cite{ramamoorthyMG24}) to the extent that their solution is essentially unusable for systems with twenty or more workers. This issue was considered in part in the exact setting by \cite{kadheKR20}; however, their solution works only for a restrictive assignment setting, i.e., it requires certain divisibility conditions on the problem parameters to work (similar to the fractional repetition approach of \cite{tandon_gradient}). While the work of \cite{tayyebehM21} does discuss extensions of their Lagrange Interpolation based idea to the approximate case, there are no numerical results reported in their work. 
Broadly speaking, the design of ``communication-efficient approximate gradient coding schemes'' is an open problem.

The work of \cite{ramamoorthyMG24} presented a different class of gradient coding protocols that leverage a small amount of feedback between the workers and the parameter server. This allows for more effective usage of slow (as against failed) workers while remaining numerically stable.

\subsection{Main contributions}
In this work, we present the first systematic approaches for the construction of communication-efficient approximate gradient coding schemes.

Our first method uses structured matrices that arise from (i) incidence matrices of combinatorial designs, (ii) adjacency matrices of strongly regular graphs, and (iii) bi-adjacency matrices of bipartite graphs. For the bipartite graph setting, we consider an algebraic construction that we refer to as coset bipartite graphs. These matrices are used to define the underlying assignment matrix. The first construction is obtained by vertically stacking $m$ copies of the assignment matrix after post-multiplication by diagonal matrices with random entries. For coset bipartite graphs, we also consider a deterministic construction based on Hadamard matrices. Our second method uses the Hadamard product with appropriate vectors that additionally satisfy certain null-space conditions.
Both methods allow the workers to transmit gradients of length $d/m$. The PS solves a least-squares problem to decode the approximate gradient from the non-straggling workers.

It is well recognized in the literature \cite{charles2017approximate} that analyzing approximate gradient coding schemes is challenging as one needs to analyze the corresponding least-squares solution over all possible straggler sets. For our constructions, we provide analytical upper bounds on the approximation error of our schemes that are tight in certain cases. In addition, for several of our constructions, we demonstrate that the condition numbers of the recovery matrices are low; this implies numerically stable decoding of the gradient. Furthermore, we derive a corresponding worst-case lower bound on the approximation error of any scheme. We also demonstrate that under reasonable worker failure models, after appropriate rescaling, the expected value of the computed gradient equals the true gradient. Together with a bounded-variance result, this allows us to show that the expected squared gradient norm at a randomly selected iterate converges to zero. We also present numerical experiments that corroborate our theoretical findings.


\section{Problem Formulation}
\label{sec:prob_form}
 
As we consider the communication-efficient variant in this work, we express it in terms of the formalism developed thus far.  We consider a scenario where the assignment of subsets to workers is fixed. Depending on the operating conditions, the cluster can decide to operate in a communication-efficient mode with communication reduction factor $m$. Thus, the communication-efficient schemes we consider in this work operate with a given assignment of subsets to workers. In Table~\ref{tab:notation}, we summarize the notations used in this paper. 

\begin{table}[!h]
\centering
\caption{Summary of the main notation.}
\label{tab:notation}
\renewcommand{\arraystretch}{1.12}
\begin{tabular}{@{}c p{0.73\columnwidth}@{}}
\hline
\textbf{Symbol} & \textbf{Description} \\
\hline
$n$ & Number of workers. \\

$k$ & Number of data subsets. \\

$s$ & Number of straggling workers. \\

$m$ & Communication reduction factor; each worker transmits a vector of length $d/m$. \\

$\gamma$ & Replication factor of each data subset. \\

$\delta$ & Computation load at each worker. \\

$\mathcal{F}$ & Index set of non-straggling workers, where $|\mathcal{F}|=n-s$. \\

$d$ & Dimension of the gradient vector. \\

$\mathbf{A}\in\mathbb{R}^{k\times n}$ &
Assignment matrix specifying the data subsets assigned to each worker. \\

$\mathbf{B}\in\mathbb{R}^{mk\times n}$ &
Encoding matrix used to generate the worker transmissions. \\

$\mathbf{D}_i\in\mathbb{R}^{n\times n}$ &
Random diagonal matrix used in the construction in~\eqref{eq:effAD1AD2}. \\

$\mathbf{g}_i\in\mathbb{R}^{d}$ &
Gradient corresponding to data subset $\mathcal{D}_i$. \\
$\mathbf{Z}\in\mathbb{R}^{\frac{d}{m}\times mk}$ &
Gradient block matrix formed from $\mathbf{g}_1,\ldots,\mathbf{g}_k$. \\

$\mathbf{R}\in\mathbb{R}^{n\times m}$ &
Decoding matrix used by the PS; its nonzero rows correspond to workers in $\mathcal{F}$. \\

$\mathbf{F}\in\mathbb{R}^{mk\times m}$ &
Target matrix corresponding to the $m$ blocks of the exact gradient. \\
\hline
\end{tabular}
\end{table}

Let $\bfA \in \mathbb{R}^{k \times n}$ be a $(n,k,\delta, \gamma)$ assignment matrix. For designing a scheme with communication reduction factor $m$, we define the encoding matrix denoted by $\mathbf{B} \in \mathbb{R}^{mk \times n}$ as follows.
\begin{align*}
\mathbf{B}^T = [\bfA_1^T ~|~ \dots ~|~ \bfA_m^T],
\end{align*}
where $\text{supp}(\bfA_i) \subseteq  \text{supp}(\bfA)$ for $i \in [m]$. This constraint ensures that the communication-efficient scheme operates with the same assignment of subsets to workers.

For $i \in [k]$, let $\bfg_i$ be partitioned into $m$ equal sized blocks (with zero-padding if required) so that $\bfg_i^T = [\bfg_i[1]^T ~|~ \dots ~|~\bfg_i[m]^T]$. Also, for $u \in [m]$, let $\mathbf{G}[u] \in \mathbb{R}^{\frac{d}{m} \times k}$ be such that $\mathbf{G}[u] := [\mathbf{g}_1[u] ~|~\mathbf{g}_2[u] ~|~ \dots ~|~\mathbf{g}_k[u]].$
Now define a matrix $\mathbf{Z} \in \mathbb{R}^{\frac{d}{m} \times mk}$ as
\[\mathbf{Z} := [ \mathbf{G}[1] ~|~ \mathbf{G}[2] ~|~\dots~|~ \mathbf{G}[m] ].\]


Worker $W_i$ then sends $\mathbf{Z}\mathbf{B}(:,i) \in \mathbb{R}^{\frac{d}{m}}$, once it has finished processing all its assigned subsets. Otherwise, it does not send anything. For $i \in [m]$, let $\mathbf{f}_i^T := [\underbrace{\mathbf{0}_k^T ~|~ \dots ~|~ \mathbf{0}_k^T}_{(i-1) \text{blocks}} ~|~ \mathbf{1}_k^T ~|~\underbrace{\mathbf{0}_k^T~|~ \dots ~|~ \mathbf{0}_k^T}_{(m-i) \text{blocks}}]$ of length $mk$. Next, let 
$\bfF := [\mathbf{f}_1 ~|~ \dots ~|~ \mathbf{f}_m]$. It can be observed that the exact gradient can be obtained from $\mathbf{ZF}$, since
\begin{align*}
\mathbf{ZF} &= [ \sum_{i = 1}^k \mathbf{g}_i[1] ~|~ \sum_{i = 1}^k \mathbf{g}_i[2] ~|~ \dots ~|~ \sum_{i = 1}^k \mathbf{g}_i[m]].
 \end{align*}

Suppose that there are $s$ stragglers, and let $\mathcal{F} \subseteq [n]$ be a set such that $i \in \mathcal{F}$ if and only if $W_i$ is not a straggler. Let $\mathbf{R} \in \mathbb{R}^{n \times m}$ be the decoding matrix such that $\mathbf{R}(:,i) = \mathbf{r}_i$ and $\text{supp}(\mathbf{r}_i) \subseteq \mathcal{F}$ for each $i \in [m]$. The PS then computes  $\mathbf{ZBR}$ and it follows that if $\mathbf{BR} = \mathbf{F}$, the PS computes the gradient exactly. For a matrix $\bfM$, let $\|\bfM\|_F$ and $\|\bfM\|_2$ be the Frobenius norm and the spectral norm of $\bfM$, respectively. 
It is not too hard to see that
\begin{align*}
\|\mathbf{ZBR} - \mathbf{ZF}\|_F^2 
& \leq \|\mathbf{Z}\|_2^2 \|\mathbf{BR-F}\|_F^2.
\end{align*}


Note that $\|\mathbf{Z}\|_2^2$ depends on the actual gradients and we do not have any control over them.
Thus, for a given set of non-stragglers corresponding to $\mathcal{F}$, we seek to minimize $\|\mathbf{BR-F}\|^2_F$. This motivates the following definition. 

\begin{defn}
For a given set of non-stragglers corresponding to $\cF \subseteq [n]$ of size $n-s$, and a given encoding matrix $\bfB$, the approximation error $\text{Err}_{\cF}(\bfB)$ is defined as
\begin{equation} \label{eq:ApproxError}
\text{Err}_{\cF}(\bfB) := \min\limits_{\substack{\mathbf{R}\in \mathbb{R}^{n \times m} \\ \text{supp}(\mathbf{R}(:,i)) \subseteq \mathcal{F},  \forall i \in [m]}}  {\|\mathbf{B} \mathbf{R}- \mathbf{F}\|_F^2}.
\end{equation}
\end{defn}

For a matrix $\mathbf{M} \in \mathbb{R}^{a \times b}$, let $\mathcal{H} \subseteq [a]$ and $\mathcal{K} \subseteq [b]$ be sets corresponding to the rows and columns of $\mathbf{M}$ respectively. We denote the submatrix of $\mathbf{M}$ with rows and columns corresponding to $\mathcal{H}$ and $\mathcal{K}$ by $\mathbf{M}(\mathcal{H}, \mathcal{K})$. Also, denote $\mathbf{M}_{\mathcal{H}} := \mathbf{M}(:, \mathcal{H})$. Now, for a given set of non-stragglers corresponding to $\mathcal{F}$, if $\mathbf{B}_{\mathcal{F}}^T\mathbf{B}_{\mathcal{F}}$ is invertible, then the RHS of \eqref{eq:ApproxError} can be expressed as
\begin{align}
&\sum_{i =1}^m \min\limits_{\substack{\mathbf{r}_i\in \mathbb{R}^{n} \\ \text{supp}(\mathbf{r}_i) \subseteq \mathcal{F} }}   \|\bfB\mathbf{r}_i - \mathbf{f}_i\|_2^2 
 = \sum_{i =1}^m \min\limits_{\substack{\mathbf{r}_i\in \mathbb{R}^{n-s} }}   \|\bfB_{\mathcal{F}}\mathbf{r}_i - \mathbf{f}_i\|_2^2 \nonumber\\
& = \sum_{i =1}^m\mathbf{f}_i^T \mathbf{f}_i - \mathbf{f}_i^T \bfB_{\cF}(\bfB_{\cF}^T\bfB_{\cF})^{-1}\bfB_{\cF}^T\mathbf{f}_i.\label{eq:opt_dec_error}
\end{align}
Here, the first equality follows from observation, and the second follows from an analysis of the least-squares error \cite{ekf}. The expression in \eqref{eq:opt_dec_error} characterizes the optimal decoding error. In practice, the PS can compute the decoding matrix by solving the corresponding least-squares problem using a standard QR or SVD based solver, without explicitly forming 
$(\mathbf{B}_{\mathcal{F}}^{T}\mathbf{B}_{\mathcal{F}})^{-1}$. 
Using a QR based solver, the decoding complexity is at most 
$\mathcal{O}\bigl(mk(n-s)^2\bigr)$ \cite{golub2013matrix}. 
The numerical stability of this decoding step depends on the condition number of 
$\mathbf{B}_{\mathcal{F}}^{T}\mathbf{B}_{\mathcal{F}}$. For a positive definite matrix $\bfM$, its condition number is $\kappa(\bfM) = \frac{\lambda_{max}(\bfM)} {\lambda_{min}(\bfM)}$ \cite{horn2012matrix}. Condition number bounds on the system of equations encountered in the decoding process for several of the proposed schemes are provided in the forthcoming sections.


The following example illustrates the encoding, worker transmissions, and least-squares recovery for a small communication-efficient scheme.

\begin{example}
Consider $k=2$ data subsets and $n=4$ workers with
\[
\mathbf{A}=
\begin{bmatrix}
1&0&1&0\\
0&1&0&1
\end{bmatrix},
\qquad m=2.
\]
Thus, $W_1,W_3$ are assigned $\mathcal{D}_1$, while $W_2,W_4$ are
assigned $\mathcal{D}_2$. Partition
$\mathbf{g}_i^T=[\mathbf{g}_i[1]^T\mid\mathbf{g}_i[2]^T]$, $i\in[2]$,
so that
\[
\mathbf{Z}
=
[\mathbf{g}_1[1]\mid\mathbf{g}_2[1]\mid
 \mathbf{g}_1[2]\mid\mathbf{g}_2[2]].
\]
Choose
\[
\mathbf{A}_1=\mathbf{A},
\qquad
\mathbf{A}_2=
\begin{bmatrix}
1&0&-1&0\\
0&1&0&-1
\end{bmatrix}.
\]
Since $\operatorname{supp}(\mathbf{A}_1)
=\operatorname{supp}(\mathbf{A}_2)
=\operatorname{supp}(\mathbf{A})$, the encoding matrix is
\[
\mathbf{B}
=
\begin{bmatrix}
\mathbf{A}_1\\
\mathbf{A}_2
\end{bmatrix}
=
\begin{bmatrix}
1&0&1&0\\
0&1&0&1\\
1&0&-1&0\\
0&1&0&-1
\end{bmatrix}.
\]
The workers transmit, respectively,
\[
\begin{aligned}
W_1&:\mathbf{g}_1[1]+\mathbf{g}_1[2],&
W_2&:\mathbf{g}_2[1]+\mathbf{g}_2[2],\\
W_3&:\mathbf{g}_1[1]-\mathbf{g}_1[2],&
W_4&:\mathbf{g}_2[1]-\mathbf{g}_2[2],
\end{aligned}
\]
where each transmitted vector has length $d/2$. Suppose that $W_4$ is a straggler, so
$\mathcal{F}=\{1,2,3\}$. Then,
\[
\mathbf{F}=
\begin{bmatrix}
1&0\\
1&0\\
0&1\\
0&1
\end{bmatrix},
\qquad
\mathbf{R}^{\star}
=
\argmin_{\mathbf{R}\in\mathbb{R}^{3\times2}}
\|\mathbf{B}_{\mathcal{F}}\mathbf{R}-\mathbf{F}\|_F^2
=
\frac{1}{2}
\begin{bmatrix}
1&1\\
1&1\\
1&-1
\end{bmatrix}.
\]
Consequently, the recovered blocks are
\[
\begin{aligned}
\widehat{\mathbf{g}}[1]
&=\mathbf{g}_1[1]
+\frac{1}{2}\bigl(\mathbf{g}_2[1]+\mathbf{g}_2[2]\bigr),\\
\widehat{\mathbf{g}}[2]
&=\mathbf{g}_1[2]
+\frac{1}{2}\bigl(\mathbf{g}_2[1]+\mathbf{g}_2[2]\bigr).
\end{aligned}
\]
The PS concatenates these blocks to obtain the approximate gradient.
\end{example}

\begin{remark}
\label{remark:opt_vs_fixed}
    As pointed out in \cite{charles2017approximate}, the analysis of the approximate gradient coding error is challenging, since the error expression involves the inverse of $\bfB_{\cF}^T\bfB_{\cF}$. This quantity needs to be bounded over all possible $\cF \subset [n], |\cF| = n-s$. For the case of $m=1$, prior work \cite{dimakis_cyclic_mds} has aimed at upper bounding this error by using ``fixed-decoding'' approaches, where the decoding vector is determined in advance, rather than by solving a least-squares problem. This typically results in loose upper bounds. The work of \cite{glasgowW21} proposed schemes that use optimal decoding (again for $m=1$). In this work, our goal is not only to design schemes for $m > 1$ that have low approximation error, but also to provide techniques to analytically or numerically upper bound the least-squares error ({\it cf.} \eqref{eq:opt_dec_error}). 
\end{remark}

\section{Structured Assignment Matrices}

\label{sec:structured_matrices}

To facilitate the analysis of the approximate decoding error, we will focus our attention on certain classes of assignment matrices. In particular, we will need to use structured matrices that arise from different areas such as graph theory and combinatorial design theory (see \cite{godsil-royle-algebraic},\cite{Brouwer_SRG}, \cite{StinsonBook}). In this section, we briefly review the relevant notions. In what follows, let $\mathbf{I}_a$ and $\mathbf{J}_a$ be the $a \times a$ identity matrix and all ones matrix, respectively. Also, let $\mathbf{0}_{a \times a}$ be the $a \times a$ all zeros matrix.

\begin{defn} {[{\it Balanced Incomplete Block Design (BIBD) and its Incidence Matrix}]} Let $(X, \mathcal{A})$ be a pair where $X$ is a set of elements called points and $\mathcal{A}$ is a collection of nonempty subsets of $X$ called blocks. A $(n,k, \gamma, \delta, \lambda)$- balanced incomplete block design (BIBD) is a pair $(X, \mathcal{A})$ that satisfies the following properties: {\it (i)} $X$ has size $n$, $\mathcal{A}$ has size $k$, each block in $\mathcal{A}$ has $\gamma$ points, and every point in $X$ is contained in exactly $\delta$ blocks. {\it (ii)} Any pair of distinct points is contained in exactly $\lambda$ blocks. The incidence matrix of a BIBD is a $n \times k$ binary matrix $\mathbf{E}$ such that $\mathbf{E}(i,j) = 1$ if $x_i \in A_j$ and $\mathbf{E}(i,j) = 0$ otherwise. 
     
\end{defn}

Let $\bfE$ be the incidence matrix of a $(n,k,\gamma, \delta, \lambda)$ BIBD and $\bfM = \bfE^T$. Then,  $\mathbf{M} \mathbf{1}_n = \gamma \mathbf{1}_k$ and $\mathbf{M}^T \mathbf{1}_k = \delta \mathbf{1}_n$. In addition, the inner product, $\langle \mathbf{M}(:,i), \mathbf{M}(:,j)\rangle = \lambda$, for $i \ne j$ and $i, j \in [n]$ (see  \cite{StinsonBook}). Consequently, $\bfM^T\bfM = (\delta - \lambda)\bfI_n + \lambda \mathbf{J}_n$ and, 
\begin{equation} \label{eq: bibd_prop}
\bfM_{\cF}^T\bfM_{\cF} = (\delta - \lambda)\mathbf{I}_{n-s} + \lambda \mathbf{J}_{n-s},
\end{equation}
since $|\mathcal{F}| = n-s$.

\begin{defn} {[{\it Strongly Regular Graph (SRG)}]} Let $G$ be a graph that is neither complete nor empty. Then, $G$ is a strongly regular graph (SRG) with parameters $(n, \delta, \lambda, \mu)$ if it has $n$ vertices where each vertex has degree $\delta$, any pair of adjacent vertices has $\lambda$ common neighbors, and any pair of 
nonadjacent vertices has $\mu$ common neighbors. 
\end{defn}

If $\bfM$ is the adjacency matrix of a $(n, \delta, \lambda, \mu)$ SRG, then
$\mathbf{M} = \mathbf{M}^T$ and $\mathbf{M} \mathbf{1}_n = \delta \mathbf{1}_n$  and furthermore, $\mathbf{M}^2 = \delta \mathbf{I}_n + \lambda \mathbf{M} + \mu (\mathbf{J}_n - \mathbf{I}_n - \mathbf{M})$ (see \cite{Brouwer_SRG}). Consequently, 
\begin{equation}\label{eq: srg_prop}
\mathbf{M}_{\cF}^T\mathbf{M}_{\cF}  = \delta \mathbf{I}_{n-s} + \lambda \mathbf{M}(\cF, \cF) + \mu (\mathbf{J}_{n-s} - \mathbf{I}_{n-s} - \mathbf{M}(\cF, \cF)).
\end{equation}
\begin{defn} {[{\it Bi-regular Bipartite Graph and Bi-adjacency Matrix}]}
Let $G = (L \cup R, E)$ be a graph with $L = [k]$, $R = [n]$, $L \cap R = \emptyset$, where each vertex in $L$ has degree $\gamma$ and each vertex in $R$ has degree $\delta$ such that the edges only exist between vertices in $L$ and $R$. Then, $G$ is a $(n,k,\delta, \gamma)$ bi-regular bipartite graph. The bi-adjacency matrix of a bi-regular bipartite graph is a matrix $\mathbf{M}$ such that for $i \in [k]$, $j \in [n]$, $\mathbf{M}(i,j) = 1$ if $i$ and $j$ are adjacent and $\mathbf{M}(i,j) = 0$ otherwise.   
\end{defn}

If $\bfM$ is the bi-adjacency matrix of a $(n,k,\delta, \gamma)$ bi-regular bipartite graph, then $\mathbf{M} \mathbf{1}_n = \gamma \mathbf{1}_k$ and $\mathbf{M}^T \mathbf{1}_k = \delta \mathbf{1}_n$. Next, we discuss the construction of a special class of bi-regular bipartite graphs that we refer to as a coset bipartite graph.  

\subsection{Coset Bipartite Graph}
Let $k,m,\delta$ be positive integers. A $(k, m, \delta)$ coset bipartite graph is constructed as follows. Set $n = mk$. 
We construct a bi-regular bipartite graph $G = (L \cup R, E)$ with $|L| = k$ and $|R| = n = mk$.\\
\textit{Vertex Sets:}
Let $H$ be an order $m$ subgroup of $\mathbb{Z}_{mk}$ (group operation is addition modulo $mk$) such that $H = \{0, k, 2k, \dots, (m-1)k\}$. For $i \in \mathbb{Z}_{mk}$, the coset $i+H$ of $H$ is defined as $i + H = \{ (i + a) \bmod mk \ | \ a \in H\}$. Note that there are $k$ distinct cosets of $H$.
We define the vertex sets corresponding to the distinct cosets of $H$ as $L = \{\, i+H : i=0,1,\dots,k-1 \,\}$ and  
$R = \mathbb{Z}_{mk}$. So, $|L| = k$ and $|R| = mk$.\\
\textit{Edges:} Let $B$ be a subset of the set $\{0,1, \dots, k-1\}$ of size $\delta$. Let $S = \bigcup_{b \in B} (b+H)$. For $i \in \{0,1,\dots,k-1\}$, we place an edge between $i+H \in L$ and $x \in R$ if and only if
$x \in i+S$. We refer to the set $B$ as the generating set.\\
\textit{Bi-regularity:} Each right vertex $x$ lies in exactly one $H$-coset, and since $|B|=\delta$, it is adjacent
to exactly $\delta$ left vertices. Hence, each right vertex has degree $\delta$.
Because $S$ consists of $\delta$ full cosets of $H$, each left vertex has degree
$
\delta \cdot |H|=\delta \cdot m = m\delta.
$
Thus $G$ is bi-regular where each vertex in $L$ has degree $m\delta$ and each vertex in $R$ has degree $\delta$.

If $\bfM$ is the bi-adjacency matrix of a $(k,m,\delta)$ coset bipartite graph, then with $n=mk$, $\mathbf{M} \mathbf{1}_n = m \delta \mathbf{1}_k$ and $\mathbf{M}^T \mathbf{1}_k = \delta \mathbf{1}_n$.

\subsection{Existence and Parameter Feasibility}

The structured assignment matrices discussed above are not available for
arbitrary system parameters. For a $(n,k,\gamma,\delta,\lambda)$ BIBD,
the necessary parameter relations are $n\delta=k\gamma$ and
$\delta(\gamma-1)=\lambda(n-1)$. These conditions are not
sufficient. 

Similarly, the parameters of a $(n,\delta,\lambda,\mu)$ SRG must satisfy
$(n-\delta-1)\mu=\delta(\delta-\lambda-1)$, and the multiplicities of the
two eigenvalues other than $\delta$ must be nonnegative integers. Consequently, SRGs exist only for specific parameter tuples.

In contrast, a $(k,m,\delta)$ coset bipartite graph can be constructed for
any positive integers $k$ and $m$ and any $1\leq\delta\leq k$. However, its
system parameters necessarily satisfy $n=mk$ and $\gamma=m\delta$. In the sequel we need additional assumptions: $k=p^a$, where $p$ is prime and
$a\in\mathbb{Z}_{\geq 1}$, and that $p\nmid\delta$. They are used later to establish
the almost-sure invertibility of the corresponding encoding matrix. 

In summary, the
choice of the structured assignment matrix is constrained by the desired
system parameters, and all three structures need not be simultaneously
available under a common parameter setting.

\section{Construction based on Post Multiplication by diagonal matrices}
\label{sec:construction_RDM}



Let $\mathbf{A}$ be a $(n,k, \delta, \gamma)$ assignment matrix.  Let $\epsilon \in [0,1)$ and define the intervals $I_1 := [ 1-\epsilon  ,  1+\epsilon ]$, and 
$I_2 := [ -1-\epsilon , -1+\epsilon ]$. Let $I := I_1 \cup I_2$. 
For $i \in [m]$, let $\mathbf{D}_i \in \mathbb{R}^{n \times n}$ be diagonal matrices where the diagonal entries are distributed i.i.d. uniformly over $I$. We construct the encoding matrix $\mathbf{B}$ such that
\begin{equation}\label{eq:effAD1AD2}
\mathbf{B}^T = [ \bfD_1 \bfA^T ~|~ \bfD_2 \bfA^T ~|~ \dots ~|~ \bfD_m \bfA^T].
\end{equation}
For this construction, we upper-bound the expected approximation error for a given non-straggler set $\mathcal{F} \subseteq [n]$; throughout this section the expectation is taken over the randomness in the diagonal matrices $\mathbf{D}_i$ for $i \in [m]$.


\begin{example}
Consider the $(3,3,2,2,1)$ BIBD whose incidence matrix is
\[
\mathbf{M}
=
\begin{bmatrix}
1 & 1 & 0\\
1 & 0 & 1\\
0 & 1 & 1
\end{bmatrix}.
\]
We choose the assignment matrix $\bfA = \bfM^T$. Let $m=2$ and $\epsilon=0$. Consider the realization
\[
\mathbf{D}_1=\mathbf{I}_3,
\qquad
\mathbf{D}_2= \begin{bmatrix}
1 & 0 & 0\\
0 & -1 & 0\\
0 & 0 & 1
\end{bmatrix}.
\]
Using the construction in~\eqref{eq:effAD1AD2}, we obtain 

\[
\mathbf{B}^T
=
\begin{bmatrix}
(\mathbf{A}\mathbf{D}_1)^T & (\mathbf{A}\mathbf{D}_2)^T
\end{bmatrix}
=
\begin{bmatrix}
1 & 1 & 0 & 1 & 1 & 0\\
1 & 0 & 1 & -1 & 0 & -1\\
0 & 1 & 1 & 0 & 1 & 1
\end{bmatrix}.
\]

Since the diagonal entries of $\mathbf{D}_1$ and $\mathbf{D}_2$ are nonzero,
\[
\operatorname{supp}(\mathbf{A}\mathbf{D}_1)
=
\operatorname{supp}(\mathbf{A}\mathbf{D}_2)
=
\operatorname{supp}(\mathbf{A}),
\]
and hence the encoding matrix preserves the original data-subset assignment.
\end{example}


\begin{thm}  
\label{thm:random_diag}
For the construction as described above, let $\mathcal{F} \subseteq [n]$ be a set of size $n-s$ corresponding to non-stragglers and suppose that $\mathbf{B}_{\mathcal{F}}^T\mathbf{B}_{\mathcal{F}}  \in \mathbb{R}^{(n-s) \times (n-s)}$ is invertible. Furthermore, let $c := \mathbb{E}[X^2]\mathbb{E}[\frac{1}{X^2}]$, where $X$ is a random variable that is distributed uniformly  over $I$. Then, 
\begin{equation}
\label{eq:AD1AD2err}
 \mathbb{E}[\text{Err}_{\mathcal{F}}(\mathbf{B})] \le mk - m\mathbf{1}_k^T \bfA_{\cF} \left(\mathbf{K}\right)^{-1} \bfA_{\cF}^T\mathbf{1}_k, 
\end{equation} 
where $\mathbf{K}: = \mathbf{\Delta}_{\mathcal{F}} + c(m-1)\mathbf{\Delta}^{(\text{diag})}_{\mathcal{F}}$, $\mathbf{\Delta}_{\mathcal{F}} := \mathbf{A}_{\mathcal{F}}^T\mathbf{A}_{\mathcal{F}} \in \mathbb{R}^{(n-s) \times (n-s)}$ and  $\mathbf{\Delta}^{(\text{diag})}_{\mathcal{F}}$ is a diagonal matrix such that $\mathbf{\Delta}^{(\text{diag})}_{\mathcal{F}}(i,i) := \mathbf{\Delta}_{\mathcal{F}}(i,i)$ 
for $i \in [n-s]$. 
\end{thm}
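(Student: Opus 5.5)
We use the fact that \eqref{eq:ApproxError} is a minimum, so any particular choice of $\mathbf{R}$ gives an upper bound. We then pick a structured, suboptimal decoder whose error can be averaged over the $\mathbf{D}_i$. Since $\bfB_{\cF}^T\bfB_{\cF}$ is invertible, \eqref{eq:opt_dec_error} gives
\[
\text{Err}_{\cF}(\bfB)=\sum_{i=1}^m \Big( k - \mathbf{f}_i^T\bfB_{\cF}(\bfB_{\cF}^T\bfB_{\cF})^{-1}\bfB_{\cF}^T\mathbf{f}_i\Big).
\]
Write $\bfD_{j,\cF}$ for the $(n-s)\times(n-s)$ diagonal block of $\bfD_j$ indexed by $\cF$. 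Then $\bfB_{\cF}^T\bfB_{\cF}=\sum_{j=1}^m \bfD_{j,\cF}\mathbf{\Delta}_{\cF}\bfD_{j,\cF}$ and $\bfB_{\cF}^T\mathbf{f}_i=\bfD_{i,\cF}\bfA_{\cF}^T\mathbf{1}_k$.

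For a fixed $i$, the $i$-th term equals $\min_{\mathbf{r}}\|\bfB_{\cF}\mathbf{r}-\mathbf{f}_i\|_2^2$. We plug in the candidate $\mathbf{r}=\bfD_{i,\cF}^{-1}\mathbf{y}$, where $\mathbf{y}\in\mathbb{R}^{n-s}$ is deterministic and chosen later. The $u$-th block of $\bfB_{\cF}\mathbf{r}$ is then $\bfA_{\cF}\bfD_{u,\cF}\bfD_{i,\cF}^{-1}\mathbf{y}$. For $u=i$ this is $\bfA_{\cF}\mathbf{y}$, and the block must match $\mathbf{1}_k$. For $u\ne i$ the block must match $\mathbf{0}_k$. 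Hence
\[
\|\bfB_{\cF}\mathbf{r}-\mathbf{f}_i\|^2=\|\bfA_{\cF}\mathbf{y}-\mathbf{1}_k\|^2+\sum_{u\neq i}\mathbf{y}^T\bfD_{i,\cF}^{-1}\bfD_{u,\cF}\mathbf{\Delta}_{\cF}\bfD_{u,\cF}\bfD_{i,\cF}^{-1}\mathbf{y}.
\]
Take expectations using independence of $\bfD_u$ and $\bfD_i$ and the independence of the diagonal entries. Off-diagonal terms contain factors such as $\mathbb{E}[d_{u,a}]\,\mathbb{E}[d_{u,b}]$ with $a\ne b$, and these vanish because $X$ is symmetric about $0$, so $\mathbb{E}[X]=0$. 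The same holds for $\mathbb{E}[1/X]=0$. On the diagonal we get $\mathbb{E}[X^2]\mathbb{E}[1/X^2]=c$. So each $u\ne i$ contributes $c\,\mathbf{y}^T\mathbf{\Delta}^{(\text{diag})}_{\cF}\mathbf{y}$. Since the expectation of a minimum is at most the minimum of the expectations (the bound holds for every fixed $\mathbf{y}$), we obtain
\[
\mathbb{E}[\text{Err}_{\cF}]\le m\min_{\mathbf{y}}\Big(\|\bfA_{\cF}\mathbf{y}-\mathbf{1}_k\|^2+c(m-1)\mathbf{y}^T\mathbf{\Delta}^{(\text{diag})}_{\cF}\mathbf{y}\Big).
\]

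The remaining problem is a ridge-type quadratic minimization. Its objective is $\mathbf{y}^T\mathbf{K}\mathbf{y}-2\mathbf{y}^T\bfA_{\cF}^T\mathbf{1}_k+k$. The minimizer is $\mathbf{y}^\star=\mathbf{K}^{-1}\bfA_{\cF}^T\mathbf{1}_k$, and the minimum value is $k-\mathbf{1}_k^T\bfA_{\cF}\mathbf{K}^{-1}\bfA_{\cF}^T\mathbf{1}_k$. Multiplying by $m$ gives \eqref{eq:AD1AD2err}. For this we need $\mathbf{K}$ to be invertible. Every column of $\bfA_{\cF}$ is nonzero because $\delta\ge1$, so $\mathbf{\Delta}^{(\text{diag})}_{\cF}\succ0$. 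If $c(m-1)>0$, then $\mathbf{K}\succ0$. If $m=1$, we can handle this by the invertibility assumption, since $\bfB_{\cF}^T\bfB_{\cF}=\bfD\mathbf{\Delta}_{\cF}\bfD$.

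The step that needs the most care is the expectation computation. We have to check that $\mathbb{E}[1/X^2]$ is finite, which holds because $\epsilon<1$ keeps $I$ away from $0$. We must also verify that the cross terms cancel, using independence across $u\ne i$ together with $\mathbb{E}[X]=0$. Finally, we need to justify using a decoder that depends on the random $\bfD_i$ through $\bfD_{i,\cF}^{-1}$. This is legitimate because the error at each realization is at most the error of this feasible $\mathbf{r}$, and the support of $\mathbf{r}$ lies in $\cF$.
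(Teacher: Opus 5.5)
Your proof is correct, but it takes a different route from the paper.

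The paper keeps the exact least-squares error. It rewrites the $i$-th term as $k-\mathbf{1}_k^T\mathbf{A}_{\mathcal{F}}\mathbf{K}_i^{-1}\mathbf{A}_{\mathcal{F}}^T\mathbf{1}_k$, where $\mathbf{K}_i=\tilde{\mathbf{D}}_i^{-1}\mathbf{B}_{\mathcal{F}}^T\mathbf{B}_{\mathcal{F}}\tilde{\mathbf{D}}_i^{-1}$ is random. It then computes $\mathbb{E}[\mathbf{K}_i]=\mathbf{K}$ entrywise, using the same moment calculation you do, and finishes by invoking operator convexity of the inverse, $\mathbb{E}[\mathbf{K}_i^{-1}]\succeq(\mathbb{E}[\mathbf{K}_i])^{-1}$.

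You instead restrict to decoders $\tilde{\mathbf{D}}_i^{-1}\mathbf{y}$ (your $\mathbf{D}_{i,\mathcal{F}}^{-1}\mathbf{y}$) with deterministic $\mathbf{y}$. Their error $\mathbf{y}^T\mathbf{K}_i\mathbf{y}-2\mathbf{y}^T\mathbf{A}_{\mathcal{F}}^T\mathbf{1}_k+k$ is affine in $\mathbf{K}_i$, so you can average it and then optimize over $\mathbf{y}$. Since $\mathbf{1}_k^T\mathbf{A}_{\mathcal{F}}\mathbf{K}_i^{-1}\mathbf{A}_{\mathcal{F}}^T\mathbf{1}_k=\max_{\mathbf{y}}\bigl(2\mathbf{y}^T\mathbf{A}_{\mathcal{F}}^T\mathbf{1}_k-\mathbf{y}^T\mathbf{K}_i\mathbf{y}\bigr)$, your swap of expectation and minimum is exactly a variational proof of the paper's Jensen step (g). The two arguments therefore give the identical bound.

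Your route has several advantages:
\begin{itemize}
\item It is self-contained: it needs no matrix-convexity result for random matrices and no integrability of $\mathbf{K}_i^{-1}$.
\item It uses the invertibility of $\mathbf{B}_{\mathcal{F}}^T\mathbf{B}_{\mathcal{F}}$ only when $m=1$.
\item It verifies invertibility of $\mathbf{K}$ explicitly, which the paper leaves implicit.
\item It exhibits a concrete decoder, $\mathbf{r}_i=\tilde{\mathbf{D}}_i^{-1}\mathbf{K}^{-1}\mathbf{A}_{\mathcal{F}}^T\mathbf{1}_k$, whose expected error equals the right-hand side. This decoder depends on the random $\mathbf{D}_i$ only through a diagonal rescaling.
\end{itemize}

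The paper's route, in turn, gives an exact formula for the error in terms of $\mathbf{K}_i$. This identifies the slack as the Jensen gap $\mathbb{E}[\mathbf{K}_i^{-1}]-\mathbf{K}^{-1}$. The same matrices are also reused later, as the $\mathbf{M}_i$ in the BIBD unbiasedness lemma.
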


\begin{proof}
For $i \in [m]$, let $\tilde{\mathbf{D}}_i$ be a submatrix of $\mathbf{D}_i$ such that $\tilde{\mathbf{D}}_i :=\mathbf{D}_i(\mathcal{F}, \mathcal{F})$. We have, 
\begin{align*}
\bfB_{\cF}^T &= [\tilde{\bfD}_1 \bfA_{\cF}^T ~|~ \tilde{\bfD}_2 \bfA_{\cF}^T ~|~ \dots ~|~\tilde{\bfD}_m\bfA_{\cF}^T], \text{~so that}\\
\bfB_{\cF}^T\bfB_{\cF} & =  \sum_{j = 1}^m \tilde{\bfD}_j\bfA_{\cF}^T\bfA_{\cF}\tilde{\bfD}_j = \sum_{j = 1}^m \tilde{\bfD}_j \mathbf{\Delta}_{\mathcal{F}} \tilde{\bfD}_j .
\end{align*}
For $i \in [m]$, observe that $\mathbf{f}_i^T\mathbf{f}_i = k$ and $\mathbf{B}_{\mathcal{F}}^T \mathbf{f}_i = \tilde{\mathbf{D}_i}\mathbf{A}_{\mathcal{F}}^T\mathbf{1}_k$. Therefore, 
\begin{align*}
&\mathbf{f}_i^T \mathbf{f}_i - \mathbf{f}_i^T \bfB_{\cF}(\bfB_{\cF}^T\bfB_{\cF})^{-1}\bfB_{\cF}^T\mathbf{f}_i\\
& = k - \mathbf{1}_k^T \bfA_{\cF} \tilde{\bfD}_i\Bigg(\sum_{j = 1}^m \tilde{\bfD}_j\mathbf{\Delta}_{\mathcal{F}}\tilde{\bfD}_j\Bigg)^{-1}\tilde{\bfD}_i \bfA_{\cF}^T \mathbf{1}_k\\
& = k - \mathbf{1}_k^T \bfA_{\cF} \Bigg(\tilde{\bfD}^{-1}_i\Bigg(\sum_{j = 1}^m \tilde{\bfD}_j\mathbf{\Delta}_{\mathcal{F}}\tilde{\bfD}_j\Bigg)\tilde{\bfD}^{-1}_i\Bigg)^{-1} \bfA_{\cF}^T \mathbf{1}_k\\
& = k -  \mathbf{1}_k^T \bfA_{\cF} \left( \mathbf{\Delta}_{\mathcal{F}}  + \sum_{\substack{j = 1 \\ j \neq i}}^m\tilde{\bfD}^{-1}_i\tilde{\bfD}_j\mathbf{\Delta}_{\mathcal{F}}\tilde{\bfD}_j\tilde{\bfD}^{-1}_i\right)^{-1}\bfA_{\cF}^T\mathbf{1}_k.
\end{align*}


Let
\begin{equation}\label{K_i}
\mathbf{K}_i := \mathbf{\Delta}_{\mathcal{F}}  +  \sum_{\substack{j = 1 \\ j \neq i}}^m\tilde{\bfD}^{-1}_i\tilde{\bfD}_j\mathbf{\Delta}_{\mathcal{F}}\tilde{\bfD}_j\tilde{\bfD}^{-1}_i.\end{equation}

Note that
\[
\mathbf{K}_i
= \tilde{\bf D}_i^{-1}
\left(\sum_{j=1}^m \tilde{\bf D}_j \mathbf{\Delta}_{\mathcal{F}} \tilde{\bf D}_j\right)
\tilde{\bf D}_i^{-1}
= \tilde{\bf D}_i^{-1}
(\mathbf{B}_{\mathcal{F}}^{\top}\mathbf{B}_{\mathcal{F}})
\tilde{\bf D}_i^{-1}.
\]

Since $\mathbf{B}_{\mathcal{F}}^{T}\mathbf{B}_{\mathcal{F}}$ is positive semidefinite and invertible (by assumption), it is positive definite. Also, since $\tilde{\bf D}_i$ is diagonal with nonzero entries, it is also invertible. Consequently,
$\mathbf{K}_i$ is positive definite.

Let $r_1, r_2 \in [m]$ and $r_1 \ne r_2$. Let $u,v \in [n-s]$.  Note that, 
\begin{align*}
&\mathbb{E}[\tilde{\bfD}^{-1}_{r_1}\tilde{\bfD}_{r_2} \mathbf{\Delta}_{\mathcal{F}} \tilde{\bfD}_{r_2} \tilde{\bfD}^{-1}_{r_1}](u,v) \\
&= \mathbb{E}[\mathbf{\Delta}_{\mathcal{F}}(u,v)  \frac{\tilde{\bfD}_{r_2}(u,u)\tilde{\bfD}_{r_2}(v,v)} {\tilde{\bfD}_{r_1}(u,u)\tilde{\bfD}_{r_1}(v,v)}].
\end{align*}
If $u = v$, then 
\begin{align*}
&\mathbb{E}[\mathbf{\Delta}_{\mathcal{F}}(u,v)  \frac{\tilde{\bfD}_{r_2}(u,u)\tilde{\bfD}_{r_2}(v,v)} {\tilde{\bfD}_{r_1}(u,u)\tilde{\bfD}_{r_1}(v,v)}]\\ 
&= \mathbb{E}[\mathbf{\Delta}_{\mathcal{F}}(u,u)  \frac{(\tilde{\bfD}_{r_2}(u,u))^2} {(\tilde{\bfD}_{r_1}(u,u))^2}]\\
&= {\mathbf{\Delta}_{\mathcal{F}}(u,u)}\mathbb{E}[(\tilde{\bfD}_{r_2}(u,u))^2] 
\mathbb{E}\left[\frac{1}{(\tilde{\bfD}_{r_1}(u,u))^2}\right]\\
&= c\mathbf{\Delta}_{\mathcal{F}}(u,u)  = c\mathbf{\Delta}^{(\text{diag})}_{\mathcal{F}}(u,u).
\end{align*}

The third equality holds since the diagonal entries of $\tilde{\bfD}_{r_2}$ and $\tilde{\bfD}_{r_1}$ are i.i.d. and uniformly distributed over the set $I$. Now, if $u \ne v$, then 
\begin{align*}
&\mathbb{E}[\mathbf{\Delta}_{\mathcal{F}}(u,v)  \frac{\tilde{\bfD}_{r_2}(u,u)\tilde{\bfD}_{r_2}(v,v)} {\tilde{\bfD}_{r_1}(u,u)\tilde{\bfD}_{r_1}(v,v)}]\\
& \overset{(a)}= \mathbf{\Delta}_{\mathcal{F}}(u,v)\mathbb{E}[\tilde{\bfD}_{r_2}(u,u)]\mathbb{E}[\tilde{\bfD}_{r_2}(v,v)]\\
& \quad \quad \quad \quad \quad \quad \mathbb{E}\left[\frac{1}{\tilde{\bfD}_{r_1}(u,u)}\right] 
\mathbb{E}\left[\frac{1}{\tilde{\bfD}_{r_1}(v,v)}\right]\\
& \overset{(b)}= \mathbf{\Delta}_{\mathcal{F}}(u,v) \cdot 0 = 0 =c\mathbf{\Delta}^{(\text{diag})}_{\mathcal{F}}(u,v).
\end{align*}
Here, $(a)$ holds since the diagonal entries of $\tilde{\bfD}_{r_1}$ and $\tilde{\bfD}_{r_2}$ are i.i.d. and $(b)$ holds since the expected value of each diagonal entry is zero, and the expected value of the reciprocal of the diagonal entry is also zero. Thus, 
\begin{align*}
\mathbb{E}[\tilde{\bfD}^{-1}_{r_1}\tilde{\bfD}_{r_2} \mathbf{\Delta}_{\cF} \tilde{\bfD}_{r_2} \tilde{\bfD}^{-1}_{r_1}] 
= c\mathbf{\Delta}^{(\text{diag})}_{\mathcal{F}}.
\end{align*}

Using linearity of expectation, 
\begin{align*}\label{E[k_i]}
\mathbb{E}[\mathbf{K}_i] &= \mathbb{E}\left[ \mathbf{\Delta}_{\mathcal{F}}  +  \sum_{\substack{j = 1 \\ j \neq i}}^m\tilde{\bfD}^{-1}_i\tilde{\bfD}_j\mathbf{\Delta}_{\mathcal{F}}\tilde{\bfD}_j\tilde{\bfD}^{-1}_i \right]\\
& = \mathbf{\Delta}_{\mathcal{F}} + c(m-1) \mathbf{\Delta}^{(\text{diag})}_{\mathcal{F}} = \mathbf{K}.
\end{align*}
Finally,  \nopagebreak[4]


\begin{align*}
&\mathbb{E}[\text{Err}_{\mathcal{F}}(\mathbf{B})] \\
& \overset{(c)}{=} \mathbb{E} \left[ \sum_{i =1}^m\mathbf{f}_i^T \mathbf{f}_i - \mathbf{f}_i^T \bfB_{\cF}(\bfB_{\cF}^T\bfB_{\cF})^{-1}\bfB_{\cF}^T\mathbf{f}_i \right]\\
& \overset{(d)}{=} \sum_{i =1}^m \mathbb{E} \left[\mathbf{f}_i^T \mathbf{f}_i - \mathbf{f}_i^T \bfB_{\cF}(\bfB_{\cF}^T\bfB_{\cF})^{-1}\bfB_{\cF}^T\mathbf{f}_i \right]\\
& \overset{(e)}{=}  \sum_{i =1}^m \mathbb{E} \left[k -  \mathbf{1}_k^T \bfA_{\cF} \mathbf{K}^{-1}_i\bfA_{\cF}^T\mathbf{1}_k\right]\\
& \overset{(f)}{=}  \sum_{i =1}^m k -  \mathbf{1}_k^T \bfA_{\cF} \mathbb{E} \left[\mathbf{K}^{-1}_i\right]\bfA_{\cF}^T\mathbf{1}_k\\
&\overset{(g)}{\le}  \sum_{i =1}^m k -  \mathbf{1}_k^T \bfA_{\cF} \left(\mathbb{E} \left[\mathbf{K}_i\right]\right)^{-1}\bfA_{\cF}^T\mathbf{1}_k\\
& \overset{(h)}{=} \sum_{i = 1}^m  k - \mathbf{1}_k^T \bfA_{\cF} \left(\mathbf{K}\right)^{-1} \bfA_{\cF}^T\mathbf{1}_k\\
& = mk - m\mathbf{1}_k^T \bfA_{\cF} \left(\mathbf{K}\right)^{-1} \bfA_{\cF}^T\mathbf{1}_k.
\end{align*}
Here, $(c)$ follows from $\eqref{eq:opt_dec_error}$, $(d)$ follows by linearity of expectation. 
$(e)$ follows from $\eqref{K_i}$. $(f)$ follows since the expectation is over the random diagonal 
matrices $\mathbf{D}_i, i \in [m]$. $(g)$ follows since for any positive definite matrix 
$\mathbf{X}$, the operation $\mathbf{X} \mapsto \mathbf{X}^{-1}$ is matrix convex and thus, 
$\mathbb{E}[\mathbf{X}^{-1}] \succeq (\mathbb{E}[\mathbf{X}])^{-1}$ (see \cite{bhatia2009positive}). 
(For matrices $\mathbf{X}, \mathbf{Y}$, $\mathbf{X} \succeq \mathbf{Y}$ means $\mathbf{X}- \mathbf{Y}$ 
is positive semidefinite). $(h)$ follows since $\mathbb{E}\left[\mathbf{K}_i\right] = \mathbf{K}$.
\end{proof}

We now consider constructions where $\bfA$ is chosen as an instance of a structured matrix presented in Section \ref{sec:structured_matrices}.

\begin{remark}\label{rem:c_equals_one}
Note that, by the Cauchy--Schwarz inequality, $
c=\mathbb{E}[X^2]\mathbb{E}\left[\frac{1}{X^2}\right]
\geq \left(\mathbb{E}\left[X\left(\frac{1}{X}\right)\right]\right)^2=1
$. When $\epsilon=0$, $X$ takes values in $\{-1,1\}$ with equal probability, and hence $c=1$, which is the smallest possible value of $c$. 


Let $c_2\geq c_1\geq 1$, and let $\mathbf{K}_1$ and $\mathbf{K}_2$ denote the matrices obtained by substituting $c_1$ and $c_2$, respectively, for the parameter $c$ in the definition of $\mathbf{K}$ in Theorem \ref{thm:random_diag}. Since $\mathbf{\Delta}_{\mathcal{F}}=\mathbf{A}_{\mathcal{F}}^T\mathbf{A}_{\mathcal{F}}$ is positive semidefinite, its diagonal entries are nonnegative. Hence, $\mathbf{\Delta}_{\mathcal{F}}^{(\text{diag})}\succeq\mathbf{0}$. Then,
\[
\mathbf{K}_2-\mathbf{K}_1
=(c_2-c_1)(m-1)\mathbf{\Delta}_{\mathcal{F}}^{(\mathrm{diag})}
\succeq \mathbf{0}.
\]
Suppose $c_1$ and $c_2$ satisfy the invertibility assumption on $\mathbf{K}$ in Theorem ~\ref{thm:random_diag}. Then, $\mathbf{K}_1$ and $\mathbf{K}_2$ are positive definite. Therefore, the upper bound in \eqref{eq:AD1AD2err} increases with $c_i$.  Since $c\geq 1$, with equality when $\epsilon=0$, choosing $\epsilon=0$ yields the smallest upper bound.

We note that the applicability of Theorem \ref{thm:random_diag} requires $\mathbf{B}_{\mathcal{F}}^T 
\mathbf{B}_{\mathcal{F}}$ to be invertible. This condition may not hold for all values of $\epsilon$ depending on the assignment matrix $\mathbf{A}$. For the BIBD and SRG constructions considered below, 
$\mathbf{B}_{\mathcal{F}}^T \mathbf{B}_{\mathcal{F}}$ is invertible for $\epsilon = 0$ and every 
nonempty $\mathcal{F}$, as established in the proofs of Corollaries~1
and~2. Therefore, for these constructions, the choice $\epsilon = 0$ both satisfies 
the assumptions of Theorem \ref{thm:random_diag} and yields the smallest upper bound.

\end{remark}

 
\begin{cor} \label{cor:BIBD}
Suppose that the assignment matrix $\mathbf{A}$ is the transpose of the incidence matrix of a $(n,k,\gamma,\delta,\lambda)$ BIBD. 
Then, for the construction as in $\eqref{eq:effAD1AD2}$ with $\epsilon  = 0$, the expected error can be upper bounded as
\begin{equation} \label{eq:AD1AD2err_BIBD}
\mathbb{E}[\text{Err}_{\mathcal{F}}(\mathbf{B})] \le mk -   \frac{m\delta^2(n-s)} {m\delta + (n-s-1)\lambda}.
\end{equation}
\end{cor}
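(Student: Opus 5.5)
The plan is to specialize Theorem~\ref{thm:random_diag} to $\bfA = \bfM$, where $\bfM=\bfE^T$ and $\bfE$ is the BIBD incidence matrix, with $\epsilon=0$ so that $c=1$ (Remark~\ref{rem:c_equals_one}). Two things must be done: verify the invertibility hypothesis of Theorem~\ref{thm:random_diag}, and evaluate the quadratic form $\mathbf{1}_k^T\bfA_{\cF}\mathbf{K}^{-1}\bfA_{\cF}^T\mathbf{1}_k$ in closed form.

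\textbf{Step 1: the matrices $\mathbf{\Delta}_{\cF}$ and $\mathbf{K}$.} By \eqref{eq: bibd_prop}, $\mathbf{\Delta}_{\cF}=(\delta-\lambda)\bfI_{n-s}+\lambda\mathbf{J}_{n-s}$. Its diagonal is constant and equal to $\delta$, so $\mathbf{\Delta}^{(\text{diag})}_{\cF}=\delta\bfI_{n-s}$. Hence
\[
\mathbf{K}=(\delta-\lambda+(m-1)\delta)\bfI_{n-s}+\lambda\mathbf{J}_{n-s}=(m\delta-\lambda)\bfI_{n-s}+\lambda\mathbf{J}_{n-s}.
\]
Also, $\bfA_{\cF}^T\mathbf{1}_k=\delta\mathbf{1}_{n-s}$, since every column of $\bfA=\bfE^T$ (that is, every point) lies in $\delta$ blocks.

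\textbf{Step 2: invertibility and the quadratic form.} For $\bfB_{\cF}^T\bfB_{\cF}=\sum_j\tilde{\bfD}_j\mathbf{\Delta}_{\cF}\tilde{\bfD}_j$, each summand is congruent to $\mathbf{\Delta}_{\cF}$ via an invertible $\pm1$ diagonal matrix. Since $\delta>\lambda$ for a BIBD ($\lambda(n-1)=\delta(\gamma-1)$ with $\gamma<n$), $\mathbf{\Delta}_{\cF}\succ 0$. So each summand is positive definite, and so is the sum. This holds for every realization and every nonempty $\cF$, so the theorem applies. Similarly, $\mathbf{K}\succ0$.

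Next, $\mathbf{1}_{n-s}$ is an eigenvector of $\mathbf{K}$ with eigenvalue $m\delta-\lambda+\lambda(n-s)=m\delta+(n-s-1)\lambda$. Therefore
\[
\mathbf{1}_k^T\bfA_{\cF}\mathbf{K}^{-1}\bfA_{\cF}^T\mathbf{1}_k=\delta^2\,\mathbf{1}_{n-s}^T\mathbf{K}^{-1}\mathbf{1}_{n-s}=\frac{\delta^2(n-s)}{m\delta+(n-s-1)\lambda}.
\]
Substituting this into \eqref{eq:AD1AD2err} gives \eqref{eq:AD1AD2err_BIBD}.

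\textbf{Main obstacle.} The only delicate point is justifying the invertibility hypothesis for all realizations of the random signs. This is handled by the strict inequality $\delta>\lambda$, which makes $\mathbf{\Delta}_{\cF}$ positive definite and hence each congruent summand positive definite. The rest is direct eigenvector arithmetic.
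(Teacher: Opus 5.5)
Your proposal is correct and follows the paper's proof essentially step for step. Both arguments derive positive definiteness of $\mathbf{\Delta}_{\mathcal{F}}=(\delta-\lambda)\mathbf{I}_{n-s}+\lambda\mathbf{J}_{n-s}$ from $\delta>\lambda$ to secure invertibility of $\mathbf{B}_{\mathcal{F}}^T\mathbf{B}_{\mathcal{F}}$, identify $\mathbf{K}=(m\delta-\lambda)\mathbf{I}_{n-s}+\lambda\mathbf{J}_{n-s}$ using $c=1$, and evaluate the quadratic form through the eigenvector $\mathbf{1}_{n-s}$. Your derivation of $\delta>\lambda$ from $\lambda(n-1)=\delta(\gamma-1)$ with $\gamma<n$ is a small self-contained addition where the paper simply cites the fact.
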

\begin{proof} 
By \eqref{eq: bibd_prop}, $\mathbf{\Delta}_{\mathcal{F}} = \mathbf{A}_{\mathcal{F}}^T\mathbf{A}_{\mathcal{F}} =  (\delta - \lambda) \mathbf{I}_{n-s} + \lambda \mathbf{J}_{n-s} $. Observe that $\mathbf{\Delta}_{\mathcal{F}}$ has eigenvalue $\delta - \lambda$ with multiplicity $n-s-1$ and eigenvalue $\delta - \lambda + \lambda(n-s)$ with multiplicity 1. Since $\delta  > \lambda$ for a BIBD \cite{StinsonBook}, the eigenvalues of $\mathbf{\Delta}_{\cF}$ are strictly positive. Since $\mathbf{\Delta}_{\mathcal{F}}$ is symmetric and all the eigenvalues are strictly positive, it is positive definite. Since by construction $\tilde{\bfD}_j$ is invertible, $\tilde{\bfD}_j \mathbf{\Delta}_{\cF} \tilde{\bfD}_j$ is positive definite for each $j \in [m]$.  Since the sum of positive definite matrices is positive definite,  it follows that $\mathbf{B}_{\mathcal{F}}^T\mathbf{B}_{\mathcal{F}}  =  \sum_{j = 1}^m \tilde{\bfD}_j\mathbf{\Delta}_{\mathcal{F}}\tilde{\bfD}_j$ is positive definite and hence invertible.
Now, $\mathbf{\Delta}^{(\text{diag})}_{\mathcal{F}} = \delta \mathbf{I}_{n-s}$. 
Hence $\mathbf{K} = \mathbf{\Delta}_{\mathcal{F}} + (m-1)\mathbf{\Delta}^{(\text{diag})}_{\mathcal{F}} = (m\delta - \lambda) \mathbf{I}_{n-s} + \lambda \mathbf{J}_{n-s} $. Also, note that $\mathbf{A}_{\mathcal{F}}^T\mathbf{1}_k = \delta \mathbf{1}_{n-s}$ in this case. Observe that $\mathbf{K}$ has eigenvalue $m\delta - \lambda$ with multiplicity $n-s-1$, and an eigenvalue $m\delta - \lambda + \lambda(n-s)$ with multiplicity 1 with corresponding eigenvector $\mathbf{1}_{n-s}$. 
Let $\mathbf{K} = \sum_{i = 1}^{n-s} \lambda_i \mathbf{v}_i\mathbf{v}_i^T$ (eigen-decomposition). Therefore,
\begin{align*}
\mathbf{K}^{-1} &= \sum_{i = 1}^{n-s}\frac{1}{\lambda_i}\mathbf{v}_i\mathbf{v}_i^T
= \frac{1}{m\delta + \lambda(n-s-1)} \frac{\mathbf{1}_{n-s}\mathbf{1}^T_{n-s}}{n-s} + \sum_{i = 2}^{n-s}\frac{1}{\lambda_i}\mathbf{v}_i\mathbf{v}_i^T.    
\end{align*}

Since $ \mathbf{A}_{\mathcal{F}}^T\mathbf{1}_k = \delta \mathbf{1}_{n-s}$, it follows that $\mathbf{1}_k^T \mathbf{A}_{\mathcal{F}} \mathbf{K}^{-1} \mathbf{A}_{\mathcal{F}}^T\mathbf{1}_k =  \frac{\delta^2(n-s)} {m\delta + (n-s-1)\lambda}$
since $\mathbf{v}_i^T \mathbf{1}_{n-s} = 0$, for $i \ne 1$ and $i \in [n-s]$. The result follows from \eqref{eq:AD1AD2err}.
\end{proof}
\begin{remark} When $\mathbf{A}$ is chosen as above, the work of \cite{kadheKR19} considered the case of $m=1$ and showed that for any set of non-straggling workers $\mathcal{F}$ of size $n-s$, $\text{Err}_{\mathcal{F}}(\mathbf{B}) = \text{Err}_{\mathcal{F}}(\mathbf{A}) = k - \frac{\delta^2(n-s)}{\delta +  (n-s-1)\lambda}$.

Our work is a generalization of their result for the communication-efficient case when $m > 1$. Indeed, a naive application of their scheme is to simply construct $\mathbf{B}$ by stacking $\mathbf{A}$ vertically $m$ times, i.e., 
\begin{align}
  \bfB^T &= [ \bfA^T~|~ \dots ~|~ \bfA^T].  \label{eq:trivial_cons}
\end{align}

In this case, their error expression becomes $\text{Err}_{\mathcal{F}}(\mathbf{B}) = mk - \frac{\delta^2(n-s)}{\delta + (n-s-1)\lambda}$ (follows from Theorem \ref{thm:random_diag}).
Our construction reduces the expected error to substantially below this.
\end{remark}



\begin{cor}
\label{cor:SRG}
Suppose that the assignment matrix $\mathbf{A}$ is the adjacency matrix of a $(n, \delta, \lambda, \mu)$ SRG with $0 <\delta \ne \mu$. 
Then, for the construction as in $\eqref{eq:effAD1AD2}$, with $\epsilon  = 0$, the expected error can be upper bounded as follows.
\begin{equation}
\mathbb{E}[\text{Err}_{\mathcal{F}}(\mathbf{B})] \le mk -   \frac{m\delta^2(n-s)} {(m\delta - \mu)+ \mu(n-s) + (\lambda - \mu)\theta},
\end{equation}
where
\begin{align*}
    \theta = \begin{cases}
        \delta & \text{~if~} \lambda \geq \mu,\\
        \frac{1}{2} \left[ (\lambda - \mu) - \sqrt{(\lambda - \mu)^2 + 4 (\delta - \mu)} \right] & \text{~otherwise.}
    \end{cases}
\end{align*}

\end{cor}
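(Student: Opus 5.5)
We apply Theorem~\ref{thm:random_diag} with $\epsilon=0$, so that $c=1$ by Remark~\ref{rem:c_equals_one}, and then lower bound the quadratic form $\mathbf{1}_k^T\bfA_{\cF}\mathbf{K}^{-1}\bfA_{\cF}^T\mathbf{1}_k$. Unlike the BIBD case, $\mathbf{K}$ need not have $\mathbf{1}_{n-s}$ as an eigenvector, so we cannot compute $\mathbf{K}^{-1}$ exactly. Instead we will use a Cauchy--Schwarz type bound.

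\emph{Step 1: invertibility.} The SRG adjacency matrix $\mathbf{M}=\bfA$ has eigenvalues $\delta$, $r$ and $t$, where $r,t$ are the roots of $x^2-(\lambda-\mu)x-(\delta-\mu)=0$, so $rt=\mu-\delta$. Since $\delta>0$ and $\delta\neq\mu$, all eigenvalues are nonzero. Hence $\bfA$ is invertible and its columns are linearly independent. Therefore $\mathbf{\Delta}_{\cF}=\bfA_{\cF}^T\bfA_{\cF}$ is positive definite. Exactly as in the proof of Corollary~\ref{cor:BIBD}, $\bfB_{\cF}^T\bfB_{\cF}=\sum_j\tilde{\bfD}_j\mathbf{\Delta}_{\cF}\tilde{\bfD}_j$ is then positive definite, so Theorem~\ref{thm:random_diag} applies. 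Also $\mathbf{K}=\mathbf{\Delta}_{\cF}+(m-1)\mathbf{\Delta}^{(\text{diag})}_{\cF}$ is positive definite.

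\emph{Step 2: explicit form of $\mathbf{K}$.} By \eqref{eq: srg_prop}, the diagonal of $\mathbf{\Delta}_{\cF}$ equals $\delta$ (because $\mathbf{M}(\cF,\cF)$ has zero diagonal), so $\mathbf{\Delta}^{(\text{diag})}_{\cF}=\delta\mathbf{I}_{n-s}$. This gives
\[
\mathbf{K}=(m\delta-\mu)\mathbf{I}_{n-s}+\mu\mathbf{J}_{n-s}+(\lambda-\mu)\mathbf{M}(\cF,\cF).
\]
Moreover $\bfA_{\cF}^T\mathbf{1}_k=\delta\mathbf{1}_{n-s}$, since $\mathbf{M}$ is symmetric with constant row sums $\delta$. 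So the quantity to lower bound is $\delta^2\,\mathbf{1}^T\mathbf{K}^{-1}\mathbf{1}$.

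\emph{Step 3: Cauchy--Schwarz reduction.} For positive definite $\mathbf{K}$ and any vector $\mathbf{x}$, applying Cauchy--Schwarz to $\mathbf{K}^{1/2}\mathbf{x}$ and $\mathbf{K}^{-1/2}\mathbf{x}$ gives $(\mathbf{x}^T\mathbf{x})^2\le(\mathbf{x}^T\mathbf{K}\mathbf{x})(\mathbf{x}^T\mathbf{K}^{-1}\mathbf{x})$. With $\mathbf{x}=\mathbf{1}_{n-s}$ this yields
\[
\mathbf{1}^T\mathbf{K}^{-1}\mathbf{1}\ge\frac{(n-s)^2}{\mathbf{1}^T\mathbf{K}\mathbf{1}},
\qquad
\mathbf{1}^T\mathbf{K}\mathbf{1}=(n-s)(m\delta-\mu)+\mu(n-s)^2+(\lambda-\mu)\,\mathbf{1}^T\mathbf{M}(\cF,\cF)\mathbf{1}.
\]
It therefore suffices to show $(\lambda-\mu)\,\mathbf{1}^T\mathbf{M}(\cF,\cF)\mathbf{1}\le(\lambda-\mu)(n-s)\theta$. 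Dividing $(n-s)^2$ by $(n-s)\bigl[(m\delta-\mu)+\mu(n-s)+(\lambda-\mu)\theta\bigr]$, then multiplying by $m\delta^2$, gives exactly the claimed bound.

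\emph{Step 4: bounding $\mathbf{1}^T\mathbf{M}(\cF,\cF)\mathbf{1}$ (the main point).}
\begin{itemize}
\item If $\lambda\ge\mu$, we need an upper bound. Each row of $\mathbf{M}(\cF,\cF)$ sums to at most $\delta$, so $\mathbf{1}^T\mathbf{M}(\cF,\cF)\mathbf{1}\le(n-s)\delta$, which gives $\theta=\delta$.
\item If $\lambda<\mu$, we need a lower bound. By Cauchy interlacing, $\lambda_{\min}(\mathbf{M}(\cF,\cF))\ge\lambda_{\min}(\mathbf{M})$. Hence $\mathbf{1}^T\mathbf{M}(\cF,\cF)\mathbf{1}\ge(n-s)\lambda_{\min}(\mathbf{M})$.
\end{itemize}
The smallest eigenvalue of $\mathbf{M}$ is the smaller root $\frac12\bigl[(\lambda-\mu)-\sqrt{(\lambda-\mu)^2+4(\delta-\mu)}\bigr]$, which is the stated $\theta$. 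This holds because $\delta$ is the largest eigenvalue of a $\delta$-regular graph, so it is not the minimum. The interlacing step, together with identifying this root as the minimum eigenvalue, is the only nontrivial ingredient; the rest is bookkeeping.
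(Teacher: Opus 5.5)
Your proposal is correct and reaches the same bound, but the key step differs from the paper's. The paper bounds $\lambda_{\max}(\mathbf{K})$ with Weyl's inequality applied to $(m\delta-\mu)\mathbf{I}_{n-s}+\mu\mathbf{J}_{n-s}$ and $(\lambda-\mu)\mathbf{A}(\mathcal{F},\mathcal{F})$. It uses interlacing in both cases: $\lambda_{\max}(\mathbf{A}(\mathcal{F},\mathcal{F}))\le\delta$, and $\lambda_{\min}(\mathbf{A}(\mathcal{F},\mathcal{F}))$ is at least the smaller root. It then applies the Rayleigh-quotient bound $\mathbf{u}^T\mathbf{K}^{-1}\mathbf{u}\ge\lambda_{\min}(\mathbf{K}^{-1})=1/\lambda_{\max}(\mathbf{K})$.

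You instead use $\mathbf{u}^T\mathbf{K}^{-1}\mathbf{u}\ge 1/(\mathbf{u}^T\mathbf{K}\mathbf{u})$ and control only the scalar $\mathbf{1}^T\mathbf{K}\mathbf{1}$. This buys you three things:
\begin{itemize}
\item Since $\mathbf{u}^T\mathbf{K}\mathbf{u}\le\lambda_{\max}(\mathbf{K})$, your intermediate inequality is never weaker than the paper's.
\item It is explicit: $\mathbf{1}^T\mathbf{M}(\mathcal{F},\mathcal{F})\mathbf{1}$ is twice the number of edges induced on $\mathcal{F}$, so you get an $\mathcal{F}$-dependent refinement for free. For $\lambda\ge\mu$ you also replace interlacing by an elementary row-sum count.
\item For $\lambda<\mu$ your route does not need interlacing at all. $\mathbf{M}(\mathcal{F},\mathcal{F})$ is entrywise nonnegative, so $(\lambda-\mu)\mathbf{1}^T\mathbf{M}(\mathcal{F},\mathcal{F})\mathbf{1}\le 0<(\lambda-\mu)(n-s)\theta$. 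This yields the stated bound even with the positive term $(\lambda-\mu)\theta$ dropped from the denominator, i.e., a strictly stronger bound.
\end{itemize}
The paper's eigenvalue bound is uniform over all directions. That is more than is needed here, because $\mathbf{A}_{\mathcal{F}}^T\mathbf{1}_k=\delta\mathbf{1}_{n-s}$ pins the relevant direction to $\mathbf{1}_{n-s}$.

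Two minor points. First, you only need $\lambda_{\min}(\mathbf{M})\ge\theta$, not equality. This holds because the spectrum lies in $\{\delta,r,t\}$ with $t<0<\delta$, since $rt=\mu-\delta<0$ (as $\mu\le\delta$ and $\mu\neq\delta$). Second, positivity of the denominator $(m\delta-\mu)+\mu(n-s)+(\lambda-\mu)\theta$ follows automatically from $0<\mathbf{1}^T\mathbf{K}\mathbf{1}\le(n-s)$ times that denominator, which you should say when you invert.
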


\begin{proof}

It is known that $\mathbf{A}$ has only three distinct eigenvalues given by $\delta , \tilde{r}, \tilde{s}$ \cite{Brouwer_SRG} where 
\[ 
\tilde{r} = \frac{1}{2}[ \lambda - \mu + \sqrt{(\lambda - \mu)^2 + 4(\delta - \mu)}],\]
and \[ 
\tilde{s} =  \frac{1}{2}[ \lambda - \mu - \sqrt{(\lambda - \mu)^2 + 4(\delta - \mu)}].\]

Since $0<\delta \ne \mu$, all the eigenvalues of $\mathbf{A}$ are nonzero.
Consequently, $\mathbf{A}^T\mathbf{A} = \mathbf{A}^2$ will have all strictly positive eigenvalues. Since $\mathbf{\Delta}_{\mathcal{F}} = \bfA_{\mathcal{F}}^T \bfA_{\mathcal{F}} = (\mathbf{A}^T\mathbf{A})(\mathcal{F}, \mathcal{F})$, by Cauchy's interlacing theorem (Theorem 4.3.17 in \cite{horn2012matrix}),  $\mathbf{\Delta}_{\mathcal{F}}$ will have all  strictly positive eigenvalues. Since $\mathbf{\Delta}_{\mathcal{F}}$ is symmetric and all the eigenvalues are strictly positive, it is positive definite. Since by construction $\tilde{\bfD}_j$ is invertible, $\tilde{\bfD}_j \mathbf{\Delta}_{\cF} \tilde{\bfD}_j$ is positive definite for each $j \in [m]$. Since the sum of positive definite matrices is positive definite, it follows that $\mathbf{B}_{\mathcal{F}}^T\mathbf{B}_{\mathcal{F}}  =  \sum_{j = 1}^m \tilde{\bfD}_j\mathbf{\Delta}_{\mathcal{F}}\tilde{\bfD}_j$ is positive definite and is hence invertible. By \eqref{eq: srg_prop},
\begin{align*}
\mathbf{\Delta}_{\mathcal{F}} &= \mathbf{A}_{\mathcal{F}}^T \mathbf{A}_{\mathcal{F}}\\
& = (\delta - \mu) \mathbf{I}_{n-s} + \mu \mathbf{J}_{n-s} + (\lambda - \mu) \mathbf{A}(\mathcal{F}, \mathcal{F}).
\end{align*}

Therefore, 
\begin{align*}
\mathbf{\Delta}^{(\text{diag})}_{\mathcal{F}} & = \delta \mathbf{I}_{n-s}, \text{~and}\\
\mathbf{K}& = \mathbf{\Delta}_{\mathcal{F}} + (m-1) \mathbf{\Delta}^{(\text{diag})}_{\mathcal{F}} = (m\delta-\mu) \bfI_{n-s} + \mu \mathbf{J} _{n-s} + (\lambda - \mu) \bfA(\mathcal{F},\mathcal{F}).
\end{align*}


Since both $\mathbf{\Delta}_{\mathcal{F}}$ and $\mathbf{\Delta}^{(\text{diag})}_{\mathcal{F}}$ are positive definite, it follows that $\mathbf{K}$ is positive definite and hence invertible. 
Note that the eigenvalues of $(m\delta-\mu) \bfI_{n-s} + \mu \mathbf{J} _{n-s}$ are $m\delta-\mu + \mu (n-s)$ (with multiplicity 1) and $m\delta-\mu$ (with multiplicity $n-s-1$). Also, by Cauchy's interlacing theorem, the maximum eigenvalue of $\mathbf{A}(\mathcal{F}, \mathcal{F})$, $\lambda_{\text{max}}(\mathbf{A}(\mathcal{F}, \mathcal{F})) \le \delta$ and the minimum eigenvalue of $\mathbf{A}(\mathcal{F}, \mathcal{F})$, $\lambda_{\text{min}}(\mathbf{A}(\mathcal{F}, \mathcal{F})) \ge \tilde{s}$. Now, by applying Weyl's inequality (Theorem 4.3.1 in \cite{horn2012matrix}) to $(m\delta-\mu) \bfI_{n-s} + \mu \mathbf{J} _{n-s}$ and $(\lambda - \mu) \bfA(\mathcal{F},\mathcal{F})$, we get
\begin{align*}
    \lambda_{\max} (\mathbf{K}) &\leq m\delta-\mu + \mu (n-s) + (\lambda - \mu) \theta \label{eq:upper_bd_lmax};
\end{align*}
where $\theta = \delta$ if $\lambda \ge \mu$ and otherwise $\theta = \tilde{s} = \frac{1}{2}[ \lambda - \mu - \sqrt{(\lambda - \mu)^2 + 4(\delta - \mu)}]$

Consequently, 
\[\lambda_{\text{min}}(\mathbf{K}^{-1}) \ge \frac{1}{ m\delta-\mu + \mu (n-s) + (\lambda - \mu) \theta}. \]

Note that, $\mathbf{A}_{\mathcal{F}}^T\mathbf{1}_k = \delta \mathbf{1}_{n-s}$. Let $ \mathbf{u} := \frac{1}{\sqrt{n-s}}\mathbf{1}_{n-s}$ so that $\|\mathbf{u}\|_2 = 1$. Then, 
 $\mathbf{u}^T \mathbf{K}^{-1} \mathbf{u} \ge \lambda_{\text{min}}(\mathbf{K}^{-1})$ by Rayleigh Quotient Theorem (see Theorem 4.2.2 in \cite{horn2012matrix}).
 By Theorem \ref{thm:random_diag}, 

\begin{align*}
\mathbb{E}[\text{Err}_{\mathcal{F}}(\mathbf{B})] 
& \le  mk - m\delta\mathbf{1}_{n-s}^T (\mathbf{K})^{-1} \delta \mathbf{1}_{n-s}\\
& = mk - m\delta^2(n-s)\mathbf{u}^T(\mathbf{K})^{-1}\mathbf{u}\\
& \le mk - m\delta^2(n-s)\lambda_{\text{min}}(\mathbf{K}^{-1})\\
& \le mk - \frac{m\delta^2(n-s)}{m\delta-\mu + \mu (n-s) + (\lambda - \mu) \theta}. 
\end{align*}
\end{proof}

A desirable property of a scheme is that it recovers the exact gradient when there are no stragglers, i.e., when $s=0$. The construction in  \eqref{eq:effAD1AD2} with assignment matrices from BIBDs and SRGs has this property only when $m=1$. Indeed, exact recovery requires the existence of a decoding matrix $\mathbf{R}$ such that $\mathbf{B}\mathbf{R}=\mathbf{F}$. From the block structures of $\mathbf{B}$ and $\mathbf{F}$, for every $i \in [m]$, exact recovery requires
$
\mathbf{A}\mathbf{D}_i\mathbf{r}_i=\mathbf{1}_k
$ and $
\mathbf{A}\mathbf{D}_j\mathbf{r}_i=\mathbf{0}_k
$ for every $j\neq i$, $j\in [m]$. For the BIBD construction, $\mathbf{A}$ has full column rank, whereas for the SRG construction with $\delta\neq\mu$, $\mathbf{A}$ is invertible. Since each $\mathbf{D}_j$ is invertible, every matrix $\mathbf{A}\mathbf{D}_j$ has a trivial null space. Thus, when $m>1$, the condition $
\mathbf{A}\mathbf{D}_j\mathbf{r}_i=\mathbf{0}_k
$ for some $j\neq i$ implies $\mathbf{r}_i=\mathbf{0}_n$, which contradicts
$
\mathbf{A}\mathbf{D}_i\mathbf{r}_i=\mathbf{1}_k.
$ Thus, although the approximation error is significantly reduced compared to the naive scheme for general values of $s$, it remains strictly positive when $s=0$ and $m>1$.


We address this issue by considering assignment matrices given by the
bi-adjacency matrix of a $(k,m,\delta)$ coset bipartite graph.
A sufficient condition for exact gradient recovery in the absence of
stragglers is that the encoding matrix $\bfB$ be invertible.
The following proposition shows that this condition holds for coset
bipartite graphs with appropriate parameters.

\begin{prop}{\label{prop: cos_inv}}
Suppose that the assignment matrix $\bfA$ is the bi-adjacency matrix of a $(k, m, \delta)$ coset bipartite graph such that $k = p^a$, where $p$ is a prime, $a \in \mathbb{Z}_{\ge 1}$ and $p \nmid \delta$. Then, for the construction as in $\eqref{eq:effAD1AD2}$ with $\epsilon > 0$, the encoding matrix $\bfB$ is almost surely invertible. 
\end{prop}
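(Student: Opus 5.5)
The plan is the standard ``generic invertibility'' argument. Since $n = mk$, the encoding matrix $\bfB \in \mathbb{R}^{mk\times mk}$ is square. Its determinant is a polynomial $P$ in the $mn$ diagonal entries of $\bfD_1,\dots,\bfD_m$, because each entry of $\bfB$ is $\bfA(i,x)\bfD_j(x,x)$. For $\epsilon>0$ these entries are i.i.d. with a density on $I$, so their joint law is absolutely continuous with respect to Lebesgue measure on $\mathbb{R}^{mn}$. The zero set of a polynomial that is not identically zero has Lebesgue measure zero. Hence it suffices to show $P \not\equiv 0$, and for that it is enough to find one assignment of the variables, possibly complex, at which $P \neq 0$. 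A polynomial with real coefficients that is nonzero at some point of $\mathbb{C}^{mn}$ is not the zero polynomial.

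\textbf{Kronecker structure.} Write each worker $x \in \mathbb{Z}_{mk}$ uniquely as $x = r + qk$ with $r \in \{0,\dots,k-1\}$ and $q \in \{0,\dots,m-1\}$. By the construction of the coset bipartite graph, $x \in i+S$ if and only if $r \in i + B \pmod k$. Hence $\bfA(i,x) = \bfC(i,r)$, where $\bfC$ is the $k\times k$ circulant $0/1$ matrix with $\bfC(i,r)=1$ if and only if $r-i \in B \pmod k$.

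Let $\zeta$ be a primitive $m$-th root of unity, and choose $\bfD_j(x,x) = \zeta^{(j-1)q}$. Index the rows of $\bfB$ by pairs $(j,i)$ and permute its columns to pairs $(q,r)$. The entry in position $((j,i),(q,r))$ is then $\zeta^{(j-1)q}\bfC(i,r)$, so up to a column permutation
\[
\bfB = \bfV \otimes \bfC,
\]
where $\bfV$ is the $m\times m$ Vandermonde (DFT) matrix $[\zeta^{(j-1)q}]$. Since the $\zeta^q$ are distinct, $\bfV$ is invertible. Therefore $|\det \bfB| = |\det \bfV|^k\,|\det \bfC|^m$, and it remains to show that $\bfC$ is invertible.

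\textbf{Invertibility of the circulant $\bfC$ (main obstacle).} The eigenvalues of $\bfC$ are $\mu_t = \sum_{b \in B} \omega^{bt}$ for $t = 0,\dots,k-1$, where $\omega = e^{2\pi \iota/k}$. For $t=0$ we get $\mu_0=\delta\neq 0$. For $t \neq 0$, $\eta := \omega^t$ is a primitive $p^c$-th root of unity for some $c\ge 1$, because $k=p^a$.

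Work in $\mathbb{Z}[\eta]$. Every power $\eta^{b}$ is congruent to $1$ modulo the ideal $(1-\eta)$, so $\mu_t \equiv \delta \pmod{(1-\eta)}$. It is standard that $(1-\eta)$ is the prime ideal above $p$ and $(1-\eta)\cap \mathbb{Z} = p\mathbb{Z}$; this follows since $\prod_{\gcd(u,p)=1}(1-\eta^u) = \Phi_{p^c}(1) = p$ and $1-\eta$ is not a unit. Since $p\nmid \delta$, we have $\delta \notin (1-\eta)$, hence $\mu_t \notin (1-\eta)$, and in particular $\mu_t \ne 0$.

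So $\bfC$ is invertible, $\det\bfB \ne 0$ at this complex point, $P\not\equiv 0$, and $\bfB$ is invertible almost surely. The hypotheses $k=p^a$ and $p\nmid\delta$ are used only in this step. Without them a sum of $\delta$ distinct $k$-th roots of unity can vanish, which is why this is the delicate part of the proof.
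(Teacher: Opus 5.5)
Your proof is correct and follows the same architecture as the paper's. You treat $\det\bfB$ as a polynomial in the diagonal entries, specialize to a point where $\bfB$ becomes a Kronecker product with the circulant $\bfC$, and show $\bfC$ is invertible because no eigenvalue $\sum_{b\in B}\eta^b$ can vanish when $k=p^a$ and $p\nmid\delta$. The differences are minor. The paper specializes to the block selectors $(\mathbf{e}_i\mathbf{e}_i^T)\otimes\bfI_k$, giving $\bfI_m\otimes\bfC$ instead of your $\mathbf{V}\otimes\bfC$. It proves nonvanishing by showing $\Phi_{p^c}\mid p_T$ in $\mathbb{Z}[x]$ via Gauss's lemma and evaluating at $x=1$, which is the polynomial form of your reduction modulo $(1-\eta)$. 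Your measure-zero justification for the almost-sure step is, if anything, more precise than the paper's appeal to Schwartz--Zippel.
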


\begin{proof}
By construction, for $i \in \{0, \dots, k-1\}$ and $x \in \{0 , \dots, mk-1\}$, $\bfA(i+1, x+1) = 1$ if and only if $x \in i+ S$. By definition, for $y \in \{0, \dots, k-1\}$, $y \in i+S$ if and only if $y + tk \in i + S$ for any $t \in \{0, \dots, m-1\}$. Consequently, $\bfA(i+1, y+1) = 1$ if and only if $\bfA(i+1, tk + y+1) = 1$. Therefore, $\bfA$ can be written as $m$ equal column blocks such that 
\[ \bfA = \begin{bmatrix} \bfC & \dots & \bfC \end{bmatrix},\]
 where $\bfC \in \mathbb{R}^{k \times k}$. 

Now we show that $\bfC$ is circulant. For $i,j \in \{0, \dots, k-1\}$, by definition $\bfC(i+1, j+1) = 1$ if and only if $j \in i + S$, which implies that $j = i + u \bmod mk$ with $u \in S$. This implies that
\begin{align*}
    j-i &= u \bmod mk\\
    j-i &= u \bmod k \text{~~(since $k ~|~ mk$)}.
\end{align*}
Next, since $u \in S$, there exists $b \in B$ and $h \in H$ such that $u = b+h$ and since $h$ is a multiple of $k$, we have $u = b \bmod k$. We conclude that $(j-i) \bmod k \in B$. The reverse implication is easy to see. We conclude that $\bfC(i+1,j+1)$ only depends on $(j-i) \bmod k$. Therefore, each row of $\bfC$ is obtained from the preceding row by a cyclic shift of one position to the right. Thus, $\bfC$ is a circulant matrix.

Note that in this case $n = mk$. For $t \in \{1,\dots,m \}$, let
\[
\bfX_t := \mathrm{diag}\big(x^{(t)}_1,\dots,x^{(t)}_n\big)
\]
be diagonal matrices whose entries are independent indeterminates. Define $\bfB_{\bfX}$ such that
\begin{equation}\label{eq:poly}
\mathbf{B}_{\bfX}^T := [ \bfX_1 \bfA^T ~|~ \bfX_2 \bfA^T ~|~ \dots ~|~ \bfX_m \bfA^T].
\end{equation}

Let $\text{det}(\bfM)$ denote the determinant of a matrix $\bfM$. We will show that $\text{det}(\bfB_X)$ is not an identically zero polynomial, so that the Schwartz-Zippel lemma \cite{zippel1979probabilistic} can be applied to conclude that the encoding matrix $\mathbf{B}$ is invertible almost surely. For $i \in [m]$,
let $\mathbf{e}_i \in \mathbb{R}^m$ be the $i$-th standard basis vector.
Define
\[
\bfX_i' := (\mathbf{e}_i \mathbf{e}_i^T) \otimes \bfI_k \in \mathbb{R}^{mk\times mk},
\]
where $\otimes$ denotes the Kronecker product. Let 
\begin{equation}
\mathbf{B}_{\bfX'}^T = [ \bfX_1' \bfA^T ~|~ \bfX_2' \bfA^T ~|~ \dots ~|~ \bfX_m' \bfA^T].
\end{equation}

Since $\bfX_i'$ selects the $i$-th $k\times k$ block and each block of $\bfA$
equals $\bfC$, we have $\mathbf{B}_{\bfX'} = \bfI_m \otimes \bfC$. So, the determinant of $\mathbf{B}_{\bfX'}$, $\text{det}(\mathbf{B}_{\bfX'}) = (\text{det} (\bfC))^m$.
We show that the $k\times k$ circulant matrix $\bfC$ is invertible when
$k=p^{a}$ for a prime $p$ and $p \nmid \delta$. Let the first row of $\bfC$ be the
indicator of a set $T\subseteq \{0, \dots, k-1 \}$ with $|T|=\delta$, where index $j \in \{0, \dots,k-1\}$ corresponds to the $(j+1)$-th column of $\bfC$ following our convention on the labeling of the group elements in $\mathbb{Z}_{k}$. Define the mask
polynomial,
\[
p_T(x)\;:=\;\sum_{j\in T} x^{j} \in \mathbb{Z}[x]~~ (\text{ring of polynomials with coefficients in $\mathbb{Z}$}).
\]
Let $\omega:=e^{-2\pi i/k}$. It is well known that for a circulant matrix the eigenvalues are
\begin{equation}\label{eq:eigs-circ}
\lambda_r \;=\; p_T(\omega^{r}) \;=\; \sum_{j\in T}\omega^{rj}, \qquad r=0,1,\dots,k-1.
\end{equation}
In particular, $\lambda_0=p_T(1)=\delta\neq 0$. Assume for contradiction that $\bfC$ is singular. Then $\lambda_r=0$ for some
$r\in\{1,\dots,k-1\}$, i.e.,
\begin{equation}\label{eq:vanish}
\sum_{j\in T} (\omega^{r})^{j}=0.
\end{equation}
Let $q:=\gcd(k,r)$ and set $t=\frac{k}{q}$. Then $\omega^{r}$ is a primitive $t$-th root
of unity. Since $k=p^{a}$ we have $t=p^{b}$ for some $1\le b\le a$. Let $\Phi_t(x)$ denote the $t$-th cyclotomic
polynomial, i.e., the minimal polynomial over $\mathbb{Q}$ of a primitive $t$-th root of unity. For $t=p^b$, we have $\Phi_{t}(x)=1+x^{p^{b-1}}+\cdots+x^{(p-1)p^{b-1}}$, hence
$\Phi_{t}(1)=p$.
Since $\omega^r$ is a primitive $t$-th root and $p_T(\omega^r)=0$, it follows that there exists $g(x) \in \mathbb{Q}[x]$ such that 
\[
\Phi_t(x)g(x) = p_T(x).
\]
Suppose $g(x)$ is a constant. Then, $g(x) = 1$, since $\Phi_t(x)$ and $p_T(x)$ are both monic polynomials. Now, $\delta = p_T(1) =\Phi_t(1)g(1) = p$, contradicting the
assumption $p\nmid \delta$. Thus, $g(x)$ cannot be a constant. Next, suppose that $g(x)$ is not a constant polynomial. Since the coefficients of $\Phi_t(x)$ are coprime, by Gauss's Lemma (see Theorem 17.14 in \cite{judson2021abstract}) there exists  $h(x)\in \mathbb{Z}[x]$ such that $\Phi_t(x)h(x) = p_T(x)$. So, $\delta = p_T(1) = \Phi_t(1) h(1) = p\cdot h(1)$, i.e., $p \mid \delta$ which contradicts the assumption $p \nmid \delta$. Hence $\lambda_r\neq 0$ for all $r$, so $\bfC$ has no zero eigenvalues and is invertible.  


Consequently, $\text{det}(\mathbf{B}_{\bfX'}) \ne 0$ and  $\text{det}(\bfB_X)$ is not an identically zero polynomial. Also, since for the encoding matrix $\bfB$ the diagonal entries of $\bfD_i$ are i.i.d. continuous (as $\epsilon > 0)$, it follows from Schwartz-Zippel lemma \cite{zippel1979probabilistic} that, for the encoding matrix $\mathbf{B}$ generated by this construction, 
\begin{align*}
\mathbb{P}(\text{det}(\mathbf{B}) \ne 0) &= 1.
\end{align*}


\end{proof}

We now turn to the error analysis. Using the invertibility of the encoding
matrix established in Proposition~\ref{prop: cos_inv}, we derive an
upper bound on the expected error.


\begin{cor}  \label{cor:bipartite}
Suppose that the assignment matrix $\bfA$ is the bi-adjacency matrix of a $(k,m,\delta)$ coset bipartite graph such that $k = p^a$, where $p$ is a prime, $a \in \mathbb{Z}_{\ge 1}$ and $p \nmid \delta$. Then, for the construction as in $\eqref{eq:effAD1AD2}$ with $\epsilon > 0$,
the expected error can be upper bounded as
\begin{equation} \label{eq:AD1AD2err_Coset}
\mathbb{E}[\text{Err}_{\mathcal{F}}(\mathbf{B})] \le mk - \frac{m\delta^2(mk-s)}{m\delta^2 + c(m-1)\delta}.
\end{equation}
\end{cor}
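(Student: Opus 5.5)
The plan is to apply Theorem~\ref{thm:random_diag} directly. Since $n=mk$ here, the main work is to (i) verify its invertibility hypothesis and (ii) lower bound the quadratic form $\mathbf{1}_k^T\bfA_{\cF}\mathbf{K}^{-1}\bfA_{\cF}^T\mathbf{1}_k$ through an upper bound on $\lambda_{\max}(\mathbf{K})$, in the spirit of the proof of Corollary~\ref{cor:SRG}.

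\emph{Invertibility.} By Proposition~\ref{prop: cos_inv}, $\bfB\in\mathbb{R}^{mk\times mk}$ is almost surely invertible, so its columns are linearly independent. Hence any column subset $\bfB_{\cF}$ has full column rank $n-s$, and $\bfB_{\cF}^T\bfB_{\cF}$ is positive definite almost surely. Theorem~\ref{thm:random_diag} therefore applies. A null event does not affect the expectation.

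\emph{Computing the ingredients.} Each right vertex of the coset bipartite graph has degree $\delta$, so every column of $\bfA$ has exactly $\delta$ ones. This gives $\mathbf{\Delta}^{(\text{diag})}_{\cF}=\delta\bfI_{n-s}$ and $\bfA_{\cF}^T\mathbf{1}_k=\delta\mathbf{1}_{n-s}$, so that $\mathbf{K}=\mathbf{\Delta}_{\cF}+c(m-1)\delta\bfI_{n-s}$. Setting $\mathbf{u}=\mathbf{1}_{n-s}/\sqrt{n-s}$, the Rayleigh quotient gives
\[
\mathbf{1}_k^T\bfA_{\cF}\mathbf{K}^{-1}\bfA_{\cF}^T\mathbf{1}_k=\delta^2(n-s)\,\mathbf{u}^T\mathbf{K}^{-1}\mathbf{u}\ge \frac{\delta^2(mk-s)}{\lambda_{\max}(\mathbf{K})}.
\]
It remains to show $\lambda_{\max}(\mathbf{K})\le m\delta^2+c(m-1)\delta$, that is, $\lambda_{\max}(\mathbf{\Delta}_{\cF})\le m\delta^2$.

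\emph{Main step: the spectral bound.} Since $\mathbf{\Delta}_{\cF}=(\bfA^T\bfA)(\cF,\cF)$ is a principal submatrix, Cauchy interlacing gives $\lambda_{\max}(\mathbf{\Delta}_{\cF})\le\lambda_{\max}(\bfA^T\bfA)=\lambda_{\max}(\bfA\bfA^T)$. From the proof of Proposition~\ref{prop: cos_inv}, $\bfA=[\bfC~\cdots~\bfC]$ with $\bfC$ a $k\times k$ $0/1$ circulant matrix having row and column sums $\delta$. Thus $\bfA\bfA^T=m\bfC\bfC^T$. The matrix $\bfC\bfC^T$ is nonnegative with all row sums equal to $\delta^2$, so its spectral radius is $\delta^2$. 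Alternatively, since $\bfC$ is normal, its singular values are $|p_T(\omega^r)|\le\delta$. Hence $\lambda_{\max}(\bfA\bfA^T)=m\delta^2$.

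Substituting into Theorem~\ref{thm:random_diag} gives $\mathbb{E}[\text{Err}_{\cF}(\bfB)]\le mk-\frac{m\delta^2(mk-s)}{m\delta^2+c(m-1)\delta}$. I expect no real obstacle. The only delicate point is justifying that the almost-sure invertibility suffices for the expectation bound, which holds because the exceptional set has probability zero.
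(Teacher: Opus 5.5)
Your proposal is correct and follows essentially the same route as the paper: almost-sure invertibility from Proposition~\ref{prop: cos_inv}, then $\mathbf{\Delta}^{(\text{diag})}_{\cF}=\delta\bfI_{n-s}$ and $\bfA_{\cF}^T\mathbf{1}_k=\delta\mathbf{1}_{n-s}$, a Rayleigh-quotient bound, and Cauchy interlacing to get $\lambda_{\max}(\mathbf{K})\le m\delta^2+c(m-1)\delta$. The differences are minor. The paper gets the spectral radius by applying Perron--Frobenius to $\bfA^T\bfA+c(m-1)\delta\bfI_n$ with eigenvector $\mathbf{1}_n$, rather than going through $\bfA\bfA^T=m\bfC\bfC^T$. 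It also explicitly checks that $\mathbf{K}$ is positive definite, which your Rayleigh-quotient step uses without saying so: for $m>1$ this follows from the $c(m-1)\delta\bfI_{n-s}$ term, and for $m=1$ from $\bfA_{\cF}$ having full column rank.
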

\begin{proof}

By Proposition \ref{prop: cos_inv}, $\bfB$ is almost surely invertible. Consequently, for any $\cF \subseteq [n]$ of size $n-s$, $\mathbf{B}_{\mathcal{F}}^T\mathbf{B}_{\mathcal{F}}$ is invertible and thus we can apply Theorem $\ref{thm:random_diag}$. 
Note that in this case $n = mk$. We now calculate the maximum eigenvalue of $\mathbf{\Delta}_{[n]} + c(m-1)\mathbf{\Delta}^{(\text{diag})}_{[n]}$. Note that $\mathbf{\Delta}_{[n]} = \bfA^T \bfA$ by definition and since each column of $\bfA$ has $\delta$ nonzero entries that are $1$, $\bfA^T \mathbf{1}_k = \delta \mathbf{1}_n$ and $\mathbf{\Delta}^{(\text{diag})}_{[n]} = \delta \mathbf{I}_n$. Also, $\bfA \mathbf{1}_n = m\delta \mathbf{1}_k$. We have, 

\begin{align*}
(\mathbf{\Delta}_{[n]} + c(m-1)\mathbf{\Delta}^{(\text{diag})}_{[n]})\mathbf{1}_n &= (\mathbf{A}^T\mathbf{A} + c(m-1)\delta \mathbf{I}_n) \mathbf{1}_n\\
&= \mathbf{A}^T\mathbf{A} \mathbf{1}_n + c(m-1) \delta \mathbf{1}_n\\
&= \mathbf{A}^T m\delta \mathbf{1}_k + c(m-1) \delta \mathbf{1}_n\\
& = m\delta^2 \mathbf{1}_n + c(m-1)\delta \mathbf{1}_n\\
&= (m\delta^2 + c(m-1) \delta) \mathbf{1}_n.
\end{align*}

So, $\mathbf{1}_n$ is an eigenvector of $\mathbf{\Delta}_{[n]} + c(m-1)\mathbf{\Delta}^{(\text{diag})}_{[n]}$ with eigenvalue $m\delta^2 + c(m-1)\delta$. Since $\mathbf{\Delta}_{[n]} + c(m-1)\mathbf{\Delta}^{(\text{diag})}_{[n]}$ is a matrix with non-negative entries, by a consequence of the Perron–Frobenius theorem for nonnegative matrices (see Theorem 8.3.4 in \cite{horn2012matrix}) 
$\lambda_{max}\left(\mathbf{\Delta}_{[n]} + c(m-1)\mathbf{\Delta}^{(\text{diag})}_{[n]}\right) = m\delta^2 + c(m-1)\delta$. 
By Cauchy's interlacing theorem (Theorem 4.3.17 in \cite{horn2012matrix}), 
\begin{align*}
\lambda_{max}\left(\bfK = \mathbf{\Delta}_{\mathcal{F}} + c(m-1)\mathbf{\Delta}^{(\text{diag})}_{\mathcal{F}}\right) \le m\delta^2 + c(m-1)\delta.
\end{align*}

For $m>1$, since $\mathbf{\Delta}_{\cF}$ is positive semidefinite and $c(m-1) \mathbf{\Delta}^{(\text{diag})}_{\cF} = c(m-1) \delta \bfI_{n-s}$ is positive definite, $\bfK$ is positive definite. For $m=1$, $\mathbf K=\mathbf{\Delta}_{\mathcal F}$, which is
positive definite since $\mathbf A_{\mathcal F}$ has full column rank.
Hence, $\mathbf K$ is invertible. Consequently, 
\begin{equation}\label{eq:lambda_min}
\lambda_{min}\left(\bfK^{-1}\right) \ge \frac{1}{m\delta^2 + c(m-1)\delta}.
\end{equation}

By Theorem \ref{thm:random_diag}, 
\begin{align*}
\mathbb{E}[\text{Err}_{\mathcal{F}}(\mathbf{B})] &\le mk - m\mathbf{1}_k^T \bfA_{\cF} \left({\mathbf{K}}\right)^{-1} \bfA_{\cF}^T\mathbf{1}_k\\
& = mk - m\delta^2(n-s)\frac{\mathbf{1}_{n-s}^T}{\sqrt{n-s}}\left({\mathbf{K}}\right)^{-1} \frac{\mathbf{1}_{n-s}}{\sqrt{n-s}}\\
&\le  mk - m\delta^2(n-s)\lambda_{min}\left({\mathbf{K}}^{-1}\right)\\
& \le mk - \frac{m\delta^2(n-s)}{m\delta^2 + c(m-1)\delta} = mk - \frac{m\delta^2(mk-s)}{m\delta^2 + c(m-1)\delta}.
\end{align*} 

Here, the second inequality follows from  Rayleigh Quotient Theorem (Theorem 4.2.2 in \cite{horn2012matrix}). The third inequality follows from $\eqref{eq:lambda_min}$.
\end{proof}

\remark{For the coset bipartite graph construction, $\epsilon>0$ ensures that the diagonal entries of $\mathbf{D}_i$ are continuously distributed, which is needed to establish almost-sure invertibility in Proposition~1; consequently, $c>1$. Under the parameter settings of Proposition~1, the encoding matrix is almost surely invertible, so the approximation error is zero in the absence of stragglers even when $m>1$, and the exact gradient can be recovered.}



\remark{The derived upper bounds illustrate the tradeoff between communication efficiency and approximation accuracy. For the BIBD and SRG constructions, with the remaining parameters fixed, each upper bound is the difference between the linearly increasing term $mk$ and a term that converges to the finite limit $\delta(n-s)$ as $m$ increases. Consequently, the upper bounds increase approximately linearly with $m$ for large $m$, indicating a weaker approximation guarantee. This behavior is expected, since increasing $m$ reduces the length of the vector transmitted by each worker ($d/m$), thereby improving communication efficiency at the expense of approximation accuracy. For the coset bipartite graph based construction, the number of workers $n=mk$, so varying $m$ also changes the number of workers. However, for fixed $k$, $s$, $\delta$, and $c$ the corresponding upper bound also grows linearly with $m$ for large $m$, implying a similar communication--accuracy tradeoff.}

A direct comparison of the three upper bounds under identical system
parameters is generally not possible. In particular, the SRG based
construction uses an $n\times n$ assignment matrix and hence requires
$k=n$, whereas the coset bipartite graph based construction requires $n=mk$.
Therefore, for $m>1$, these two constructions cannot be instantiated
with common values of $n$ and $k$. Furthermore, a fair comparison would
require the computation load $\delta$ to be fixed, while the existence
conditions discussed in Section~IV-B may prevent a common admissible
parameter choice across all three structures. Accordingly, a direct
parameter-matched comparison of all three upper bounds is generally not
possible.

\subsection{Discussion about condition number of the recovery system of equations}
For the BIBD and SRG based constructions, explicit upper bounds on $\kappa(\mathbf{B}_{\mathcal{F}}^T\mathbf{B}_{\mathcal{F}})$ are provided in Appendix \ref{sec:appendix_cond_number}. These bounds show that in these cases the condition number is low for many practical cases of interest. Thus, we have numerically stable gradient recovery at the PS.

For the construction above based on coset bipartite graph, however, a similar condition number bound is hard to obtain when considering post-multiplication by random diagonal matrices. Instead for this case, we consider post-multiplication by diagonal matrices that are obtained from columns of Hadamard matrices.

\begin{defn} {[{\it Hadamard matrix}]}
A Hadamard matrix of order $\ell$ is a matrix
\[
\mathbf{H}_\ell \in \{\pm 1\}^{\ell\times \ell}
\]
whose rows are pairwise orthogonal. Equivalently,
\[
\mathbf{H}_\ell \mathbf{H}_\ell^{T} = \mathbf{H}_\ell^{T} \mathbf{H}_\ell = \ell I_\ell.
\]    
For $\ell>2$, a necessary condition for the existence of a Hadamard matrix
is that $\ell$ be divisible by $4$. Hadamard matrices are known to exist for many
orders, including every order which is a power-of-2.
\end{defn}
To obtain a construction based on coset bipartite graph for which the condition number can be bounded, consider a $(k,m,\delta)$ coset bipartite graph such that $k = p^a$, where $p$ is a prime, $a \in \mathbb{Z}_{\ge 1}$ and $p \nmid \delta$. Its bi-adjacency matrix can be written as $\mathbf{A}=[\mathbf{C}\ \cdots\ \mathbf{C}],$ where $\mathbf{C}\in\mathbb{R}^{k\times k}$ is circulant. Let $\mathbf{H}_m$ be the Hadamard matrix of order $m$, and define the encoding matrix,
\begin{equation} \label{eq:hada_coset_cons} \mathbf{B}=\mathbf{H}_m^T\otimes\mathbf{C}. \end{equation} 
For this construction, using the Cauchy interlacing theorem, we can conclude that $\kappa(\mathbf{B}_{\mathcal{F}}^T\mathbf{B}_{\mathcal{F}}) \le \kappa(\bfB^T\bfB)$. Furthermore, $\kappa(\bfB^T\bfB) = \kappa(\bfH_m\bfH^T_m) \kappa (\bfC^T\bfC) =  \kappa (\bfC^T\bfC)$.

\remark{For example, consider the $(27,2,5)$ coset bipartite graph with generating set $B = \{0,1,4,9,15\}$. For the corresponding circulant matrix $\mathbf{C}$,
$
\kappa(\mathbf{C}^T\mathbf{C}) \approx 110.49 $. Therefore, for the construction in \eqref{eq:hada_coset_cons},
$
\kappa(\mathbf{B}_{\mathcal{F}}^T\mathbf{B}_{\mathcal{F}})
\leq 110.5
$, for every non-straggler set $\mathcal{F}$.}

\section{Construction based on Hadamard product with null-space constraints}
\label{sec:construction_RHPNSC}

We now propose another construction based on bi-adjacency matrices of bipartite graphs or incidence matrices of BIBDs that achieves zero approximation error in the case of zero stragglers. 




In the following construction, let $\mathbf{A}$ be either the $k\times n$ bi-adjacency matrix of a bi-regular bipartite graph or the incidence matrix of a $(k,n,\gamma,\delta,\lambda)$ BIBD,\footnote{This differs from Section \ref{sec:construction_RDM}, where we considered a $(n,k,\gamma,\delta,\lambda)$ BIBD and took $\mathbf{A}$ to be the transpose of its $n\times k$ incidence matrix.} where $n=mk$. Thus, in the BIBD case, the design has $k$ points and $n$ blocks, and its incidence matrix has dimensions $k\times n$. Note that, for $m>1$, we have $k<n$, so $\mathbf{A}$ has more columns than rows.


Let $\bfC \circ \bfD$ denote the Hadamard product (element-wise product) of matrices $\bfC$ and $\bfD$ of the same dimension. At a top-level, the construction proceeds by first selecting a vector $\bfv_1 \in \mathbb{R}^n$ with no zero entries. Following this, we first construct a matrix $\bfA_1$ by setting its $i$-th row, $\bfA_1(i,:)$ to be the Hadamard product $\bfA(i,:) \circ \bfv_1$ and then normalizing it by $||\bfA_1(i,:)||_2^2$. Since $\bfA_1$ has a non-trivial null space (since $k < n$ for $m>1$), we then choose $\bfv_2$ from $\text{Null}(\bfA_1)$ with no zero entries and construct $\bfA_2$ in the same manner. Thus, at the $j$-th iteration, $j \leq m$ we choose $\bfv_j$ with no zero entries such that $ \bfv_j \in \text{Null}(\bfA_1) \cap \dots \cap \text{Null}(\bfA_{j-1})$ (the corresponding null space is nontrivial since $k(j-1) < n$) and construct $\bfA_j$ in a similar manner. At the end of this process, we set the encoding matrix $\bfB$ by vertically stacking the $\bfA_i$'s so that

\begin{equation}\label{eq:recons}
\bfB^T = [\bfA_1^T ~|~ \dots ~|~\bfA_m^T].    
\end{equation}

The requirement that every entry of $\mathbf{v}_j$ be nonzero ensures that
$
\left\|\mathbf{v}_j(\mathcal{H}_i)\right\|_2^2 > 0$ for all  $i \in [k]$, where $\mathcal{H}_i =  \text{supp}(\mathbf{A}(i,:))$. Consequently, the normalization used in the construction is well-defined. Moreover, this condition ensures that 
$\text{supp}(\mathbf{A}_i)
=
\text{supp}(\mathbf{A})$ for all $ i \in [m]$. At the $j$-th iteration, the vector $\mathbf{v}_j$ is selected from the null space of the matrix obtained by vertically stacking $\mathbf{A}_1,\ldots,\mathbf{A}_{j-1}$ (see  Algorithm  \ref{alg:example} for a formal description).

We emphasize that although the rank-nullity theorem guarantees that this null space is nontrivial, it does not, in general, guarantee the existence of a null-space vector with all entries nonzero. Accordingly, the construction is stated under the assumption that such $\bfv_i$'s can be found. There are examples of assignment matrices $\mathbf{A}$, where this can be guaranteed; we discuss this in the upcoming Proposition~\ref{prop:coset-full-support}.


\begin{example}
Consider a $(4,2,1,2)$ bi-regular bipartite graph. The bi-adjacency matrix is 
\[
\mathbf{A}=
\begin{bmatrix}
1&0&1&0\\
0&1&0&1
\end{bmatrix}.
\]
Choose $\mathbf{v}_1=[1\ 1\ 1\ 1]^T$. Algorithm~\ref{alg:example} gives
\[
\mathbf{A}_1=\frac{1}{2}
\begin{bmatrix}
1&0&1&0\\
0&1&0&1
\end{bmatrix}.
\]
Next, choose $\mathbf{v}_2=[1\ 1\ {-1}\ {-1}]^T$. Since
$\mathbf{A}_1\mathbf{v}_2=\mathbf{0}_2$, we have
$\mathbf{v}_2\in\operatorname{Null}(\mathbf{A}_1)$, and hence
\[
\mathbf{A}_2=\frac{1}{2}
\begin{bmatrix}
1&0&-1&0\\
0&1&0&-1
\end{bmatrix}.
\]
Therefore,
\[
\mathbf{B}
=
\begin{bmatrix}
\mathbf{A}_1\\
\mathbf{A}_2
\end{bmatrix}
=
\frac{1}{2}
\begin{bmatrix}
1&0&1&0\\
0&1&0&1\\
1&0&-1&0\\
0&1&0&-1
\end{bmatrix},
\]
with
$\text{supp}(\mathbf{A}_1)
=\text{supp}(\mathbf{A}_2)
=\text{supp}(\mathbf{A})$.
\end{example}

Our scheme allows for exact gradient recovery when $s=0$. Furthermore, we provide an upper bound on $\operatorname{Err}_{\mathcal{F}}(\mathbf{B})$ that can be computed numerically and analyzed for certain choices of the vectors $\mathbf{v}_j$. These results are summarized in the following lemma.

\begin{algorithm}[t!]
\caption{Algorithm to construct $\mathbf{B}$}
\label{alg:example}
\begin{algorithmic}[1] 
\REQUIRE Assignment matrix $\mathbf{A} \in \mathbb{R}^{k \times n}$ such that $n = mk$, communication reduction factor $m >1$.
\ENSURE Matrices $\mathbf{A}_1, \mathbf{A}_2, \dots, \mathbf{A}_m \in \mathbb{R}^{k \times n}$ such that for any $i \in [m]$, $\text{supp}(\mathbf{A}_i) = \text{supp}(\mathbf{A})$.
\STATE Initialize $\mathbf{A}_1, \dots, \mathbf{A}_m$ as zero matrices.  
\STATE Pick a vector $\mathbf{v}_1 \in \mathbb{R}^n$ such that $\mathbf{v}_1(u) \ne 0$ for every $u \in [n]$.
\FOR{$i = 1$ to $k$}
 \STATE $\mathcal{H} \gets \text{supp}(\mathbf{A}(i,:))$.
 \STATE $\mathbf{A}_1(i, \mathcal{H}) \gets \frac{(\mathbf{v}_1(\mathcal{H}))^T}{\|\mathbf{v}_1(\mathcal{H})\|_2^2}$.
\ENDFOR
\FOR{$j = 2$ to $m$}
\STATE $\mathbf{H} \gets \begin{bmatrix} \mathbf{A}_1 \\ \vdots \\ \mathbf{A}_{j-1}\end{bmatrix}$.
\STATE Pick a vector $\mathbf{v}_j \in \mathbb{R}^n$ such that $\mathbf{v}_j \in \text{Null}(\mathbf{H})$ and $\mathbf{v}_j(u) \ne 0$ for every $u \in [n]$. 
\FOR{$i = 1$ to $k$}
 \STATE $\mathcal{H} \gets \text{supp}(\mathbf{A}(i,:))$.
 \STATE $\mathbf{A}_j(i, \mathcal{H}) \gets\frac{(\mathbf{v}_j(\mathcal{H}))^T}{\|\mathbf{v}_j(\mathcal{H})\|_2^2}$.
\ENDFOR
\ENDFOR

\RETURN $\mathbf{A}_1, \mathbf{A}_2, \dots, \mathbf{A}_m$.
\end{algorithmic}
\end{algorithm}

\begin{lemma}
\label{lemma:second_cons}
The construction that uses Algorithm 1 recovers the exact gradient when $s = 0$. For a given set of non-stragglers corresponding to $\mathcal{F}$, let $\mathbf{\Sigma}_{\mathcal{F}} := \mathbf{B}_{\mathcal{F}}^T\mathbf{B}_{\mathcal{F}}$, and $\tilde{\mathbf{\Sigma}}_{\mathcal{F}}$ be a $(n-s) \times (n-s)$ diagonal matrix such that for $u \in [n-s]$, 
 $\tilde{\mathbf{\Sigma}}_{\mathcal{F}}(u,u) := \mathbf{\Sigma}_{\mathcal{F}}(u,u) + \sum_{j \ne u} |\mathbf{\Sigma}_{\mathcal{F}}(u,j)|$. Suppose that $\mathbf{\Sigma}_{\mathcal{F}}$ is invertible. Then,
\begin{equation}\label{eq:bounddiagdom}
\text{Err}_{\mathcal{F}}(\mathbf{B}) \le mk - \sum_{i = 1}^m \mathbf{1}_k^T {\mathbf{A}_i}_{\mathcal{F}} (\tilde{\mathbf{\Sigma}}_{\mathcal{F}})^{-1} {\mathbf{A}_i}_{\mathcal{F}}^T \mathbf{1}_k.  
\end{equation}
\end{lemma}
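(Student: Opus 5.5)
For the exact-recovery claim with $s=0$, I will exhibit an explicit decoding matrix $\mathbf{R}=[\mathbf{v}_1|\dots|\mathbf{v}_m]$ and verify $\mathbf{B}\mathbf{R}=\mathbf{F}$. By the normalization in Algorithm~\ref{alg:example}, the $i$-th row of $\mathbf{A}_j$ is $\mathbf{v}_j(\mathcal{H}_i)^T/\|\mathbf{v}_j(\mathcal{H}_i)\|_2^2$ on its support $\mathcal{H}_i$. Hence $\mathbf{A}_j(i,:)\mathbf{v}_j=1$ for every $i$, so $\mathbf{A}_j\mathbf{v}_j=\mathbf{1}_k$. By construction $\mathbf{v}_j\in\text{Null}(\mathbf{A}_\ell)$ for all $\ell<j$, so $\mathbf{A}_\ell\mathbf{v}_j=\mathbf{0}_k$ for $\ell<j$.

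The remaining case $\ell>j$ is the one point needing care, since the construction imposes null-space conditions only in one direction. Here $\mathbf{A}_\ell\mathbf{v}_j=\mathbf{0}$ is not given directly. I would therefore replace this $\mathbf{R}$ by a lower-triangular correction: take $\mathbf{r}_m=\mathbf{v}_m$, and define $\mathbf{r}_j$ for $j<m$ by subtracting suitable combinations of later $\mathbf{v}_\ell$'s. Note that $\mathbf{B}[\mathbf{v}_1|\dots|\mathbf{v}_m]$ is block upper triangular with identity-like diagonal blocks $\mathbf{1}_k$ in the $\mathbf{f}$-pattern. More cleanly, I would argue that $\mathbf{f}_j\in\text{col}(\mathbf{B})$ by back-substitution. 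Then $\mathbf{B}\mathbf{v}_j=\mathbf{f}_j+\sum_{\ell>j}\mathbf{e}^{(\ell)}$, where $\mathbf{e}^{(\ell)}$ is supported on block $\ell$. I would express these block-$\ell$ residuals through $\mathbf{B}\mathbf{v}_\ell$ scaled entrywise, which requires the residual to be proportional to $\mathbf{1}_k$.

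Since that is not automatic, the first thing I would try is to show $\text{Err}_{[n]}(\mathbf{B})=0$ via invertibility. The matrix $\mathbf{B}$ is $mk\times mk$, and $\mathbf{B}[\mathbf{v}_1|\cdots|\mathbf{v}_m]$ has full column rank with block-triangular structure. If that is insufficient, I would fall back on exhibiting $\mathbf{R}$ and checking it directly. I expect this step to be the main obstacle.

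For the bound \eqref{eq:bounddiagdom}, I would start from \eqref{eq:opt_dec_error}. Since $\mathbf{B}_{\mathcal{F}}^T\mathbf{f}_i={\mathbf{A}_i}_{\mathcal{F}}^T\mathbf{1}_k$ and $\mathbf{f}_i^T\mathbf{f}_i=k$, this gives $\text{Err}_{\mathcal{F}}=mk-\sum_i \mathbf{1}_k^T{\mathbf{A}_i}_{\mathcal{F}}\mathbf{\Sigma}_{\mathcal{F}}^{-1}{\mathbf{A}_i}_{\mathcal{F}}^T\mathbf{1}_k$. It then suffices to show $\mathbf{\Sigma}_{\mathcal{F}}^{-1}\succeq\tilde{\mathbf{\Sigma}}_{\mathcal{F}}^{-1}$, which is equivalent to $\tilde{\mathbf{\Sigma}}_{\mathcal{F}}\succeq\mathbf{\Sigma}_{\mathcal{F}}$ because inversion is operator monotone decreasing on positive definite matrices. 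The matrix $\tilde{\mathbf{\Sigma}}_{\mathcal{F}}-\mathbf{\Sigma}_{\mathcal{F}}$ is symmetric, has diagonal entries $\sum_{j\neq u}|\mathbf{\Sigma}_{\mathcal{F}}(u,j)|$, and has off-diagonal entries $-\mathbf{\Sigma}_{\mathcal{F}}(u,j)$. It is therefore weakly diagonally dominant with nonnegative diagonal, hence positive semidefinite by Gershgorin.

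Finally, $\tilde{\mathbf{\Sigma}}_{\mathcal{F}}\succeq\mathbf{\Sigma}_{\mathcal{F}}\succ 0$ gives invertibility of $\tilde{\mathbf{\Sigma}}_{\mathcal{F}}$, and summing over $i$ yields \eqref{eq:bounddiagdom}.
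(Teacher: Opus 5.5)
\textbf{Verdict: genuine gap in the exact-recovery part.} Your proof of the bound \eqref{eq:bounddiagdom} is correct and is the paper's argument. You show $\tilde{\mathbf{\Sigma}}_{\mathcal{F}}-\mathbf{\Sigma}_{\mathcal{F}}$ is diagonally dominant, hence positive semidefinite, so $\tilde{\mathbf{\Sigma}}_{\mathcal{F}}^{-1}\preceq\mathbf{\Sigma}_{\mathcal{F}}^{-1}$, and substitute this into \eqref{eq:opt_dec_error}. The gap is in the first claim, exact recovery at $s=0$. You leave the case $\ell>j$ open, and the routes you sketch do not close it.

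\textbf{The missing idea.} The null-space constraint is automatically symmetric. Every $\mathbf{v}_j$ has full support, so all the $\mathbf{A}_\ell$ share the row supports $\mathcal{H}_u=\text{supp}(\mathbf{A}(u,:))$. Hence
\[
(\mathbf{A}_\ell\mathbf{v}_j)(u)=\frac{\mathbf{v}_\ell(\mathcal{H}_u)^T\mathbf{v}_j(\mathcal{H}_u)}{\|\mathbf{v}_\ell(\mathcal{H}_u)\|_2^2},\qquad u\in[k].
\]
So $\mathbf{A}_\ell\mathbf{v}_j=\mathbf{0}_k$ holds if and only if $\mathbf{v}_\ell(\mathcal{H}_u)^T\mathbf{v}_j(\mathcal{H}_u)=0$ for every $u$, and that condition is symmetric in $\ell$ and $j$. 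The construction gives $\mathbf{A}_j\mathbf{v}_\ell=\mathbf{0}_k$ for $j<\ell$, which therefore yields $\mathbf{A}_\ell\mathbf{v}_j=\mathbf{0}_k$ for $\ell>j$. Consequently $\mathbf{B}\mathbf{v}_j=\mathbf{f}_j$ exactly, and $\mathbf{R}=[\mathbf{v}_1|\cdots|\mathbf{v}_m]$ works with no correction. This is precisely the step in the paper's proof.

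\textbf{Why your fallbacks fail.}
\begin{itemize}
\item Invertibility of $\mathbf{B}$ does not follow from the construction. Suppose two rows of $\mathbf{A}$ have the same support, i.e., two data subsets are assigned to exactly the same workers. Then every $\mathbf{A}_j$ has two identical rows, so $\mathbf{B}$ is singular, yet the lemma's conclusion still holds.
\item Full column rank of the $mk\times m$ matrix $\mathbf{B}[\mathbf{v}_1|\cdots|\mathbf{v}_m]$ says nothing about the rank of the $mk\times mk$ matrix $\mathbf{B}$.
\item Back-substitution needs each residual $\mathbf{A}_\ell\mathbf{v}_j$, $\ell>j$, to be a multiple of $\mathbf{1}_k$. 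You could not establish this, and it holds only because these residuals are zero, which is exactly the symmetry above.
\end{itemize}
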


\begin{proof}
 Let $\mathbf{v}_1, \mathbf{v}_2, \dots, \mathbf{v}_m \in \mathbb{R}^n$ be vectors as generated by algorithm 1. For any $j \in [m]$ and $u\in [k]$, let $\text{supp}(\mathbf{A}_j(u, :)) = \mathcal{H}_u$. We emphasize that $\mathcal{H}_u$ is the same for all $\bfA_j$. Now, 

\begin{align*}(\mathbf{A}_j\mathbf{v}_j)(u) = \mathbf{A}_j(u,:)\mathbf{v}_j &= \mathbf{A}_j(u,\mathcal{H}_u)  \mathbf{v}_j({\mathcal{H}_u})\\
&= \frac{\mathbf{v}_j(\mathcal{H}_u)^T \mathbf{v}_j(\mathcal{H}_u)}{\|\mathbf{v}_j(\mathcal{H}_u)\|_2^2} = 1.
\end{align*}
So, $\mathbf{A}_j\mathbf{v}_j = \mathbf{1}_k$ for all $j \in [m]$. 
Note that by construction, $\bfv_j \in \cap_{i=1}^{j-1} \text{Null}(\bfA_i)$ (Step 9 of Algorithm \ref{alg:example}). Next we show that $\bfv_j \in \text{Null}(\bfA_{j+1}) \cap \dots \cap \text{Null}(\bfA_{m})$.

Consider $i$ such that $j < i \leq m$. Now,
\begin{align*}(\mathbf{A}_i\mathbf{v}_j)(u) = \mathbf{A}_i(u,:)\mathbf{v}_j &= \mathbf{A}_i(u,\mathcal{H}_u)  \mathbf{v}_j({\mathcal{H}_u})\\
&= \frac{\mathbf{v}_i(\mathcal{H}_u)^T \mathbf{v}_j(\mathcal{H}_u)}{\|\mathbf{v}_i(\mathcal{H}_u)\|_2^2}.
\end{align*}

We note here that by construction $\bfv_i \in \text{Null}(\bfA_j)$ so that  $\mathbf{A}_j\mathbf{v}_i = \mathbf{0}_k$. However, this means that for $u \in [k]$, we have $(\mathbf{A}_j\mathbf{v}_i)(u) = 0 \implies \frac{\mathbf{v}_j(\mathcal{H}_u)^T \mathbf{v}_i(\mathcal{H}_u)}{\|\mathbf{v}_j(\mathcal{H}_u)\|_2^2 } = 0$. This in turn implies that $\mathbf{v}_i(\mathcal{H}_u)^T \mathbf{v}_j(\mathcal{H}_u) = 0$ so that we can conclude that $\bfA_i \bfv_j = 0_k$ for $j < i \leq m$. It follows that $\bfB \bfv_i = \mathbf{f}_i$ for $i \in [m]$, i.e., the exact gradient recovery is possible when $s=0$.

Now we prove the upper bound on $\text{Err}_{\mathcal{F}}(\mathbf{B})$. Let $\mathbf{Q} := \tilde{\mathbf{\Sigma}}_{\mathcal{F}} - \mathbf{\Sigma}_{\mathcal{F}}$. $\bfQ$ is evidently Hermitian. Now, for any $u \in [n-s]$, we have
\begin{align*}
&\mathbf{Q}(u,u) = \tilde{\mathbf{\Sigma}}_{\mathcal{F}}(u,u) - \mathbf{\Sigma}_{\mathcal{F}}(u,u)\\ 
&= \mathbf{\Sigma}_{\mathcal{F}}(u,u) + \sum_{j \ne u} |\mathbf{\Sigma}_{\mathcal{F}}(u,j)| - \mathbf{\Sigma}_{\mathcal{F}}(u,u) \\
&=  \sum_{j \ne u} |\mathbf{\Sigma}_{\mathcal{F}}(u,j)|.
\end{align*}

On the other hand, for $j \ne u$ and $j \in [n-s]$,
\begin{align*}
\mathbf{Q}(u,j) &= -\mathbf{\Sigma}_{\mathcal{F}}(u,j) \text{, so that}\\
\sum_{j \ne u} |\mathbf{Q}(u,j)| &=  \sum_{j \ne u} |\mathbf{\Sigma}_{\mathcal{F}}(u,j)|.
\end{align*}

This shows that $\mathbf{Q}$ is a diagonally dominant matrix and hence positive semidefinite \cite{horn2012matrix}.


Thus, 
$\mathbf{Q} \succeq 0 \implies \tilde{\mathbf{\Sigma}}_{\mathcal{F}} - \mathbf{\Sigma}_{\mathcal{F}} \succeq 0 \implies \tilde{\mathbf{\Sigma}}_{\mathcal{F}} \succeq \mathbf{\Sigma}_{\mathcal{F}} \implies (\tilde{\mathbf{\Sigma}}_{\mathcal{F}})^{-1} \preceq (\mathbf{\Sigma}_{\mathcal{F}})^{-1}$ (see \cite{bhatia2009positive}). Finally,
\begin{align*}
\text{Err}_{\mathcal{F}}(\mathbf{B}) 
& = \sum_{i =1}^m \mathbf{f}_i^T \mathbf{f}_i - \mathbf{f}_i^T \bfB_{\cF}(\bfB_{\cF}^T\bfB_{\cF})^{-1}\bfB_{\cF}^T\mathbf{f}_i \\
& = mk -  \sum_{i =1}^m \mathbf{1}_k^T {\mathbf{A}_i}_{\mathcal{F}}(\mathbf{\Sigma}_{\mathcal{F}})^{-1}  {\mathbf{A}_i}_{\mathcal{F}}^T\mathbf{1}_k\\
& \le mk -  \sum_{i =1}^m \mathbf{1}_k^T {\mathbf{A}_i}_{\mathcal{F}}(\tilde{\mathbf{\Sigma}}_{\mathcal{F}})^{-1}  {\mathbf{A}_i}_{\mathcal{F}}^T\mathbf{1}_k.
\end{align*}
\end{proof}


\begin{remark}
    The matrix $\tilde{\mathbf{\Sigma}}_{\mathcal{F}}$ is a diagonal matrix and hence the bound in \eqref{eq:bounddiagdom} is easier to calculate as compared to calculating the actual least-squares error. The presented bound depends on the set $\cF$. However, it may be possible to arrive at weaker upper bounds that hold for all $\cF$ with $|\cF|=n-s$, by considering matrices $\tilde{\mathbf{\Sigma}}'_{\mathcal{F}} \succeq \tilde{\mathbf{\Sigma}}_{\mathcal{F}}$ and using them instead in the RHS of \eqref{eq:bounddiagdom}.
\end{remark}
It turns out that if we choose $\mathbf{A}$ as the bi-adjacency matrix of a $(k,m,\delta)$ coset bipartite graph, there is a choice of the $\mathbf{v}_i$'s so that the above construction is guaranteed to work. In particular, consider pairwise orthogonal vectors $\mathbf{q}_1, \dots, \mathbf{q}_m \in \mathbb{R}^{m}$ such that $\mathbf{q}_1 = \mathbf{1}_m$ and every entry of each $\mathbf{q}_j$ is nonzero. We choose $\mathbf{v}_i = \mathbf{q}_i \otimes \mathbf{1}_k$. This is stated as a proposition below; the proof appears in Appendix \ref{app:coset-full-support}. 
\begin{prop}
\label{prop:coset-full-support}
Suppose that $\mathbf{A}$ is the bi-adjacency matrix of a $(k,m,\delta)$ coset bipartite graph. Then, there exist vectors $\mathbf{v}_1,\ldots,\mathbf{v}_m\in\mathbb{R}^{mk}$ such that for $j \in [m]$ every entry of $\mathbf{v}_j$ is nonzero  and for $j \in \{ 2, \dots, m\}$,
\[
\mathbf{v}_j \in
\text{Null}\left(
\left[
\mathbf{A}_1^{T}\ \cdots\ \mathbf{A}_{j-1}^{T}
\right]^{T}
\right),\] 
where $\mathbf{A}_1,\ldots,\mathbf{A}_{j-1}$ are constructed from $\bfA$ according to Algorithm \ref{alg:example}.
\end{prop}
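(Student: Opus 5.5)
We take the explicit choice described just before the statement: pairwise orthogonal vectors $\mathbf{q}_1,\dots,\mathbf{q}_m\in\mathbb{R}^m$ with no zero entries and $\mathbf{q}_1=\mathbf{1}_m$, and set $\mathbf{v}_j=\mathbf{q}_j\otimes\mathbf{1}_k$. Such vectors exist. When a Hadamard matrix of order $m$ exists, take its columns, normalized so the first column is $\mathbf{1}_m$. In general, take any orthogonal matrix whose first column is $\mathbf{1}_m/\sqrt{m}$; we can rotate the remaining orthonormal basis of $\mathbf{1}_m^\perp$ by a generic orthogonal transformation of that subspace. The set of rotations producing some zero entry is a finite union of proper algebraic subsets, since each coordinate functional is nonzero on $\mathbf{1}_m^\perp$ when $m\ge 2$. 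A generic choice therefore has all entries nonzero. Every entry of $\mathbf{v}_j$ is an entry of $\mathbf{q}_j$, so $\mathbf{v}_j$ has full support.

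The core step is the null-space condition. We use the structure from the proof of Proposition~\ref{prop: cos_inv}: $\mathbf{A}=[\mathbf{C}\ \cdots\ \mathbf{C}]$. Hence for row $u$, the support $\mathcal{H}_u$ is a union $\bigcup_{t=0}^{m-1}(tk+T_u)$, where $T_u\subseteq\{0,\dots,k-1\}$ is the support of row $u$ of $\mathbf{C}$, with $|T_u|=\delta$. Thus the support has the same pattern $T_u$ in each of the $m$ column blocks. The entries of $\mathbf{v}_j$ on $\mathcal{H}_u$ are constant within each block: the value on block $t$ is $\mathbf{q}_j(t+1)$.

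Following Algorithm~\ref{alg:example}, $\mathbf{A}_i(u,\mathcal{H}_u)=\mathbf{v}_i(\mathcal{H}_u)^T/\|\mathbf{v}_i(\mathcal{H}_u)\|_2^2$. For $i<j$ this gives
\[
(\mathbf{A}_i\mathbf{v}_j)(u)=\frac{\mathbf{v}_i(\mathcal{H}_u)^T\mathbf{v}_j(\mathcal{H}_u)}{\|\mathbf{v}_i(\mathcal{H}_u)\|_2^2}=\frac{\delta\sum_{t=1}^m\mathbf{q}_i(t)\mathbf{q}_j(t)}{\delta\|\mathbf{q}_i\|_2^2}=\frac{\mathbf{q}_i^T\mathbf{q}_j}{\|\mathbf{q}_i\|_2^2}=0.
\]
So $\mathbf{v}_j\in\mathrm{Null}(\mathbf{A}_i)$ for every $i<j$, which is the required membership in the null space of the stacked matrix.

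One subtlety is that Algorithm~\ref{alg:example} defines $\mathbf{A}_i$ from $\mathbf{v}_i$, so the construction must be checked to be consistent. It is: we fix all $\mathbf{v}_i$ in advance, and each $\mathbf{A}_i$ depends only on $\mathbf{v}_i$ and $\mathbf{A}$. The normalization is well defined because $\mathbf{v}_i$ has full support. The main obstacle is the existence of full-support orthogonal $\mathbf{q}_j$ for arbitrary $m$, since Hadamard matrices need not exist. We handle it with the genericity argument above. The case $m=1$ is trivial.
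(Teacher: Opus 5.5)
Your proposal is correct and follows essentially the same route as the paper: choosing $\mathbf{v}_j=\mathbf{q}_j\otimes\mathbf{1}_k$ and using the block structure $\mathbf{A}=[\mathbf{C}\ \cdots\ \mathbf{C}]$ reduces $\mathbf{A}_i\mathbf{v}_j$ to a multiple of $\mathbf{q}_i^T\mathbf{q}_j=0$. The paper writes this in matrix form, $\mathbf{A}_\ell=\mathbf{R}_\ell\mathbf{A}\,\mathrm{diag}(\mathbf{v}_\ell)$, rather than row by row, and it obtains the full-support orthogonal $\mathbf{q}_j$ explicitly from a suitably scaled Householder matrix instead of your genericity argument. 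That matrix is the reflection sending $\mathbf{e}_1$ to $\mathbf{1}_m/\sqrt{m}$, whose entries are all nonzero for every integer $m\ge 2$; your genericity argument works as well.
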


Moreover, the vectors $\mathbf{q}_1,\ldots,\mathbf{q}_m$ can be scaled such that
$\|\mathbf{q}_j\|_2^2=m$ for all $j\in[m]$, while preserving their orthogonality
and nonzero entries. Since $\mathbf{A}=[\mathbf{C}\ \cdots\ \mathbf{C}]$ and each
row of $\mathbf{C}$ has $\delta$ nonzero entries, for
$\mathbf{v}_j=\mathbf{q}_j\otimes\mathbf{1}_k$,
$
\|\mathbf{v}_j(\mathcal{H}_i)\|_2^2=m\delta,
$
for all $i\in[k]$ and $j\in[m]$. Thus, Algorithm~1 gives
$
\mathbf{A}_j=\frac{1}{m\delta}
\big[q_j(1)\mathbf{C}\ \cdots\ q_j(m)\mathbf{C}\big].
$
Let $\mathbf{Q}=[\mathbf{q}_1\ \cdots\ \mathbf{q}_m]$, then the resulting encoding
matrix can be written as
$
\mathbf{B}=\frac{1}{m\delta}\left(\mathbf{Q}^T\otimes\mathbf{C}\right).
$
When $k=p^a$, where $p$ is prime, $a\in\mathbb{Z}_{\geq 1}$, and
$p\nmid\delta$, the matrix $\mathbf{C}$ is invertible. Since
$\mathbf{Q}\mathbf{Q}^T=m\mathbf{I}_m$, it follows that
$
\kappa(\mathbf{B}_{\mathcal{F}}^T\mathbf{B}_{\mathcal{F}})
\leq
\kappa(\mathbf{B}^T\mathbf{B})
=
\kappa(\mathbf{C}^T\mathbf{C}).
$

\section{Lower Bound}
The following lower bound applies to any communication-efficient approximate gradient coding scheme. 
\begin{thm}
\label{thm:lowerbound}
For any communication-efficient approximate gradient coding encoding matrix $\mathbf{B} \in \mathbb{R}^{mk \times n}$ with communication reduction factor $m$, number of stragglers $s$, computation load at most $\delta$, number of workers $n$ and number of partitions $k$, there exists a non-straggler set $\mathcal{C} \subseteq [n]$ of size $n-s$ such that,
\begin{equation}\label{eq:lowerbound}
\text{Err}_{\mathcal{C}}(\mathbf{B})  
\ge \max_{u \in [m]} \min \left\{k,{\floor{\frac{k(s+m-u)}{n\delta}}} \right \}u.
\end{equation}


\end{thm}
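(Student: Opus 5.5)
Fix $u \in [m]$ and set $t := \min\{k, \floor{k(s+m-u)/(n\delta)}\}$. The goal is a non-straggler set $\mathcal{C}$ of size $n-s$ with $\text{Err}_{\mathcal{C}}(\mathbf{B}) \ge tu$; taking the best $u$ then gives \eqref{eq:lowerbound}. The strategy is adversarial: choose a set $\mathcal{T}\subseteq[k]$ of $t$ data subsets and straggle essentially every worker that touches them. The few workers touching $\mathcal{T}$ that cannot be straggled should contribute at most $m-u$ useful dimensions. Then, for each $j\in\mathcal{T}$, at least $u$ of the $m$ target columns $\mathbf{f}_i$ cannot be matched on the coordinates $(i-1)k+j$.

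\textbf{Step 1 (choosing $\mathcal{T}$ by averaging).} Each worker touches at most $\delta$ subsets, so $\sum_{j\in[k]} |\{\text{workers assigned } \calD_j\}| \le n\delta$. Let $\gamma_j$ be the number of workers assigned $\calD_j$, and sort the subsets by increasing $\gamma_j$. The $t$ least-replicated subsets are touched by a union $\mathcal{W}$ of at most $tn\delta/k \le s+m-u$ workers. This is where the floor in the statement comes from, and it is the key counting step.

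\textbf{Step 2 (constructing $\mathcal{C}$).} If $|\mathcal{W}|\le s$, straggle all of $\mathcal{W}$ and pad the straggler set arbitrarily to size $s$. Otherwise straggle $s$ workers of $\mathcal{W}$, so that at most $m-u$ workers of $\mathcal{W}$ survive in $\mathcal{C}$. Restrict $\mathbf{B}\mathbf{R}-\mathbf{F}$ to the $mt$ rows indexed by $\{(i-1)k+j: i\in[m], j\in\mathcal{T}\}$. On these rows, $\mathbf{B}_{\mathcal{C}}$ is nonzero only in the surviving columns of $\mathcal{W}$, because $\text{supp}(\bfA_i)\subseteq\text{supp}(\bfA)$.

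\textbf{Step 3 (rank bound).} Let $\mathbf{P}$ denote this restriction of $\mathbf{B}_{\mathcal{C}}$, of size $mt\times(\le m-u)$. The restricted target $\mathbf{F}_{\mathcal{T}}$ equals $\mathbf{I}_m\otimes\mathbf{1}_t$, which has $m$ nonzero singular values, each equal to $\sqrt t$. The error is at least $\min_{\mathbf{R}}\|\mathbf{P}\mathbf{R}-\mathbf{F}_{\mathcal{T}}\|_F^2$, and $\mathbf{P}\mathbf{R}$ has rank at most $m-u$. By Eckart--Young, this minimum is at least the sum of the $u$ smallest squared singular values of $\mathbf{F}_{\mathcal{T}}$, namely $ut$. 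When $t=k$, the bound $ku$ follows in the same way.

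\textbf{Main obstacle.} The hardest part is Step 1: showing that the floor expression really bounds $t$ so that $|\mathcal{W}|\le s+m-u$. A subset with zero replication must be handled. Careful rounding is also needed: the average over the $t$ smallest $\gamma_j$ is at most $n\delta/k$, giving $|\mathcal{W}|\le t\,n\delta/k\le s+m-u$ by the definition of $t$. I would first verify this inequality chain; the rest is routine linear algebra.
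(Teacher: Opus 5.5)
Your proposal is correct and follows essentially the paper's proof: the $t$ least-replicated subsets are chosen by averaging, the straggler set is chosen adversarially so that at most $m-u$ surviving workers touch them, and the bound comes from a rank argument via Eckart--Young. The only difference is cosmetic. The paper splits the error into per-subset terms $\|\mathbf{B}^{(j)}\mathbf{R}-\mathbf{I}_m\|_F^2$ and applies Eckart--Young to each $m\times m$ identity, whereas you apply it once to the stacked target $\mathbf{I}_m\otimes\mathbf{1}_t$, whose singular values all equal $\sqrt{t}$; both yield the same bound $tu$.
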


\label{sec:lowerbound_proof}
\begin{proof}
Our proof leverages the basic ideas of \cite{dimakis_cyclic_mds}.
 Let $G = (\mathcal{W} \cup \mathcal{D}, \mathcal{E})$ be a bipartite graph where vertex set $\mathcal{W} = \{W_1 , \dots, W_n\}$ corresponds to the set of workers and vertex set $\mathcal{D} = \{ \mathcal{D}_1, \dots , \mathcal{D}_k\}$ corresponds to the set of data subsets. Also, let $(W_i,\mathcal{D}_j) \in \mathcal{E}$ if and only if the worker $W_i$ is assigned the data subset $\mathcal{D}_j$. For a vertex $v$ in graph $G$, let $\text{deg}(v)$ denote its degree. Without loss of generality, let us assume that $\deg(\mathcal{D}_1) \le \deg(\mathcal{D}_2) \le \dots \le \deg(\mathcal{D}_{k})$. Let $d_{j} := \frac{1}{j} \sum_{i = 1}^j \deg(\mathcal{D}_i)$, for any $j \in [k]$. Since each vertex in $\mathcal{W}$ has degree at most $\delta$, it follows that $d_1 \le d_2 \le \dots \le d_{k} \le \frac{n \delta}{k}$. For a vertex set $\mathcal{V}$ in $G$, let $N(\mathcal{V})$ denote the set of vertices that are adjacent to the vertices in $\mathcal{V}$. It follows that for $j \in [k]$, there exists a set of data subsets $\mathcal{P}_{j} \triangleq \{\mathcal{D}_1, \dots, \mathcal{D}_j\}$ of size $j$ for which $|N(\mathcal{P}_j)| \le \sum_{i = 1}^j \deg(\mathcal{D}_i) = jd_j \le {}\frac{jn \delta}{k}$. Fix $u \in [m]$,  and let
$
j_u:=\min\left\{k,
\left\lfloor\frac{k(s+m-u)}{n\delta}\right\rfloor\right\}$. Let $\mathcal{P}_0:=\emptyset$. Also, let $\mathcal{Q}_u:=\mathcal{P}_{j_u}$. Then,
$
|\mathcal{N}(\mathcal{Q}_u)|
\leq \frac{j_u n\delta}{k}
\leq s+m-u.
$
Thus, the data subsets in $\mathcal{Q}_u$ are assigned to at most
$s+m-u$ workers.

Let $\mathcal{S}$ be the set of stragglers, so $|\mathcal{S}| = s$. If $|N(\mathcal{Q}_u)| \ge s$, then choose $\mathcal{S}$ such that $\mathcal{S} \subseteq N(\mathcal{Q}_u)$. In this case, the data subsets in the set $\mathcal{Q}_u$ are assigned to at most $m-u$ non-stragglers. Otherwise, choose $\mathcal{S}$ such that $N(\mathcal{Q}_u)$ is contained in $\mathcal{S}$, i.e., $N(\mathcal{Q}_u) \subseteq \mathcal{S}$. In this case, the data subsets in the set $\mathcal{Q}_u$ are assigned to zero non-stragglers. In both cases, the data subsets in the set $\mathcal{Q}_u$ are assigned to at most $m-u$ non-stragglers. Let us denote the set of non-stragglers by $\mathcal{C} \triangleq [n] \setminus \mathcal{S}$.


For $i \in [k]$, denote $\mathcal{H}_i \triangleq \{ i, k+i, \dots,  (m-1)k + i \}$.
Define a matrix $\mathbf{B}^{(i)}$ such that $\mathbf{B}^{(i)} := \mathbf{B}(\mathcal{H}_i, \mathcal{C}) \in \mathbb{R}^{m \times (n-s)}$. 
 For any $\mathcal{D}_j\in\mathcal{Q}_u$, let $\text{rank}(\mathbf{B}^{(j)}) = \rho$. Since $\mathcal{D}_j$ is assigned to at most $m-u$ non-stragglers, at most $m-u < m$ columns of $\mathbf{B}^{(j)}$ will be nonzero. Hence, $\rho \le m-u$. For any $\mathbf{R} \in \mathbb{R}^{(n-s) \times m}$, $\text{rank}(\mathbf{B}^{(j)}\mathbf{R}) \le \min\{\text{rank}(\mathbf{B}^{(j)}), \text{rank}(\mathbf{R}) \} \le \rho$.
Let $\{\sigma_i = 1\}_{i = 1}^m$ be the singular values of $\mathbf{I}_m$. Thus, for every $\mathcal{D}_j\in\mathcal{Q}_u$,
\begin{align*} \min\limits_{\substack{\mathbf{R}\in \mathbb{R}^{(n-s) \times m} }}  {\|\mathbf{B}^{(j)} \mathbf{R}- \mathbf{I}_m\|_F^2} & \ge \min\limits_{\substack{\tilde{\mathbf{I}}_m\in \mathbb{R}^{m \times m} \\ \text{rank}(\tilde{\mathbf{I}}_m) \le \rho }}  {\|\tilde{\mathbf{I}}_m- \mathbf{I}_m\|_F^2} \\
& = \sum_{ i = \rho+1}^m \sigma_i^2 = m-\rho \ge u.
\end{align*}
The equality holds by the Eckart-Young theorem \cite{Eckart_Young_1936}. The second inequality holds since $\rho \le m-u$. 

 Finally, for a fixed $u \in [m]$,
\begin{align*}
 %
 \text{Err}_{\mathcal{C}}(\mathbf{B})  & =\min\limits_{\substack{\mathbf{R}\in \mathbb{R}^{(n-s) \times m}}} {\|\mathbf{B}_{\mathcal{C}} \mathbf{R}- \mathbf{F}\|_F^2}\\
 &= \min\limits_{\substack{\mathbf{R}\in \mathbb{R}^{(n-s) \times m} }}\sum_{i = 1}^k   {\|\mathbf{B}^{(i)} \mathbf{R}- \mathbf{I}_m\|_F^2}\\
 &\ge \sum_{i = 1}^k \min\limits_{\substack{\mathbf{R}\in \mathbb{R}^{(n-s) \times m} }}\  {\|\mathbf{B}^{(i)} \mathbf{R}- \mathbf{I}_m\|_F^2}\\
 &\ge \sum_{i:\mathcal{D}_i \in \mathcal{Q}_u} \min\limits_{\substack{\mathbf{R}\in \mathbb{R}^{(n-s) \times m} }}  {\|\mathbf{B}^{(i)} \mathbf{R}- \mathbf{I}_m\|_F^2}\\
 & \ge  j_u u.
\end{align*}
The second equality above holds from the argument in Appendix \ref{sec:appendix_claim_frob}. Since $u \in [m]$ was arbitrary, by maximizing over all $u \in [m]$, we have

\begin{align*}
 \text{Err}_{\mathcal{C}}(\mathbf{B}) \ge \max_{u \in [m]} \min\left\{k,
\left\lfloor\frac{k(s+m-u)}{n\delta}\right\rfloor\right\}u.
\end{align*}

\end{proof}



\begin{example}
Suppose that $k=n \ge 5$, $\delta = 4$, $m=2$ and $s = \delta +1 =5$. Then the above bound is achieved at $u=m=2$ and the lower bound is $2$. On the other hand, if $s = \delta-1 = 3$, then the bound is achieved at $u=1$ and takes the value $1$. Thus, the maximum in \eqref{eq:lowerbound} is achieved for different values of $u$ in different ranges of $s$.
\end{example}

\begin{remark}
The lower bound is associated with the worst-case straggler set for a given number of stragglers $s$. Consequently, the approximation error can be lower than this bound for other choices of straggler set $\mathcal{S}$ with $|\mathcal{S}| = s$. 
\end{remark}

\section{Convergence Analysis}
\label{sec:convergence}
Since our schemes propose approximations of the gradient, it is necessary to show that the gradient descent algorithm converges in these cases. In this section, we establish convergence for the construction as in $\eqref{eq:effAD1AD2}$ with BIBD, vertex-transitive SRG and coset bipartite graph assignment matrices, and for the construction as in $\eqref{eq:hada_coset_cons}$.

Let $\partial L(\mathbf{w}^{(t)})$ be the set of subgradients of the loss function $L$ at $\mathbf{w}^{(t)}$. In Stochastic Gradient Descent (SGD) (Section 14.3 in \cite{books/daglib/0033642}), the parameter is updated as 

\[ \mathbf{w}^{(t+1)} = \mathbf{w}^{(t)} - \eta^{(t)} {\mathbf{v}}^{(t)},\]

where $\mathbf{v}^{(t)}$ is a random vector such that $\mathbb{E}[ \mathbf{v}^{(t)} | \mathbf{w}^{(t)}] \in \partial L (\mathbf{w}^{(t)})$. When $L$ is differentiable, this condition reduces to $\mathbb{E}\!\left[\mathbf{v}^{(t)} \mid \mathbf{w}^{(t)}\right]
=
\nabla L\!\left(\mathbf{w}^{(t)}\right)$. In what follows, we show that, after an appropriate rescaling, the computed gradient satisfies this unbiasedness condition. Therefore, the resulting update can be viewed as an instance of SGD. We also establish that the variance of the rescaled computed gradient is bounded by a constant (assuming the spectral norm of $\bfZ$ is bounded), independent of the iteration number. Under this property, Corollary~2.2 in~\cite{doi:10.1137/120880811} provides a stationary-point convergence guarantee for SGD when the loss function is bounded below and has a Lipschitz-continuous gradient, with the learning rate chosen as specified, where the guarantee is stated for $\mathbf{w}^{(R_T)}$ with $R_T$ chosen uniformly at random from $\{1,\ldots,T\}$.





In this part we assume that the arrival of the computed result from each worker within a fixed time interval is modeled by a Bernoulli random variable \cite{DBLP:journals/corr/abs-1901-09671}. In particular, we will assume that within a fixed time interval, each worker sends its computed result independently with probability $1-q$ for $q \in [0, 1)$. Therefore, each worker straggles independently with probability $q$.


For the random diagonal matrix based construction, since in each iteration the encoding matrix varies based on the diagonal matrices $\mathbf{D}_i$, $i \in [m]$, the computed gradient is a function of random diagonal matrices $\mathbf{D}_i$. As before we let $\mathcal{F}$ denote the index set corresponding to non-straggling workers. 
%
As described in Section \ref{sec:prob_form}, for a given encoding matrix $\mathbf{B}$ and a given set of non-stragglers corresponding to $\mathcal{F}$, the computed gradient for our schemes is $\mathbf{Z} \mathbf{B}_{\cF}\mathbf{R}$, where $\bfR \in \mathbb{R}^{(n-s) \times m}$ is the optimal decoding matrix. Let $\mathbf{D} := (\mathbf{D}_1, \mathbf{D}_2, \dots, \mathbf{D}_m)$. 
For any function $\phi$ of $\mathbf{D}$, we use $\mathbb{E}_{\mathbf{D}}[\phi(\mathbf{D})]$ to denote the expectation with respect to the joint distribution of the diagonal entries of $\mathbf{D}_1,\ldots,\mathbf{D}_m$. For any function $\psi$ defined on $2^{[n]}$ (codomain can vary), define 




\[
\mathbb{E}_{\mathcal{F}}[\psi(\mathcal{F})]
:= \sum_{F \subseteq [n]} \psi(F)\, \mathbb{P}(\mathcal{F} = F)
= \sum_{i = 0}^n \ \sum_{\substack{|F| = i}}
\psi(F)\, q^{\,n-i}(1-q)^{\,i}.
\]
In the discussion below, we will show that $\mathbb{E}_{\mathbf{D}}[\mathbb{E}_{\mathcal{F}}[\mathbf{B}_{\cF} \mathbf{R}]]= \zeta \mathbf{F}$, so that the true gradient can be obtained by calculating $\frac{1}{\zeta}\mathbb{E}_{\mathbf{D}}[\mathbb{E}_{\mathcal{F}}[\mathbf{Z}\mathbf{B}_{\cF}\mathbf{R}]]$. We need the following basic definitions.
\begin{defn}{[{\it Permutation}]}
A permutation is a reordering of the elements of $[n]$, denoted by $\sigma : [n] \to [n]$.  
The corresponding permutation matrix
$\bfP_{\sigma} \in \{0,1\}^{n \times n}$ is such that for $i, j \in [n]$, $\bfP_{\sigma}(i,j) = 1$ if $i = \sigma(j)$ and $\bfP_{\sigma}(i,j) = 0$ otherwise. Note that for standard basis vector $\mathbf{e}_i \in \mathbb{R}^n$,  $ \mathbf{e}_{\sigma(i)}^T\bfP_{\sigma} = \mathbf{e}_i ^T$. Also, $\bfP_{\sigma}^{-1} = \bfP_{\sigma}^T$.      
\end{defn}


\begin{defn}{[{\it Graph Automorphism}]}
Let $G=([n], E)$ be a graph with adjacency matrix $\bfA_G$. An automorphism of $G$ is a permutation $\sigma$ (with corresponding matrix $\bfP_{\sigma}$) of the vertex set $[n]$ such that for all $i, j \in [n]$, $(i,j) \in E$ if and only if $(\sigma(i),\sigma(j)) \in E$. In this case, $\bfP_{\sigma} \bfA_G \bfP_{\sigma}^T = \bfA_G$. 
\end{defn}

\begin{defn}{[{\it Vertex Transitive Graph}]}
A graph $G = ([n],E)$ is vertex-transitive if for any $i, j \in [n]$, there exists an automorphism $\sigma$ such that $\sigma(i) = j$.
\end{defn}



\subsection{Unbiasedness for BIBDs}

The following lemma establishes that the computed gradient for the random diagonal matrix-based construction with BIBD assignment matrices is unbiased up to a nonzero scaling factor.

\begin{lemma}
\label{lemma:BIBD_unbiasedness}
Suppose that the assignment matrix $\mathbf{A}$ is the transpose of the incidence matrix of a $(n,k,\gamma,\delta,\lambda)$ BIBD. 
Then, for the construction as in $\eqref{eq:effAD1AD2}$ with $\epsilon  = 0$, we have 

\[ \mathbb{E}_{\mathbf{D}}[\mathbb{E}_{\mathcal{F}}[\mathbf{B}_{\cF} \mathbf{R}]]= \alpha \mathbf{F},\]
where $\alpha$ is a strictly positive constant. 

\end{lemma}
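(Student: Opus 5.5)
The plan is to write the computed quantity as a projection and then use symmetry to pin down its expectation. For a fixed realization of $\mathbf{D}$ and a nonempty $\mathcal{F}$, the proof of Corollary~\ref{cor:BIBD} shows that $\mathbf{\Sigma}_{\mathcal{F}}:=\mathbf{B}_{\mathcal{F}}^T\mathbf{B}_{\mathcal{F}}=\sum_{j}\tilde{\mathbf{D}}_j\mathbf{\Delta}_{\mathcal{F}}\tilde{\mathbf{D}}_j$ is positive definite. The optimal decoding matrix is therefore $\mathbf{R}=\mathbf{\Sigma}_{\mathcal{F}}^{-1}\mathbf{B}_{\mathcal{F}}^T\mathbf{F}$, and $\mathbf{B}_{\mathcal{F}}\mathbf{R}=\mathbf{P}_{\mathcal{F}}\mathbf{F}$, where $\mathbf{P}_{\mathcal{F}}$ is the orthogonal projector onto the column space of $\mathbf{B}_{\mathcal{F}}$. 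When $\mathcal{F}=\emptyset$ the contribution is $\mathbf{0}$. I would view $\mathbf{P}_{\mathcal{F}}$ as an $m\times m$ array of $k\times k$ blocks $\mathbf{P}^{(i,j)}=\mathbf{A}_{\mathcal{F}}\tilde{\mathbf{D}}_i\mathbf{\Sigma}_{\mathcal{F}}^{-1}\tilde{\mathbf{D}}_j\mathbf{A}_{\mathcal{F}}^T$. In this notation the $(i,l)$ block of $\mathbf{P}_{\mathcal{F}}\mathbf{F}$ equals $\mathbf{P}^{(i,l)}\mathbf{1}_k$. The goal is to show three things: off-diagonal blocks vanish in expectation, all diagonal blocks have the same expectation, and that common expectation maps $\mathbf{1}_k$ to a positive multiple of $\mathbf{1}_k$.

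\textbf{Step 1: off-diagonal blocks.} Since $\epsilon=0$, the diagonal entries of $\mathbf{D}_j$ are i.i.d. uniform on $\{\pm1\}$. Fix $l\neq l'$ and replace $\mathbf{D}_l$ by $-\mathbf{D}_l$. This is a measure-preserving map on the law of $\mathbf{D}$ and does not involve $\mathcal{F}$. It leaves $\mathbf{\Sigma}_{\mathcal{F}}$ unchanged, because $\mathbf{D}_l$ appears quadratically, but it flips the sign of $\mathbf{P}^{(l,l')}$. Hence $\mathbb{E}_{\mathbf{D}}[\mathbf{P}^{(l,l')}]=\mathbf{0}$ for every $\mathcal{F}$.

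\textbf{Step 2: diagonal blocks are equal.} The tuple $(\mathbf{D}_1,\dots,\mathbf{D}_m)$ is exchangeable, and $\mathbf{\Sigma}_{\mathcal{F}}$ is symmetric in the $\mathbf{D}_j$. Swapping $\mathbf{D}_i$ and $\mathbf{D}_l$ therefore shows $\mathbb{E}_{\mathbf{D}}[\mathbf{P}^{(i,i)}]=\mathbb{E}_{\mathbf{D}}[\mathbf{P}^{(l,l)}]$.

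\textbf{Step 3: the diagonal block maps $\mathbf{1}_k$ to a multiple of $\mathbf{1}_k$.} This is the main obstacle. A BIBD need not be point-transitive, so I would avoid any automorphism argument on $\mathbf{A}$ itself and instead exploit that $\mathbf{\Delta}_{\mathcal{F}}=(\delta-\lambda)\mathbf{I}+\lambda\mathbf{J}$ depends only on $|\mathcal{F}|$. Using $\mathbf{A}_{\mathcal{F}}^T\mathbf{1}_k=\delta\mathbf{1}_{n-s}$, write $\mathbf{P}^{(1,1)}\mathbf{1}_k=\delta\,\mathbf{A}\mathbf{w}$. Here $\mathbf{w}\in\mathbb{R}^n$ is the zero-extension of $\tilde{\mathbf{D}}_1\mathbf{\Sigma}_{\mathcal{F}}^{-1}\tilde{\mathbf{D}}_1\mathbf{1}_{n-s}$ to $[n]$, and it depends only on $(\mathcal{F},\mathbf{D})$, not on $\mathbf{A}$. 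For any permutation $\sigma$ of $[n]$, relabel workers by $\sigma$: this maps $\mathcal{F}\mapsto\sigma(\mathcal{F})$ and permutes the diagonals of all $\mathbf{D}_j$. The joint law of $(\mathcal{F},\mathbf{D})$ is invariant, since stragglers are i.i.d. Bernoulli and the signs are i.i.d. Because $\mathbf{\Delta}_{\mathcal{F}}$ commutes with permutations, the map $(\mathcal{F},\mathbf{D})\mapsto\mathbf{w}$ is equivariant: $\mathbf{w}\mapsto\mathbf{P}_\sigma\mathbf{w}$. Hence $\bar{\mathbf{w}}:=\mathbb{E}_{\mathbf{D}}\mathbb{E}_{\mathcal{F}}[\mathbf{w}]$ satisfies $\mathbf{P}_\sigma\bar{\mathbf{w}}=\bar{\mathbf{w}}$ for all $\sigma$, so $\bar{\mathbf{w}}=\beta\mathbf{1}_n$. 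Combining with $\mathbf{A}\mathbf{1}_n=\gamma\mathbf{1}_k$, I get $\mathbb{E}[\mathbf{P}^{(i,i)}\mathbf{1}_k]=\delta\gamma\beta\,\mathbf{1}_k$ for every $i$. Together with Steps 1--2, this gives $\mathbb{E}_{\mathbf{D}}[\mathbb{E}_{\mathcal{F}}[\mathbf{B}_{\mathcal{F}}\mathbf{R}]]=\alpha\mathbf{F}$ with $\alpha=\delta\gamma\beta$.

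\textbf{Positivity.} We have $n\beta=\mathbb{E}[\mathbf{1}_n^T\mathbf{w}]=\mathbb{E}\big[(\tilde{\mathbf{D}}_1\mathbf{1})^T\mathbf{\Sigma}_{\mathcal{F}}^{-1}(\tilde{\mathbf{D}}_1\mathbf{1})\big]$. This is strictly positive whenever $\mathcal{F}\neq\emptyset$, because $\mathbf{\Sigma}_{\mathcal{F}}^{-1}\succ0$ and $\tilde{\mathbf{D}}_1\mathbf{1}\neq\mathbf{0}$. The event $\mathcal{F}\neq\emptyset$ has probability $1-q^n>0$, and the empty set contributes $0$. Hence $\beta>0$ and $\alpha>0$.

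I expect Step 3 to need the most care, in particular checking that the relabeling genuinely preserves the joint law and that $\mathbf{w}$ does not secretly depend on $\mathbf{A}$. If that equivariance argument ran into trouble, I would fall back on computing $\mathbb{E}_{\mathbf{D}}[\tilde{\mathbf{D}}_1\mathbf{\Sigma}_{\mathcal{F}}^{-1}\tilde{\mathbf{D}}_1]$ directly, as a matrix invariant under conjugation by all permutation matrices of $\mathcal{F}$. Such a matrix is of the form $a\mathbf{I}+b\mathbf{J}$, and it maps $\mathbf{1}$ to a multiple of $\mathbf{1}$ directly.
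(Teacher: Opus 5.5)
Your proposal is correct and follows essentially the same route as the paper. The sign-flip (odd-function) argument kills the off-diagonal blocks, and the i.i.d.\ law of the $\mathbf{D}_i$ equalizes the diagonal blocks. Permutation symmetry from $\mathbf{\Delta}_{\mathcal{F}}=(\delta-\lambda)\mathbf{I}+\lambda\mathbf{J}$, together with positive definiteness of $\mathbf{\Sigma}_{\mathcal{F}}^{-1}$, then gives $\alpha\mathbf{F}$ with $\alpha>0$. The only difference is packaging in Step 3: the paper first shows, for fixed $\mathcal{F}$, that $\mathbb{E}_{\mathbf{D}}[\tilde{\mathbf{D}}_i\mathbf{\Sigma}_{\mathcal{F}}^{-1}\tilde{\mathbf{D}}_i\mathbf{1}_{n-s}]$ is a positive multiple of $\mathbf{1}_{n-s}$ (exactly your fallback), and then averages over i.i.d.\ stragglers, whereas you merge both symmetrizations into a single joint relabeling of $[n]$ acting on $(\mathcal{F},\mathbf{D})$.
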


\begin{proof}



Fix $i \in [m]$. Let $\bfr_i* = \text{argmin}_{\mathbf{r}} \|\bfB_{\cF} \mathbf{r} - \mathbf{f}_i \|_2^2$. When $\cF$ is empty, we assume $\mathbf{B}_{\cF} = \mathbf{0}_{mk}$. Thus, $\mathbf{B}_{\cF}\mathbf{r}_i* = \mathbf{0}_{mk}$. For non-empty $\cF$, we have $\mathbf{B}_{\mathcal{F}}^T = [ \tilde{\bfD}_1 \bfA_{\mathcal{F}}^T ~|~ \tilde{\bfD}_2 \bfA_{\mathcal{F}}^T ~|~ \dots ~|~ \tilde{\bfD}_m\bfA_{\mathcal{F}}^T]$. Here, for $t \in [m]$, $\tilde{\mathbf{D}}_t := \mathbf{D}_t(\mathcal{F}, \mathcal{F})$. Let $\mathbf{\Delta}_{\mathcal{F}} := \mathbf{A}_{\mathcal{F}}^T \bfA_{\cF}$. Then, $\mathbf{B}_{\mathcal{F}}^T\mathbf{B}_{\mathcal{F}} = \sum_{t = 1}^m\tilde{\bf{D}}_t  \mathbf{\Delta}_{\cF} \tilde{\bf{D}}_t$. So, for non-empty $\mathcal{F}$ $(s \ne n)$ of size $n-s$,  
\begin{align*} 
\bfr_i* &=  (\bfB_{\cF}^T\bfB_{\cF})^{-1} \bfB_{\cF}^T  \mathbf{f}_i \overset{(a)}=  \tilde{\mathbf{D}}_i^{-1} \left(\mathbf{\Delta}_{\cF}  + \sum_{t = 1, t\ne i}^m\tilde{\bf{D}}_i^{-1}\tilde{\bf{D}}_t \mathbf{\Delta}_{\cF} \tilde{\bf{D}}_t\tilde{\bf{D}}_i^{-1}\right)^{-1}  \mathbf{A}_{\cF}^T \mathbf{1}_k,
\end{align*}
where
$(a)$ holds since $\tilde{\bfD}_i^{-1}\tilde{\bfD}_i = \mathbf{I}_{n-s}$. Let $\bfM_i := \mathbf{\Delta}_{\cF}  + \sum_{t = 1, t\ne i}^m\tilde{\bf{D}}_i^{-1}\tilde{\bf{D}}_t \mathbf{\Delta}_{\cF} \tilde{\bf{D}}_t\tilde{\bf{D}}_i^{-1}$. From the discussion in Corollary \ref{cor:BIBD}, we have that $\bfM_i$ is positive-definite. Now,

\begin{align*}
\mathbf{B}_{\mathcal{F}}\mathbf{r}_i* = \begin{bmatrix} \mathbf{A}_{\cF} \tilde{\bfD}_1\tilde{\mathbf{D}}_i^{-1} \\ \vdots\\ \bfA_{\cF}\\ \vdots \\  \mathbf{A}_{\cF} \tilde{\bfD}_m\tilde{\mathbf{D}}_i^{-1} \end{bmatrix} \bfM_i^{-1}\mathbf{A}_{\cF}^T \mathbf{1}_k.
\end{align*}
For $j \ne i$, $j \in [m]$, let 

\begin{equation}
f_j(\mathbf{D}_i)  := \mathbb{E}_{\mathbf{D}_t, t \in [m], t \ne i}[ \mathbf{A}_{\cF} \tilde{\bfD}_j\tilde{\bf{D}}_i^{-1}\bfM_i^{-1}\mathbf{A}_{\cF}^T \mathbf{1}_k].
\end{equation}

Note that  
$f_j(-\mathbf{D}_i) = - f_j(\mathbf{D}_i)$.
So, $f_j(\mathbf{D}_i)$ is an odd function of $\mathbf{D}_i$. Also, note that the distribution of $\mathbf{D}_i$ is symmetric, i.e., $\bfD_i \overset{d}= - \bfD_i$ (~$\overset{d}=$ denotes distributional equivalence). Consequently, for $j \ne i$, $j \in [m]$, we have $\mathbb{E}_{\mathbf{D}_i}[f_j(\mathbf{D}_i)] = \mathbf{0}_k$. Therefore, 
\begin{align*}\mathbb{E}_{\mathbf{D}} [ \mathbf{A}_{\cF} \tilde{\bfD}_j\tilde{\bf{D}}_i^{-1}\bfM_i^{-1}\mathbf{A}_{\cF}^T \mathbf{1}_k] &= \mathbb{E}_{\mathbf{D}_i}[\mathbb{E}_{\mathbf{D}_t, t \in [m], t \ne i}[ \mathbf{A}_{\cF} \tilde{\bfD}_j\tilde{\bf{D}}_i^{-1}\bfM_i^{-1}\mathbf{A}_{\cF}^T \mathbf{1}_k]]\\
&= \mathbb{E}_{\mathbf{D}_i}[f_j(\mathbf{D}_i)] \\
&=  \mathbf{0}_{k}. 
\end{align*}

Next we calculate $\mathbb{E}_{\bfD} [\mathbf{A}_{\cF}\bfM_i^{-1}\mathbf{A}_{\cF}^T \mathbf{1}_k]$. 
We have,
\begin{align*}
\mathbf{M}_i &= \mathbf{\Delta}_{\cF} + \sum_{t = 1, t\ne i}^m\tilde{\bf{D}}_i^{-1}\tilde{\bf{D}}_t \mathbf{\Delta}_{\cF} \tilde{\bf{D}}_t\tilde{\bf{D}}_i^{-1}\\
&= (\delta - \lambda) \mathbf{I}_{n-s} + \lambda \mathbf{1}_{n-s} \mathbf{1}_{n-s}^T + \sum_{t = 1, t\ne i}^m \tilde{\bf{D}}_i^{-1}\tilde{\bf{D}}_t  \left[ (\delta - \lambda) \mathbf{I}_{n-s} + \lambda \mathbf{1}_{n-s} \mathbf{1}_{n-s}^T \right]\tilde{\bf{D}}_t\tilde{\bf{D}}_i^{-1}\\
& = m(\delta - \lambda) \bfI_{n-s} + \lambda \mathbf{1}_{n-s} \mathbf{1}_{n-s}^T+ \lambda \sum_{t = 1, t\ne i}^m\mathbf{u}_{it} \mathbf{u}_{it}^T,
\end{align*}
where $\mathbf{u}_{it}:= \tilde{\bf{D}}_i^{-1}\tilde{\bf{D}}_t \mathbf{1}_{n-s}$. The final equality holds since for $\epsilon = 0$, the entries of $\tilde{\bfD_{j}}$ are $\pm 1$ and therefore $\tilde{\mathbf{D}}_j^{-1} =\tilde{\mathbf{D}}_j$ for all $j \in [m]$. 
Let $\mathbf{\Pi} \in \mathbb{R}^{(n-s) \times (n-s)}$ be a permutation matrix such that for $u, v \in [n-s]$,  $\mathbf{\Pi}^T \mathbf{e}_u = \mathbf{e}_v$. 
We have,
\begin{align*}
\mathbf{\Pi}\mathbf{M}_i\mathbf{\Pi}^T &= m(\delta - \lambda) \mathbf{\Pi}\mathbf{\Pi}^T + \lambda \mathbf{1}_{n-s} \mathbf{1}_{n-s}^T+ \lambda \sum_{t = 1, t\ne i}^m \mathbf{\Pi}\mathbf{u}_{it} \mathbf{u}_{it}^T \mathbf{\Pi}^T\\
&\overset{d}=\mathbf{M}_i.
\end{align*}
where the second equality above holds by $\mathbf{\Pi}\mathbf{\Pi}^T = \mathbf{I}_{n-s}$, $\mathbf{\Pi} \mathbf{1}_{n-s} = \mathbf{1}_{n-s}$, and since $\mathbf{u}_{it} \;\stackrel{d}{=}\; \mathbf{\Pi}\mathbf{u}_{it}$ because the entries of $\mathbf{u}_{it}$ are i.i.d. and take values $\pm 1$ with equal probability. Therefore, 
\begin{align*}
\mathbf{M}_i^{-1} \mathbf{1}_{n-s} &\overset{d}= (\mathbf{\Pi} \mathbf{M}_i \mathbf{\Pi}^T)^{-1} \mathbf{1}_{n-s} \\
& = \mathbf{\Pi} \mathbf{M}_i^{-1} \mathbf{1}_{n-s}.
\end{align*}
Next,
\begin{align*}
\mathbb{E}_{\mathbf{D}} [ (\mathbf{M}_i^{-1}\mathbf{1}_{n-s})(u)] & = \mathbb{E}_{\mathbf{D}} [\mathbf{e}_u^T \mathbf{M}_i^{-1}\mathbf{1}_{n-s}]\\
& \overset{(a)}{=} \mathbf{e}_u^T \mathbb{E}_{\mathbf{D}} [\mathbf{\Pi} \mathbf{M}_i^{-1}\mathbf{1}_{n-s}]\\
& \overset{(b)}{=} \mathbf{e}_v^T \mathbb{E}_{\mathbf{D}} [ \mathbf{M}_i^{-1}\mathbf{1}_{n-s}]\\
& = \mathbb{E}_{\mathbf{D}} [ (\mathbf{M}_i^{-1}\mathbf{1}_{n-s})(v)],
\end{align*}
where $(a)$ holds since $\mathbf{\Pi}\mathbf{M}_i^{-1} \mathbf{1}_{n-s} \overset{d} = \mathbf{M}_i^{-1} \mathbf{1}_{n-s}$ and $(b)$ holds since $\mathbf{\Pi}^T \mathbf{e}_u = \mathbf{e}_v$.
Since $u,v \in [n-s]$ are arbitrary, it follows that all the entries of $\mathbb{E}_{\mathbf{D}}[\mathbf{M}_i^{-1} \mathbf{1}_{n-s}]$ are equal. Also, since $\mathbf{M}_i$ is positive definite, $\mathbf{M}_i^{-1}$ is positive definite. Hence, $\mathbb{E}_{\mathbf{D}}[\mathbf{M}_i^{-1}]$ is positive definite. So, $\mathbf{1}_{n-s}^T\mathbb{E}_{\mathbf{D}}[\mathbf{M}_i^{-1}] \mathbf{1}_{n-s} = \mathbf{1}_{n-s}^T\mathbb{E}_{\mathbf{D}}[\mathbf{M}_i^{-1} \mathbf{1}_{n-s}] > 0$. So, $\mathbb{E}_{\mathbf{D}}[\mathbf{M}_i^{-1} \mathbf{1}_{n-s}]$ is a nonzero vector. Since $\bfM_i$ is a function of $\mathbf{\Delta}_{\cF}$ and for a BIBD $\mathbf{\Delta}_{\cF}$ depends only on the size of $\cF$, $\mathbb{E}_{\mathbf{D}} [ \mathbf{M}_i^{-1}\mathbf{1}_{n-s}] = \alpha_{i,|\cF|} \mathbf{1}_{n-s}$, where $\alpha_{i,|\cF|} > 0$. So, when $\mathcal{F}$ is non-empty,
\begin{align*}
\mathbb{E}_{\mathbf{D}}\left[\mathbf{B}_{\mathcal{F}}\mathbf{r}_i*\right] & = \mathbb{E}_{\mathbf{D}}\left[\begin{bmatrix} \mathbf{A}_{\cF} \tilde{\bfD}_1\tilde{\bf{D}}_i^{-1} \\ \vdots \\\mathbf{A}_{\cF} \\ \vdots \\  \mathbf{A}_{\cF} \tilde{\bfD}_m\tilde{\bf{D}}_i^{-1}\end{bmatrix} \bfM_i^{-1}\mathbf{A}_{\cF}^T \mathbf{1}_k \right]\\
&=\mathbb{E}_{\mathbf{D}}\left[\begin{bmatrix} \mathbf{A}_{\cF} \tilde{\bfD}_1\tilde{\bf{D}}_i^{-1} \bfM_i^{-1}\mathbf{A}_{\cF}^T \mathbf{1}_k \\ \vdots \\\mathbf{A}_{\cF}\bfM_i^{-1}\mathbf{A}_{\cF}^T \mathbf{1}_k  \\ \vdots \\  \mathbf{A}_{\cF} \tilde{\bfD}_m\tilde{\bf{D}}_i^{-1}\bfM_i^{-1}\mathbf{A}_{\cF}^T \mathbf{1}_k \end{bmatrix} \right]\\
&= \begin{bmatrix} \mathbf{0}_k\\ \vdots \\ \delta \mathbf{A}_{\mathcal{F}}\mathbb{E}_{\mathbf{D}}[ \mathbf{M}_i^{-1}\mathbf{1}_{n-s}] \\ \vdots\\ \mathbf{0}_k \end{bmatrix} =  \begin{bmatrix} \mathbf{0}_k\\ \vdots \\ \delta \alpha_{i, |\cF|} \mathbf{A}_{\mathcal{F}} \mathbf{1}_{n-s} \\ \vdots\\ \mathbf{0}_k \end{bmatrix}.
\end{align*}
For non-empty $\cF$, let $\mathbf{P}_{\mathcal{F}} = \mathbf{I}_n(:, \mathcal{F}) \in \mathbb{R}^{n \times (n-s)}$ be the selection matrix that selects the columns of a matrix corresponding to the set $\mathcal{F}$. So, $\mathbf{A}_{\cF} = \bfA \bfP_{\cF}$. 
Also, let $\mathbf{1}_{n}^{(\cF)} := \bfP_{\cF} \mathbf{1}_{n-s}$, where $\mathbf{1}_{n}^{(\cF)}$ is a vector of length $n$ whose $i$-th entry is $1$ if $i \in \cF$ and zero otherwise. So, $\bfA_{\cF} \mathbf{1}_{n-s} = \bfA \bfP_{\cF} \mathbf{1}_{n-s} = \bfA\mathbf{1}_{n}^{(\cF)}$. Now, if $\cF$ is empty, $\mathbf{1}_{n}^{(\cF)} = \mathbf{0}_n$. So, $\bfA\mathbf{1}_{n}^{(\cF)} = \bfA \mathbf{0}_n = \mathbf{0}_k$. Since for $\cF = \emptyset$, $\mathbb{E}_{\bfD}[\bfB_{\cF} \mathbf{r}_i*] = \mathbf{0}_{mk}$,  we have that for all $\cF$,  
\[ \mathbb{E}_{\bfD}[\mathbf{B}_{\cF}  \mathbf{r}_i*] = \begin{bmatrix} \mathbf{0}_k\\ \vdots \\ \delta \alpha_{i,|\cF|} \mathbf{A} \mathbf{1}_{n}^{(\cF)} \\ \vdots\\ \mathbf{0}_k \end{bmatrix}.\]
Now we compute $\mathbb{E}_{\cF} [\alpha_{i,|\cF|} \mathbf{1}_{n}^{(\cF)}]$. Since the worker failures are i.i.d., all coordinates of $\mathbb{E}_{\cF} [\alpha_{i,|\cF|} \mathbf{1}_{n}^{(\cF)}]$ are equal. So, $\mathbb{E}_{\cF} [\alpha_{i,|\cF|} \mathbf{1}_{n}^{(\cF)}] = \alpha_i \mathbf{1}_n$. Moreover, $\alpha_i>0$ since $\alpha_{i,|\mathcal{F}|}>0$ for every nonempty
$\mathcal{F}$ and each worker belongs to $\mathcal{F}$ with probability
$1-q>0$.
Thus,
\[ 
\mathbb{E}_{\mathcal{F}}[\mathbb{E}_{\mathbf{D}}\left[\mathbf{B}_{\mathcal{F}}\mathbf{r}_i*\right]] = \begin{bmatrix} \mathbf{0}_k\\ \vdots \\  \delta \alpha_i \gamma \mathbf{1}_{k} \\ \vdots\\ \mathbf{0}_k \end{bmatrix} =  \delta \alpha_i \gamma  \mathbf{f}_i
.\]

Since the diagonal matrices $\mathbf{D}_i$ are i.i.d., $\mathbb{E}_{\bfD}[\mathbf{M}_i^{-1} \mathbf{1}_{n-s}] = \mathbb{E}_{\bfD}[\mathbf{M}_j^{-1} \mathbf{1}_{n-s}]$ for all $i,j \in [m]$. Therefore, $\alpha_i = \alpha_ j = \alpha'$ (let) for all $i ,j \in [m]$. So, for each $i \in [m]$,  
\[ 
\mathbb{E}_{\mathcal{F}}[\mathbb{E}_{\mathbf{D}}\left[\mathbf{B}_{\mathcal{F}}\mathbf{r}_i*\right]]=  \delta \alpha' \gamma \mathbf{f}_i
.\]
Since $\bfR = \begin{bmatrix}\mathbf{r}_1* & \mathbf{r}_2* & \dots & \mathbf{r}_m* \end{bmatrix}$, and $\bfF = \begin{bmatrix}\mathbf{f}_1 & \mathbf{f}_2 & \dots & \mathbf{f}_m \end{bmatrix}$, we have 
\[ \mathbb{E}_{\mathcal{F}}[\mathbb{E}_{\mathbf{D}}\left[\mathbf{B}_{\mathcal{F}}\mathbf{R}\right]] = \delta \alpha' \gamma \mathbf{F} = \alpha \mathbf{F},\] 
where $\alpha = \delta \alpha' \gamma $. Since $\alpha' \ne 0$ and $\delta, \gamma > 0$, $\alpha$ is a nonzero constant.

\end{proof}
\subsection{Unbiasedness for Vertex-Transitive SRGs and Coset Bipartite Graphs}

The following lemma establishes that the computed gradient for the random diagonal matrix-based construction is unbiased up to a nonzero scaling factor under an appropriate symmetry condition on the assignment matrix.

\begin{lemma}
{\label{lemma:SRG_COSET_unbiasedness}}
Let $\bfA \in \mathbb{R}^{k \times n}$ be an assignment matrix such that $\mathbf{A}^T\mathbf{1}_k\neq \mathbf{0}_n$ and for any $u,v \in [k]$ there exist permutations $\sigma$ of $[k]$ and $\pi$ of $[n]$ such that for the corresponding permutation matrices $\bfP_{\sigma}$ and $\bfQ_{\pi}$, $\mathbf{e}_{u}^T\bfP_{\sigma} = \mathbf{e}_{v}^T$  and $\bfP_{\sigma} \mathbf{A}\bfQ_{\pi}^T = \bfA$. Suppose the encoding matrix $\bfB$ is constructed as in $\eqref{eq:effAD1AD2}$ and that $\bfB_{\cF}^T \bfB_{\cF}$ is invertible for any non-empty $\cF \subseteq [n]$. Then, 

\[ \mathbb{E}_{\mathbf{D}}[\mathbb{E}_{\mathcal{F}}[\mathbf{B}_{\cF} \mathbf{R}]]= \beta\mathbf{F},\]
where $\beta$ is a nonzero constant.

\end{lemma}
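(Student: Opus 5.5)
I will follow the template of Lemma~\ref{lemma:BIBD_unbiasedness}, but average over $\mathcal{F}$ before exploiting symmetry. The permutation invariance of $\bfA$ is only available for the full index set $[n]$, not for each $\mathcal{F}$.

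\textbf{Off-diagonal blocks.} Fix $i\in[m]$ and a non-empty $\mathcal{F}$, and write $\mathbf{r}_i^*=(\bfB_{\cF}^T\bfB_{\cF})^{-1}\bfB_{\cF}^T\mathbf{f}_i$. As before,
\[
\bfB_{\cF}\mathbf{r}_i^* =\big[(\bfA_{\cF}\tilde\bfD_1\tilde\bfD_i^{-1})^T ~|~ \cdots ~|~ \bfA_{\cF}^T ~|~ \cdots ~|~ (\bfA_{\cF}\tilde\bfD_m\tilde\bfD_i^{-1})^T\big]^T \bfM_i^{-1}\bfA_{\cF}^T\mathbf{1}_k,
\]
with $\bfM_i=\tilde\bfD_i^{-1}\bfB_{\cF}^T\bfB_{\cF}\tilde\bfD_i^{-1}$. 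The matrix $\bfM_i$ is invariant under $\bfD_i\mapsto-\bfD_i$. Hence the $j$-th block for $j\neq i$ is odd in $\bfD_i$. Since $\bfD_i\overset{d}{=}-\bfD_i$, its $\mathbb{E}_{\bfD}$ vanishes. This step uses only the sign symmetry of $I$, so it holds for every $\epsilon$. The empty $\mathcal{F}$ contributes $\mathbf{0}_{mk}$.

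\textbf{Diagonal block.} It remains to show that $\mathbf{w}_i:=\mathbb{E}_{\mathcal{F}}\mathbb{E}_{\bfD}[\bfA_{\cF}\bfM_i^{-1}\bfA_{\cF}^T\mathbf{1}_k]\in\mathbb{R}^k$ is a positive multiple of $\mathbf{1}_k$, with a multiple independent of $i$. The key step is to write the summand in terms of full-size objects. Let $\bfP_{\cF}=\bfI_n(:,\cF)$, so that $\bfA_{\cF}=\bfA\bfP_{\cF}$. Then define
\[
\mathbf{w}_i=\mathbb{E}_{\mathcal{F}}\mathbb{E}_{\bfD}\big[\bfA\, \Psi(\mathcal{F},\bfD)\,\bfA^T\mathbf{1}_k\big], \qquad \Psi(\mathcal{F},\bfD):=\bfP_{\cF}\bfM_i^{-1}\bfP_{\cF}^T .
\]
Equivalently, $\Psi$ is the Moore--Penrose inverse of $\bfP_{\cF}\bfP_{\cF}^T\big(\sum_t \bfD_i^{-1}\bfD_t\bfA^T\bfA\bfD_t\bfD_i^{-1}\big)\bfP_{\cF}\bfP_{\cF}^T$ restricted to the coordinates in $\mathcal{F}$.

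Now take $u,v\in[k]$ and the given $\sigma,\pi$. From $\bfP_\sigma\bfA\bfQ_\pi^T=\bfA$ I get $\bfA^T\bfA=\bfQ_\pi\bfA^T\bfA\bfQ_\pi^T$ and $\bfA^T\mathbf{1}_k=\bfQ_\pi\bfA^T\mathbf{1}_k$. Relabel workers by $\pi$: replace $\mathcal{F}$ by $\pi(\mathcal{F})$ and each $\bfD_t$ by $\bfQ_\pi\bfD_t\bfQ_\pi^T$. Because failures are i.i.d.\ and the diagonal entries of the $\bfD_t$ are i.i.d., this change of variables preserves the joint law. It also maps $\Psi$ to $\bfQ_\pi\Psi\bfQ_\pi^T$.

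Hence $\mathbf{w}_i=\bfA\bfQ_\pi\,\mathbb{E}[\Psi]\,\bfQ_\pi^T\bfA^T\mathbf{1}_k=\bfP_\sigma^T\bfA\,\mathbb{E}[\Psi]\,\bfA^T\mathbf{1}_k=\bfP_\sigma^T\mathbf{w}_i$. Therefore $\mathbf{e}_u^T\mathbf{w}_i=\mathbf{e}_u^T\bfP_\sigma^T\mathbf{w}_i$, and the condition $\mathbf{e}_u^T\bfP_\sigma=\mathbf{e}_v^T$ (read with the appropriate transpose) shows that coordinates $u$ and $v$ of $\mathbf{w}_i$ agree. So $\mathbf{w}_i=\beta_i\mathbf{1}_k$.

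\textbf{Sign and independence of $i$.} The same relabeling of the index of the diagonal matrices shows that $\beta_i$ does not depend on $i$, since the $\bfD_t$ are exchangeable. For positivity, compute
\[
k\beta_i=\mathbf{1}_k^T\mathbf{w}_i=\mathbb{E}_{\mathcal{F}}\mathbb{E}_{\bfD}\big[\mathbf{x}^T\bfM_i^{-1}\mathbf{x}\big], \qquad \mathbf{x}=\bfA_{\cF}^T\mathbf{1}_k .
\]
This is nonnegative because $\bfM_i$ is positive definite, by the invertibility assumption. It is strictly positive because $\mathcal{F}=[n]$ has probability $(1-q)^n>0$ and $\bfA^T\mathbf{1}_k\neq\mathbf{0}_n$. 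Assembling the blocks gives $\mathbb{E}_{\bfD}\mathbb{E}_{\mathcal{F}}[\bfB_{\cF}\mathbf{r}_i^*]=\beta\mathbf{f}_i$, hence $\mathbb{E}_{\bfD}[\mathbb{E}_{\mathcal{F}}[\bfB_{\cF}\bfR]]=\beta\bfF$ with $\beta\neq0$.

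\textbf{Main obstacle.} The delicate step is the relabeling argument. One must check carefully that conjugating the embedded inverse by $\bfQ_\pi$ matches the change $\mathcal{F}\mapsto\pi(\mathcal{F})$. One must also check that it is the average over $\mathcal{F}$, not the per-$\mathcal{F}$ quantity, that is invariant; this is why $\mathbb{E}_{\mathcal{F}}$ must be taken before invoking symmetry. Finally, the convention for $\bfP_\sigma$ (row versus column action) must be tracked correctly.
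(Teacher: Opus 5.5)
Your proposal is correct and follows essentially the same route as the paper: oddness in $\bfD_i$ kills the off-diagonal blocks, averaging over $\mathcal{F}$ and relabeling workers by $\pi$ (using $\bfD_t\overset{d}{=}\bfQ_\pi\bfD_t\bfQ_\pi^T$) makes the diagonal block a fixed point of $\bfP_\sigma$, and the event $\mathcal{F}=[n]$ together with $\bfA^T\mathbf{1}_k\neq\mathbf{0}_n$ gives positivity. One small slip: $\bfP_\sigma\bfA\bfQ_\pi^T=\bfA$ gives $\bfA\bfQ_\pi=\bfP_\sigma\bfA$, so you actually obtain $\mathbf{w}_i=\bfP_\sigma\mathbf{w}_i$ rather than $\bfP_\sigma^T\mathbf{w}_i$ (the two statements are equivalent), after which $\mathbf{e}_u^T\bfP_\sigma=\mathbf{e}_v^T$ applies directly with no transpose juggling.
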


\begin{proof}
Fix $i \in [m]$. Let $\bfr_i* = \text{argmin}_{\mathbf{r}} \|\bfB_{\cF} \mathbf{r} - \mathbf{f}_i \|_2^2$. When $\cF$ is empty, we assume $\mathbf{B}_{\cF} = \mathbf{0}_{mk}$. Thus, $\mathbf{B}_{\cF}\mathbf{r}_i* = \mathbf{0}_{mk}$. For non-empty $\cF$, we have
\[ \bfB_{\cF} =\begin{bmatrix} \mathbf{A} \bfD_1 \\ \bfA \bfD_2\\ \vdots \\ \bfA\bfD_m\end{bmatrix} \mathbf{P}_{\mathcal{F}},\] 
where $\mathbf{P}_{\mathcal{F}} = \mathbf{I}_n(:, \mathcal{F})$ is the selection matrix that selects the columns of $\mathbf{B}$ corresponding to the set $\mathcal{F}$. Also, for non-empty $\cF$, $\bfr_i* = (\bfB_{\cF}^T\bfB_{\cF})^{-1} \bfB_{\cF}^T \mathbf{f}_i$. We have,  
\begin{align*} 
\bfB_{\cF} \bfr_i* &=  \bfB_{\cF}(\bfB_{\cF}^T\bfB_{\cF})^{-1} \bfB_{\cF}^T  \mathbf{f}_i\\ 
& = \begin{bmatrix} \mathbf{A} \bfD_1 \\ \bfA \bfD_2\\ \vdots \\ \bfA\bfD_m\end{bmatrix} \mathbf{P}_{\mathcal{F}} \left(\sum_{t=1}^m\mathbf{P}_{\mathcal{F}}^T \mathbf{D}_t\mathbf{A}^T\mathbf{A}\mathbf{D}_t \mathbf{P}_{\mathcal{F}}\right)^{-1}\mathbf{P}_{\mathcal{F}}^T \mathbf{D}_i \mathbf{A}^T \mathbf{1}_k.\\
\end{align*}

For $j \in [m]$, let 

\[
h_{ji}(\mathbf{A}, \mathcal{F}, \mathbf{D}) :=
\begin{cases}
\mathbf{A}\mathbf{D}_j\mathbf{P}_{\mathcal{F}}
\left(\sum_{t = 1}^m \mathbf{P}_{\mathcal{F}}^T \mathbf{D}_t \mathbf{A}^T \mathbf{A} \mathbf{D}_t \mathbf{P}_{\mathcal{F}}\right)^{-1}
\mathbf{P}_{\mathcal{F}}^T \mathbf{D}_i \mathbf{A}^T, & \mathcal{F} \neq \emptyset, \\[0.5em]
\mathbf{0}_{k \times k}, & \mathcal{F} = \emptyset.
\end{cases}
\]

Also, let 
\[ f_{ji}( \mathbf{A}, \mathbf{D}) := \mathbb{E}_{\mathcal{F}}\left[h_{ji}(\mathbf{A}, \mathcal{F}, \mathbf{D})\right].\]

Now,

\[ \mathbb{E}_{\mathbf{D}} [ \mathbb{E}_{\mathcal{F}}[\mathbf{B}_{\mathcal{F}} \mathbf{r}_i*]] = \mathbb{E}_{\mathbf{D}} \left[ \begin{bmatrix}f_{1i}(\mathbf{A}, \mathbf{D}) \\ f_{2i}(\mathbf{A}, \mathbf{D})\\ \vdots \\ f_{mi}(\mathbf{A}, \mathbf{D}) \end{bmatrix} \mathbf{1}_k\right] .\]
For $j \in [m]$, let $f_j(\mathbf{D}_i)  := \mathbb{E}_{\mathbf{D}_t, t\in [m], t \ne i}[f_{ji}(\mathbf{A}, \mathbf{D}) \mathbf{1}_k] $. Note that if $i \ne j$, 
\[f_j(-\mathbf{D}_i) = - f_j(\mathbf{D}_i).\]
So, $f_j(\mathbf{D}_i)$ is an odd function of $\mathbf{D}_i$. Also, the distribution of $\mathbf{D}_i$ is symmetric, i.e., $\bfD_i \overset{d}= - \bfD_i$. Thus, for $i \ne j$, we have, $\mathbb{E}_{\mathbf{D}_i}[f_j(\mathbf{D}_i)] =  \mathbf{0}_k$.
Therefore, for $i \ne j$, 
\begin{align*}\mathbb{E}_{\mathbf{D}} [f_{ji}(\mathbf{A}, \mathbf{D}) \mathbf{1}_k] &= \mathbb{E}_{\mathbf{D}_i}[\mathbb{E}_{\mathbf{D}_t,   t \in [m], t \ne i}[f_{ji} (\mathbf{A}, \mathbf{D})\mathbf{1}_k]]\\&= \mathbb{E}_{\mathbf{D}_i}[f_j(\mathbf{D}_i)] =  \mathbf{0}_k.\end{align*}

Next we find $\mathbb{E}_{\bfD}[f_{ii}(\bfA, \bfD) \mathbf{1}_k]$. Suppose $u, v \in [k]$. So, by assumption there exist permutations $\sigma$ of $[k]$ and $\pi$ of $[n]$ such that for the corresponding permutation matrices $\bfP_{\sigma}$ and $\bfQ_{\pi}$, $\mathbf{e}_u ^T\bfP_{\sigma}= \mathbf{e}_{v}^T$ and $\bfP_{\sigma} \mathbf{A}\bfQ_{\pi}^T = \bfA$.
We will show that $\mathbf{P}_{\sigma} f_{ii}( \mathbf{A}, \mathbf{D}) \mathbf{P}_{\sigma}^T \overset{d}{=} f_{ii}( \mathbf{A}, \mathbf{D})$.






Let $\mathcal{H}_r$ be the set of subsets of $[n]$ of size $r$, $r \in [n]$. Let $\pi_{\mathcal{H}_r}: \mathcal{H}_r \to \mathcal{H}_r$ be such that for $\cF \in \mathcal{H}_r$, $\pi_{\mathcal{H}_r}(\mathcal{F}) = \{\pi(u)| u \in \cF\}$. Since $\pi$ is a bijection on $[n]$, $\pi_{\mathcal{H}_r}$ is also a bijection on $\mathcal{H}_r$. Also, since $\bfQ_{\pi} \mathbf{e}_w = \mathbf{e}_{\pi(w)}$ for $w \in [n]$, $\bfQ_{\pi} \bfP_{\cF} = \bfP_{\pi_{\mathcal{H}_r}(\cF)}$. Let $\mathbf{Y}(\bfP_{\cF}, \bfD) := \sum_{t=1}^m 
\mathbf{P}_{\cF}^T \mathbf{D}_t \mathbf{A}^T\mathbf{A}\mathbf{D}_t \mathbf{P}_{\cF}$. Let $c_r = q^{n-r} (1-q)^{r}$. Since we assume each worker straggles independently with probability $q$, we have

\begin{align*}
f_{ii}(\mathbf{A}, \mathbf{D})
& \overset{(a)}= q^n  \mathbf{0}_{k \times k} + \sum_{r = 1}^n 
    \sum_{\substack{F \subseteq [n] \\ |F| = r}} c_r \, h_{ii}(\mathbf{A}, F, \mathbf{D})\\
& \overset{(b)}= \sum_{r = 1}^n 
    \sum_{\substack{F \subseteq [n] \\ |F| = r}}
    c_r \, h_{ii}(\mathbf{A}, F, \mathbf{D})\\
& \overset{(c)}= \sum_{r = 1}^n 
    \sum_{\substack{F \subseteq [n] \\ |F| = r}}
    c_r \,
    h_{ii}(\mathbf{A}, \pi_{\mathcal{H}_r}(F), \mathbf{D})\\
& \overset{(d)}= \sum_{r = 1}^n 
    \sum_{\substack{F \subseteq [n] \\ |F| = r}}
    c_r \,
   \mathbf{A}\mathbf{D}_i\mathbf{P}_{\pi_{\mathcal{H}_r}(F)} (\mathbf{Y}(\bfP_{\pi_{\mathcal{H}_r}(F)}, \bfD))^{-1}
    \mathbf{P}_{\pi_{\mathcal{H}_r}(F)}^T  \mathbf{D}_i\mathbf{A}^T\\ 
& \overset{(e)}= \sum_{r = 1}^n 
    \sum_{\substack{F \subseteq [n] \\ |F| = r}}
   c_r \,
   \mathbf{A}\mathbf{D}_i\bfQ_{\pi}\mathbf{P}_{F} (\mathbf{Y}(\bfQ_{\pi}\bfP_{F}, \bfD))^{-1}
    \mathbf{P}_{F}^T \bfQ_{\pi}^T \mathbf{D}_i\mathbf{A}^T.
\end{align*}

Here, $(a)$ holds by definition of $f_{ii}(\bfA, \bfD)$. $(b)$ holds since $h_{ii}(\bfA, \cF, \bfD)$ is a $k \times k$ zero matrix for $|\cF| = 0$. $(c)$ holds since $\pi_{\mathcal{H}_r}$ is a bijection. $(d)$ holds by definition of $h_{ii}(\mathbf{A}, \pi_{\mathcal{H}_r}(F), \mathbf{D})$. $(e)$ holds since  $\bfQ_{\pi} \bfP_{F} = \bfP_{\pi_{\mathcal{H}_r}(F)}$, for each $r \in [n]$. 

Recall that $\mathbf{D} = (\mathbf{D}_1, \dots, \mathbf{D}_m)$. In the expression below, $\mathbf{Q}_{\pi} \mathbf{D} \mathbf{Q}_{\pi}^T$ is used to denote $(\mathbf{Q}_{\pi} \mathbf{D}_1 \mathbf{Q}_{\pi}^T, \dots, \mathbf{Q}_{\pi} \mathbf{D}_m \mathbf{Q}_{\pi}^T)$. We have,
\begin{align*}
&\mathbf{P}_{\sigma} f_{ii}( \mathbf{A}, \mathbf{D}) \mathbf{P}_{\sigma}^T\\
& \overset{(f)}= q^n \mathbf{0}_{k \times k} + \sum_{r = 1}^n 
    \sum_{\substack{F \subseteq [n] \\ |F| = r}} c_r \, \mathbf{P}_{\sigma} h_{ii}(\mathbf{A}, F, \mathbf{D}) \mathbf{P}_{\sigma}^T\\
& \overset{(g)}=  \sum_{r = 1}^n 
    \sum_{\substack{F \subseteq [n]\\ |F| = r}} 
    c_r \,
    \mathbf{P}_{\sigma} \mathbf{A}\mathbf{D}_i\mathbf{P}_{F} (\bfY(\bfP_{F}, \bfD))^{-1}
    \mathbf{P}_{F}^T  \mathbf{D}_i\mathbf{A}^T\mathbf{P}_{\sigma}^T\\[0.5em]
& \overset{(h)}=  \sum_{r = 1}^n 
    \sum_{\substack{F \subseteq [n]\\ |F| = r}} 
    c_r \,
    \mathbf{A}\mathbf{Q}_{\pi}\mathbf{D}_i\mathbf{Q}_{\pi}^T\mathbf{Q}_{\pi}\mathbf{P}_{F} (\bfY(\bfQ_{\pi}\bfP_{\cF}, \bfQ_{\pi}\bfD \bfQ_{\pi}^T))^{-1}
    \mathbf{P}_{F}^T\mathbf{Q}_{\pi}^T\mathbf{Q}_{\pi}  
    \mathbf{D}_i \mathbf{Q}_{\pi}^T\mathbf{A}^T\\[0.5em]
& \overset{(i)}=  f_{ii}(\mathbf{A}, \mathbf{Q}_{\pi}\mathbf{D} \mathbf{Q}_{\pi}^T).
\end{align*}

Here, $(f)$ holds by definition of $f_{ii}(\bfA, \bfD)$. $(g)$ holds by definition of $h_{ii}(\bfA, F, \bfD)$. $(h)$ holds since $\mathbf{Q}_{\pi}^T \mathbf{Q}_{\pi} = \mathbf{I}_n$ and $\bfP_{\sigma} \bfA \bfQ_{\pi}^T = \bfA$. $(i)$ holds as a consequence of $(e)$. Since for each $t \in [m]$, we have $\mathbf{D}_t \overset{d}{=} \mathbf{Q}_{\pi}\mathbf{D}_t \mathbf{Q}_{\pi}^T$, 
$ f_{ii}(\mathbf{A}, \mathbf{Q}_{\pi}\mathbf{D} \mathbf{Q}_{\pi}^T) \overset{d}{=}  f_{ii}(\mathbf{A}, \mathbf{D}) 
.$
Consequently, 
\[ \mathbf{P}_{\sigma} f_{ii}( \mathbf{A}, \mathbf{D}) \mathbf{P}_{\sigma}^T \overset{d}{=} f_{ii}( \mathbf{A}, \mathbf{D}),\]

and thus $\mathbb{E}_{\bfD}[\mathbf{P}_{\sigma} f_{ii}( \mathbf{A}, \mathbf{D}) \mathbf{P}_{\sigma}^T] = \mathbb{E}_{\bfD}[f_{ii}( \mathbf{A}, \mathbf{D})]$. Using this distributional equivalence and the properties of $\mathbf{P}_\sigma$ we have,
\begin{align*}
(\mathbb{E}_{\mathbf{D}} [f_{ii}( \mathbf{A}, \mathbf{D}) \mathbf{1}_k ])(u) 
& = \mathbb{E}_{\mathbf{D}}[\mathbf{e}_u^T f_{ii}( \mathbf{A}, \mathbf{D}) \mathbf{1}_k]\\
& = \mathbf{e}_u^T  \mathbb{E}_{\mathbf{D}}[\mathbf{P}_{\sigma}f_{ii}( \mathbf{A}, \mathbf{D}) \mathbf{P}_{\sigma}^T] \mathbf{1}_k\\
& = \mathbb{E}_{\mathbf{D}}[ \mathbf{e}_v^Tf_{ii}( \mathbf{A}, \mathbf{D}) \mathbf{1}_k]\\
& = (\mathbb{E}_{\mathbf{D}} [f_{ii}( \mathbf{A}, \mathbf{D}) \mathbf{1}_k ])(v).
\end{align*}

By our assumptions $u,v \in [k]$ are arbitrary. Thus, it follows that all the entries of $\mathbb{E}_{\mathbf{D}} [f_{ii}( \mathbf{A}, \mathbf{D}) \mathbf{1}_k ]$ are equal. Moreover, $h_{ii}(\bfA, \cF, \bfD)$ is positive semidefinite for every non-empty $\cF$. For $\mathcal F=[n]$, since $\mathbf{A}^T\mathbf{1}_k \ne \mathbf{0}_n$ and $\mathbf D_i$ is invertible,
\[
\mathbf 1_k^T h_{ii}(\mathbf A,[n],\mathbf D)\mathbf 1_k
=
(\mathbf D_i\mathbf A^T\mathbf 1_k)^T
(\mathbf B^T\mathbf B)^{-1}
(\mathbf D_i\mathbf A^T\mathbf 1_k)>0.
\]
Since $q < 1$, $\cF = [n]$ has nonzero probability. It follows that
$
\mathbf 1_k^T\mathbb{E}_{\mathbf D}
[f_{ii}(\mathbf A,\mathbf D)]\mathbf 1_k>0
$.
Hence, $\mathbb{E}_{\mathbf D}[f_{ii}(\mathbf A,\mathbf D)\mathbf 1_k]$
is a nonzero vector. Consequently,
$
\mathbb{E}_{\mathbf D}[f_{ii}(\mathbf A,\mathbf D)\mathbf 1_k]
=\beta_i\mathbf 1_k
$, where $\beta_i$ is a nonzero constant.  
Therefore, for $i \in [m]$,
\[\mathbb{E}_{\mathbf{D}} [\mathbb{E}_{\mathcal{F}}[ \mathbf{B}_{\mathcal{F}} \mathbf{r}_i*] ] = \begin{bmatrix}\mathbf{0}_k \\ \vdots \\ \beta_i\mathbf{1}_k\\ \vdots \\ \mathbf{0}_k \end{bmatrix} = \beta_i\mathbf{f}_i.\]
Since the diagonal matrices $\mathbf{D}_i$ are i.i.d., $\mathbb{E}_{\mathbf{D}}[ {f}_{ii}(\bfA,\bfD) \mathbf{1}_{k}] = \mathbb{E}_{\mathbf{D}}[{f}_{jj}(\bfA,\bfD) \mathbf{1}_{k}]$ for all $i,j \in [m]$. Therefore, $\beta_i = \beta_ j = \beta$ (let) for all $i ,j \in [m]$.
Since $\bfR = \begin{bmatrix}\mathbf{r}_1* & \mathbf{r}_2* & \dots & \mathbf{r}_m* \end{bmatrix}$, and $\bfF = \begin{bmatrix}\mathbf{f}_1 & \mathbf{f}_2 & \dots & \mathbf{f}_m \end{bmatrix}$, we have  \[\mathbb{E}_{\mathbf{D}} [\mathbb{E}_{\mathcal{F}}[ \mathbf{B}_{\mathcal{F}} \mathbf{R}] ] = \beta\mathbf{F},\]
where $\beta$ is a nonzero constant.
\end{proof}

\begin{cor}
\label{cor:SRG_unbiased}
 Suppose that the assignment matrix $\mathbf{A}$ is the adjacency matrix of a $(n, \delta, \lambda, \mu)$ vertex-transitive strongly regular graph $G = ([n], E)$. 
Then, for the construction as in $\eqref{eq:effAD1AD2}$,

 \[ \mathbb{E}_{\mathbf{D}}[\mathbb{E}_{\mathcal{F}}[\mathbf{B}_{\cF} \mathbf{R}]]= \beta\mathbf{F},\]
where $\beta$ is a nonzero constant.

\end{cor}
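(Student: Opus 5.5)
The plan is to obtain Corollary~\ref{cor:SRG_unbiased} as a direct application of Lemma~\ref{lemma:SRG_COSET_unbiasedness} with $k=n$ and $\bfA$ the adjacency matrix of $G$. Three hypotheses of that lemma need to be checked.

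\begin{itemize}
\item \textbf{Nonzero column sums.} We need $\bfA^T\mathbf{1}_n\neq\mathbf{0}_n$. Since $\bfA^T\mathbf{1}_n=\bfA\mathbf{1}_n=\delta\mathbf{1}_n$ and $G$ is not empty, we have $\delta>0$, so this holds immediately.
\item \textbf{Permutation symmetry.} Fix $u,v\in[n]$. By vertex-transitivity there is an automorphism $\tau$ with $\tau(v)=u$. We take $\sigma=\pi=\tau$, so $\bfP_\sigma=\bfQ_\pi=\bfP_\tau$. The automorphism definition then gives $\bfP_\tau\bfA\bfP_\tau^T=\bfA$. Using $\mathbf{e}_{\tau(i)}^T\bfP_\tau=\mathbf{e}_i^T$ with $i=v$, we get $\mathbf{e}_u^T\bfP_\tau=\mathbf{e}_v^T$. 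If one prefers to keep the roles of $u$ and $v$ as stated, use $\tau^{-1}$ instead; this is just index bookkeeping.
\item \textbf{Invertibility of $\bfB_{\cF}^T\bfB_{\cF}$ for every nonempty $\cF$.} This is the only substantive step.
\end{itemize}

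For the invertibility step, I would reuse the argument from the proof of Corollary~\ref{cor:SRG}. When $\delta\neq\mu$, every eigenvalue of $\bfA$ ($\delta,\tilde r,\tilde s$) is nonzero. Hence $\bfA^T\bfA=\bfA^2\succ 0$, and by Cauchy interlacing its principal submatrix $\mathbf{\Delta}_{\cF}=(\bfA^2)(\cF,\cF)$ is positive definite. Each $\tilde\bfD_t$ is invertible, because the entries lie in $I$ and $\epsilon<1$ keeps them away from $0$. Therefore $\bfB_{\cF}^T\bfB_{\cF}=\sum_t\tilde\bfD_t\mathbf{\Delta}_{\cF}\tilde\bfD_t$ is positive definite, and this holds for every $\epsilon\in[0,1)$, not only for $\epsilon=0$.

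The main obstacle is that the corollary as stated does not explicitly assume $\delta\neq\mu$. If $\delta=\mu$, then $\bfA$ can be singular; an example is a complete multipartite SRG, where $\tilde r=0$. In that case I would first try to argue that $\bfA_{\cF}$ still has full column rank for the relevant $\cF$, but this fails in general (for instance, two twin vertices in $\cF$ give identical columns). So I would adopt the standing hypothesis $0<\delta\neq\mu$ from Corollary~\ref{cor:SRG}, which the paper's convergence section appears to presuppose.

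Alternatively, I would check whether the proof of Lemma~\ref{lemma:SRG_COSET_unbiasedness} survives if $\bfB_{\cF}^T\bfB_{\cF}$ is merely PSD and the inverse is replaced by the pseudoinverse. The key points there are the oddness in $\bfD_i$, the permutation equivariance, and the positivity at $\cF=[n]$. The first two survive the replacement. The third needs $\mathbf{1}^T\bfA\bfD_i(\bfB^T\bfB)^{+}\bfD_i\bfA^T\mathbf{1}>0$, which holds because $\bfD_i\bfA^T\mathbf{1}=\delta\bfD_i\mathbf{1}$ lies in the range of $\bfB^T$, namely as the $i$-th block column applied to $\mathbf{1}$.

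With the three hypotheses verified, Lemma~\ref{lemma:SRG_COSET_unbiasedness} yields $\mathbb{E}_{\bfD}[\mathbb{E}_{\cF}[\bfB_{\cF}\bfR]]=\beta\bfF$ with $\beta\neq 0$, which is the claim.
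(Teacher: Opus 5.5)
Your proof is correct and follows the paper's argument exactly. You verify the hypotheses of Lemma~\ref{lemma:SRG_COSET_unbiasedness} with $\sigma=\pi$ an automorphism sending $v$ to $u$, use $\mathbf{A}^T\mathbf{1}_n=\delta\mathbf{1}_n\neq\mathbf{0}_n$, and import the invertibility of $\mathbf{B}_{\mathcal{F}}^T\mathbf{B}_{\mathcal{F}}$ from the proof of Corollary~\ref{cor:SRG}. Your remark that this last step needs $0<\delta\neq\mu$ is accurate: the paper relies on that same hypothesis implicitly when it cites Corollary~\ref{cor:SRG}, so stating it explicitly is the right fix.
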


\begin{proof}
By definition, for a vertex transitive strongly regular graph, for any $u, v \in [n]$ there exists an automorphism $\sigma$ such that $\sigma(v) = u$. So, for                    the corresponding permutation matrix $\bfP_{\sigma}$, $\mathbf{e}_{u}^T\bfP_{\sigma} = \mathbf{e}_{v}^T$ and $\bfP_{\sigma}\bfA \bfP_{\sigma}^T = \bfA$. Also, as proved in Corollary $\ref{cor:SRG}$, $\bfB_{\cF}^T \bfB_{\cF}$ is invertible for non-empty $\cF \subseteq [n]$. Since $\bfA^T \mathbf{1}_k = \delta \mathbf{1}_n$, the conclusion follows from Lemma $\ref{lemma:SRG_COSET_unbiasedness}$.
\end{proof}

\begin{cor}
\label{cor:COSET_unbiased}
Suppose that the assignment matrix $\mathbf{A}$ is the bi-adjacency matrix of a $(k,m,\delta)$ coset bipartite graph such that $k = p^a$, where $p$ is a prime, $a \in \mathbb{Z}_{\ge 1}$ and $p \nmid \delta$. Then, for the construction as in $\eqref{eq:effAD1AD2}$ with $\epsilon > 0$,
\[ \mathbb{E}_{\mathbf{D}}[\mathbb{E}_{\mathcal{F}}[\mathbf{B}_{\cF} \mathbf{R}]]= \beta\mathbf{F},\]
where $\beta$ is a nonzero constant.
\end{cor}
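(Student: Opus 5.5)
The plan is to apply Lemma~\ref{lemma:SRG_COSET_unbiasedness} directly. That requires checking three hypotheses for the coset bipartite graph bi-adjacency matrix $\mathbf{A}$:
\begin{enumerate}[label=(\roman*)]
\item $\mathbf{A}^T\mathbf{1}_k\neq\mathbf{0}_n$;
\item the row-transitive symmetry: for any $u,v\in[k]$ there are permutations $\sigma$ of $[k]$ and $\pi$ of $[n]$ with $\mathbf{e}_u^T\mathbf{P}_\sigma=\mathbf{e}_v^T$ and $\mathbf{P}_\sigma\mathbf{A}\mathbf{Q}_\pi^T=\mathbf{A}$;
\item invertibility of $\mathbf{B}_{\mathcal{F}}^T\mathbf{B}_{\mathcal{F}}$ for every non-empty $\mathcal{F}$.
\end{enumerate}

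Condition (i) is immediate, since every right vertex has degree $\delta\ge 1$, so $\mathbf{A}^T\mathbf{1}_k=\delta\mathbf{1}_n$.

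For (iii), I invoke Proposition~\ref{prop: cos_inv}: under $k=p^a$, $p\nmid\delta$ and $\epsilon>0$, $\mathbf{B}$ is almost surely invertible. An invertible square $\mathbf{B}$ has linearly independent columns. Hence every column submatrix $\mathbf{B}_{\mathcal{F}}$ has full column rank, and $\mathbf{B}_{\mathcal{F}}^T\mathbf{B}_{\mathcal{F}}$ is positive definite, for all non-empty $\mathcal{F}$ simultaneously. This holds on an event of probability one, which suffices because the expectations are unaffected by null sets. This is the same reasoning used in Corollary~\ref{cor:bipartite}.

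For (ii), I use the group structure of the construction. Work with 0-indexed labels: left vertices are the cosets $i+H$, $i\in\{0,\dots,k-1\}$, and right vertices are $x\in\mathbb{Z}_{mk}$. The adjacency rule is that $i+H$ is adjacent to $x$ iff $x\in i+S$. Since $S$ is a union of $H$-cosets, this is equivalent to $(x-i)\bmod k\in B$, as shown in the proof of Proposition~\ref{prop: cos_inv}.

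Given $u,v$, set $\ell:=(u-v)\bmod k$. Take $\sigma$ to be the cyclic shift $i\mapsto (i+\ell)\bmod k$ on left vertices and $\pi$ to be $x\mapsto (x+\ell)\bmod mk$ on right vertices. Then $(\pi(x)-\sigma(i))\equiv (x-i)\pmod k$, because $k\mid mk$. So $\mathbf{A}(\sigma(i),\pi(x))=\mathbf{A}(i,x)$ for all $i,x$, which in matrix form is $\mathbf{P}_\sigma^T\mathbf{A}\mathbf{Q}_\pi=\mathbf{A}$, equivalently $\mathbf{P}_\sigma\mathbf{A}\mathbf{Q}_\pi^T=\mathbf{A}$. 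Also $\mathbf{e}_u^T\mathbf{P}_\sigma=\mathbf{e}_{\sigma^{-1}(u)}^T=\mathbf{e}_v^T$ by the definition $\mathbf{e}_{\sigma(i)}^T\mathbf{P}_\sigma=\mathbf{e}_i^T$.

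The only delicate point is getting the direction of the permutation conventions right, i.e., whether to shift by $\ell$ or by $-\ell$. Since shifts form a group, either choice can be adjusted by inverting the shift, so no real obstruction is expected. With (i)–(iii) verified, Lemma~\ref{lemma:SRG_COSET_unbiasedness} gives $\mathbb{E}_{\mathbf{D}}[\mathbb{E}_{\mathcal{F}}[\mathbf{B}_{\mathcal{F}}\mathbf{R}]]=\beta\mathbf{F}$ with $\beta\neq 0$.
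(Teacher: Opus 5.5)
Your proposal is correct and follows the paper's proof: you verify the three hypotheses of Lemma~\ref{lemma:SRG_COSET_unbiasedness}, using Proposition~\ref{prop: cos_inv} for almost-sure invertibility of every $\mathbf{B}_{\mathcal{F}}^T\mathbf{B}_{\mathcal{F}}$, and you get the symmetry from simultaneous cyclic shifts by the same amount on rows (mod $k$) and columns (mod $mk$), with the same shift $(u-v)\bmod k$ the paper uses. Your reduction of adjacency to the condition $(x-i)\bmod k\in B$ turns the invariance check into one line, which is shorter than the paper's argument via integer representatives, and your permutation conventions are already correct as written.
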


\begin{proof}
Let $G = (L \cup R, E)$ be a  $(k, m, \delta)$ coset bipartite graph. So, $R = \mathbb{Z}_{mk}$ and $L = \{\, i+H : i=0,1,\dots,k-1 \,\}$ where $H$ is an order $m$ subgroup of $\mathbb{Z}_{mk}$. By construction, $i + H \in L$ is adjacent to $x \in R = \mathbb{Z}_{mk}$ if and only if $x \in i + S$, for $i \in \{0, \dots, k-1\}$ and $x \in \mathbb{Z}_{mk}$. Consequently, for its bi-adjacency matrix $\bfA$, we have $\bfA(i+1, x+1) = 1$ if and only if $x \in i + S $ (note that the sets of row and column indices of $\bfA$ are $[k]$ and $[mk]$ respectively).
For $g \in \{ 0, \dots, k-1\}$, define the row permutation $\sigma_g:[k]\to[k]$ by
\[
\sigma_g(i) =  ((i-1+g) \bmod{k}) + 1,
\]
and the column permutation $\tau_g:[mk]\to [mk]$ by
\[
\tau_g(x) =  ((x-1+g) \bmod{mk}) +1 .
\]
Let $\mathbf{P}_{\sigma_g}\in\{0,1\}^{k\times k}$ and $\mathbf{Q}_{\tau_g}\in\{0,1\}^{(mk)\times(mk)}$ denote the corresponding permutation matrices.
We will first show that for every $g\in \{0 , \dots, k-1\}$,
$\bfP_{\sigma_g}\bfA \mathbf{Q}_{\tau_g}^T= \bfA.$
We have,
\[
(\bfP_{\sigma_g}\bfA \mathbf{Q}_{\tau_g}^T )(i+1,x+1)
= \bfA(\sigma_g^{-1}(i+1),\,\tau_g^{-1}(x+1))
\]
Moreover, by the definition of $\sigma_g$ and $\tau_g$,
\[
\sigma_g^{-1}(y) = ((y-1-g)\bmod{k})+1,
\qquad
\tau_g^{-1}(z) = ((z-1-g)\bmod{mk})+1.
\]
Thus,
\[
(\bfP_{\sigma_g}\bfA \mathbf{Q}_{\tau_g}^T )(i+1,x+1)
= \bfA\!\Big( ((i-g)\bmod{k})+1,\; ((x-g)\bmod{mk})+1 \Big).
\]

But, by definition, $\bfA\big((i-g)\bmod k + 1,\,(x-g)\bmod mk + 1\big)=1$
\text{if and only if }
$(x-g)\bmod mk \in \big((i-g)\bmod k\big)+S$. 
Let $i':=(i-g)\bmod{k}\in\{0,\dots,k-1\}$ and $x':=(x-g)\bmod{mk}\in\{0,\dots,mk-1\}$.
We can view $i',x'$ as elements of $\mathbb{Z}_{mk}$ (via the natural embedding of $\{0,\dots,k-1\}$ into $\mathbb{Z}_{mk}$). 
Since $i' = i-g \pmod{k}$, there exists an integer $t$ such that
\[
i' = i-g + tk.
\]
Also, since $x' = x-g \pmod{mk}$, there exists an integer $u$ such that
\[
x' = x-g + umk.
\]
So, $x-g+umk \in i -g + tk + S$. By definition of $S$, $tk +S = S$. So, $x-g+umk \in i -g + S$. Therefore, by definition of $S$ there exists $b \in S$ such that for some integer $v$, 
\[ x-g + umk = i-g+b + vmk.\] So, $x = i+b + (v-u) mk$. Since $b \in S$, $b + (v-u)mk \in S$. Now, $ i + b + (v-u)mk = (i +  b  + (v-u)mk) \bmod mk$, since $x < mk$. Thus, $i +  b  + (v-u)mk \in i +S$ and therefore, $x \in i +S$. Similarly we can show that if $x \in i +S$, then $x' \in i' +S$. Consequently, 
$\bfA\!\Big( ((i-g)\bmod{k})+1,\; ((x-g)\bmod{mk})+1 \Big)=1$
if and only if  $\bfA(i+1,x+1)=1$. Since the entries of $\bfA$ are either $0$ or $1$,  $(\bfP_{\sigma_g} \bfA \mathbf{Q}_{\tau_g}^{T})(i+1,x+1)= \bfA(i+1,x+1)$ for all $i \in \{0, \dots, k-1\}$ and $x \in \mathbb{Z}_{mk}$. So, $\bfP_{\sigma_g} \bfA \mathbf{Q}_{\tau_g}^{T} = \bfA$. Now, given any pair of rows $i,j \in [k]$, we can choose $g =  (i-j) \bmod k$. Then, $\sigma_g(j) = i$. Thus,
$
\mathbf{e}_i^T\bfP_{\sigma_g}  = \mathbf{e}_j^T.
$
So, for any $i,j \in [k]$, there exist permutation matrices $\bfP_{\sigma_g}$ and $\bfQ_{\tau_g}$ such that $\bfP_{\sigma_g} \bfA \bfQ_{\tau_g}^T = \bfA$ and $\mathbf{e}_i^T \bfP_{\sigma_g} = \mathbf{e}_j^T$. Also, as proved in Proposition $\ref{prop: cos_inv}$, for $k = p^a$ where $p$ is a prime, $a \in \mathbb{Z}_{\ge 1}$ and $p \nmid \delta$, $\bfB$ is almost surely invertible and thus $\bfB_{\cF}^T \bfB_{\cF}$ is invertible for non-empty $\cF \subseteq [n]$. Since $\bfA^T \mathbf{1}_k = \delta \mathbf{1}_n$, the conclusion follows from Lemma $\ref{lemma:SRG_COSET_unbiasedness}$.
\end{proof}

\subsection{Bounded Variance of the Computed Gradient}

For the construction in \eqref{eq:hada_coset_cons}, the corresponding unbiasedness result is established in Lemma \ref{lemma:Cost_Hadamard_Convergence} in Appendix \ref{app:hada_cos_unbiased}.

\begin{thm}
\label{thm:Bound_variance}
Suppose that the spectral norm of $\mathbf{Z}$ is bounded, i.e.,
there exists a constant $c_z>0$ such that $\|\mathbf{Z}\|_2\leq c_z$. Then, for the construction as in $\eqref{eq:effAD1AD2}$ with BIBD, vertex-transitive SRG and coset bipartite graph assignment matrices, and for the construction as in $\eqref{eq:hada_coset_cons}$, the variance of the rescaled computed gradient is bounded.
\end{thm}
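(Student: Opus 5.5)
The rescaled computed gradient is $\frac{1}{\zeta}\mathbf{Z}\mathbf{B}_{\mathcal{F}}\mathbf{R}$, where $\zeta$ is the positive unbiasedness constant: $\alpha$ from Lemma~\ref{lemma:BIBD_unbiasedness}, $\beta$ from Lemma~\ref{lemma:SRG_COSET_unbiasedness} and its Corollaries~\ref{cor:SRG_unbiased}--\ref{cor:COSET_unbiased}, or the constant of Lemma~\ref{lemma:Cost_Hadamard_Convergence} for \eqref{eq:hada_coset_cons}. Since the variance is at most the second moment, it suffices to bound
\[
\mathbb{E}_{\mathbf{D}}\mathbb{E}_{\mathcal{F}}\bigl[\|\tfrac{1}{\zeta}\mathbf{Z}\mathbf{B}_{\mathcal{F}}\mathbf{R}\|_F^2\bigr]
\le \frac{c_z^2}{\zeta^2}\,\mathbb{E}_{\mathbf{D}}\mathbb{E}_{\mathcal{F}}\bigl[\|\mathbf{B}_{\mathcal{F}}\mathbf{R}\|_F^2\bigr],
\]
using $\|\mathbf{Z}\mathbf{M}\|_F\le\|\mathbf{Z}\|_2\|\mathbf{M}\|_F$ and $\|\mathbf{Z}\|_2\le c_z$. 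The constant $\zeta$ does not depend on the iteration $t$, because $\mathbf{A}$, $m$, $q$ and the distribution of the $\mathbf{D}_i$ are fixed across iterations. The problem therefore reduces to a uniform deterministic bound on $\|\mathbf{B}_{\mathcal{F}}\mathbf{R}\|_F^2$.

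The key observation is that, for the optimal least-squares decoder, $\mathbf{B}_{\mathcal{F}}\mathbf{r}_i^*=\mathbf{P}\mathbf{f}_i$, where $\mathbf{P}=\mathbf{B}_{\mathcal{F}}(\mathbf{B}_{\mathcal{F}}^T\mathbf{B}_{\mathcal{F}})^{-1}\mathbf{B}_{\mathcal{F}}^T$ is the orthogonal projection onto the column space of $\mathbf{B}_{\mathcal{F}}$. When $\mathcal{F}=\emptyset$, it equals $\mathbf{0}$ by convention. Since projections are contractions,
\[
\|\mathbf{B}_{\mathcal{F}}\mathbf{R}\|_F^2=\sum_{i=1}^m\|\mathbf{P}\mathbf{f}_i\|_2^2\le\sum_{i=1}^m\|\mathbf{f}_i\|_2^2=mk
\]
for every realization of $\mathbf{D}$ and every $\mathcal{F}$. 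Taking expectations gives a second moment at most $c_z^2mk/\zeta^2$, and subtracting the squared mean only decreases this quantity. The resulting bound on the variance is a constant independent of $t$.

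The step that needs care is verifying that this projection formula applies in each case. The formula requires $\mathbf{B}_{\mathcal{F}}^T\mathbf{B}_{\mathcal{F}}$ to be invertible for every nonempty $\mathcal{F}$, or at least almost surely. For BIBDs and SRGs with $\epsilon=0$, this holds by the proofs of Corollaries~\ref{cor:BIBD} and \ref{cor:SRG}. For coset graphs with $\epsilon>0$, it holds almost surely by Proposition~\ref{prop: cos_inv}, and a null set does not affect expectations. For \eqref{eq:hada_coset_cons}, $\mathbf{B}=\mathbf{H}_m^T\otimes\mathbf{C}$ is invertible because $\mathbf{C}$ is, so every column subset has full column rank.

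If invertibility ever failed, I would instead take $\mathbf{R}$ to be the minimum-norm least-squares solution. Then $\mathbf{B}_{\mathcal{F}}\mathbf{R}$ is still the projection of $\mathbf{F}$ onto the column space, so the same bound $mk$ holds. The only remaining detail is confirming that $\zeta\neq 0$, which is exactly what the cited unbiasedness lemmas provide.
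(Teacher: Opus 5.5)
Your argument is correct, but it takes a different route from the paper's.

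The paper bounds the centered quantity directly. From unbiasedness it uses the bias--variance identity $\mathbb{E}\|\mathbf{B}_{\mathcal{F}}\mathbf{R}-\mathbf{F}\|_F^2=\mathbb{E}\|\mathbf{B}_{\mathcal{F}}\mathbf{R}-\zeta\mathbf{F}\|_F^2+(1-\zeta)^2mk$. This gives a variance of at most $\frac{c_z^2}{\zeta^2}\mathbb{E}[\operatorname{Err}_{\mathcal{F}}(\mathbf{B})]$. It then imports the averaged error bounds of Appendix~\ref{sec:exp_app_err_rand_strr} (Lemma~\ref{lemma:expected_error_random_straggling}, its three corollaries, and Lemma~\ref{lemma:hada_coset_bound}) to obtain a constant $U$.

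You instead bound the uncentered second moment. You use only the fact that the least-squares fit $\mathbf{B}_{\mathcal{F}}\mathbf{R}$ is the orthogonal projection of $\mathbf{F}$, so $\|\mathbf{B}_{\mathcal{F}}\mathbf{R}\|_F^2\le mk$ for every realization. This is more elementary and self-contained. It needs none of the error analysis, and it holds for any least-squares minimizer, so invertibility enters only through the cited unbiasedness lemmas. It also applies verbatim to any construction with a nonzero unbiasedness constant.

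The cost is a cruder constant, $c_z^2mk/\zeta^2$ instead of $c_z^2U/\zeta^2$. For instance, for \eqref{eq:hada_coset_cons} the paper has $U=nq$, which vanishes as $q\to 0$, while $mk$ does not.

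Your projection viewpoint can close this gap. Since $\|\mathbf{P}\mathbf{F}\|_F^2=\langle\mathbf{F},\mathbf{P}\mathbf{F}\rangle_F$, unbiasedness gives $\mathbb{E}\|\mathbf{B}_{\mathcal{F}}\mathbf{R}\|_F^2=\zeta mk$ exactly. Hence $0<\zeta\le 1$ and $\mathbb{E}[\operatorname{Err}_{\mathcal{F}}(\mathbf{B})]=(1-\zeta)mk$. The centered variance is then $\mathbb{E}\|\mathbf{B}_{\mathcal{F}}\mathbf{R}-\zeta\mathbf{F}\|_F^2=\zeta(1-\zeta)mk$. This yields the bound $c_z^2(1-\zeta)mk/\zeta$, which is never larger than the paper's $c_z^2U/\zeta^2$.
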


\begin{proof}
By Lemma \ref{lemma:BIBD_unbiasedness} for the BIBD construction, Corollaries~\ref{cor:SRG_unbiased} and \ref{cor:COSET_unbiased} for the
vertex-transitive SRG and coset bipartite graph constructions as in
\eqref{eq:effAD1AD2}, respectively, the rescaled computed gradient is unbiased. Thus,
\begin{equation}
\mathbb E_{\mathbf D}
\left[
\mathbb E_{\mathcal F}
\left[
\mathbf B_{\mathcal F}\mathbf R
\right]
\right]
=
\zeta\mathbf F.
\end{equation}
Therefore,
\begin{align}
\mathbb E_{\mathbf D}
\left[
\mathbb E_{\mathcal F}
\left[
\left\|
\mathbf B_{\mathcal F}\mathbf R-\mathbf F
\right\|_F^2
\right]
\right] =
\mathbb E_{\mathbf D}
\left[
\mathbb E_{\mathcal F}
\left[
\left\|
\mathbf B_{\mathcal F}\mathbf R-\zeta\mathbf F
\right\|_F^2
\right]
\right]
+
\left\|
(\zeta-1)\mathbf F
\right\|_F^2.
\end{align}
Consequently,
\begin{align}
\mathbb E_{\mathbf D}
\left[
\mathbb E_{\mathcal F}
\left[
\left\|
\mathbf B_{\mathcal F}\mathbf R-\zeta\mathbf F
\right\|_F^2
\right]
\right]
\leq
\mathbb E_{\mathbf D}
\left[
\mathbb E_{\mathcal F}
\left[
\operatorname{Err}_{\mathcal F}(\mathbf B)
\right]
\right],
\end{align}
where
\begin{equation}
\left\|
\mathbf B_{\mathcal F}\mathbf R-\mathbf F
\right\|_F^2
=
\operatorname{Err}_{\mathcal F}(\mathbf B)
\end{equation}
since $\mathbf R$ is the optimal decoding matrix. Hence,
\begin{align}
\mathbb E_{\mathbf D}
\left[
\mathbb E_{\mathcal F}
\left[
\left\|
\frac{1}{\zeta}
\mathbf Z\mathbf B_{\mathcal F}\mathbf R
-
\mathbf Z\mathbf F
\right\|_F^2
\right]
\right]
&\leq
\frac{\|\mathbf Z\|_2^2}{\zeta^2}
\mathbb E_{\mathbf D}
\left[
\mathbb E_{\mathcal F}
\left[
\left\|
\mathbf B_{\mathcal F}\mathbf R-\zeta\mathbf F
\right\|_F^2
\right]
\right]
\\
&\leq
\frac{c_z^2}{\zeta^2}
\mathbb E_{\mathbf D}
\left[
\mathbb E_{\mathcal F}
\left[
\operatorname{Err}_{\mathcal F}(\mathbf B)
\right]
\right].
\end{align}

Corollaries~\ref{cor:expected_error_BIBD_strr}, \ref{cor:expected_error_SRG_strr}, and \ref{cor:expected_error_coset_strr} in Appendix \ref{sec:exp_app_err_rand_strr} show that for each of the corresponding
constructions, there exists a finite constant $U$ depending only on
the construction and straggler model parameters such that
\begin{equation}
\mathbb E_{\mathbf D}
\left[
\mathbb E_{\mathcal F}
\left[
\operatorname{Err}_{\mathcal F}(\mathbf B)
\right]
\right]
\leq U.
\end{equation}
Therefore,
\begin{equation}
\mathbb E_{\mathbf D}
\left[
\mathbb E_{\mathcal F}
\left[
\left\|
\frac{1}{\zeta}
\mathbf Z\mathbf B_{\mathcal F}\mathbf R
-
\mathbf Z\mathbf F
\right\|_F^2
\right]
\right]
\leq
\frac{c_z^2U}{\zeta^2}.
\end{equation}

Also, for the construction as in \eqref{eq:hada_coset_cons}, by Lemma \ref{lemma:Cost_Hadamard_Convergence}, $\mathbb E_{\mathcal F}
\left[
\mathbf B_{\mathcal F}\mathbf R
\right]
=
\zeta\mathbf F$, and by Lemma \ref{lemma:hada_coset_bound}, $
\mathbb E_{\mathcal F}
\left[
\operatorname{Err}_{\mathcal F}(\mathbf B)
\right] 
\leq  nq$. Therefore, by the same argument as above, $\mathbb E_{\mathcal F}
\left[
\left\|
\frac{1}{\zeta}
\mathbf Z\mathbf B_{\mathcal F}\mathbf R
-
\mathbf Z\mathbf F
\right\|_F^2
\right]
\leq
\frac{c_z^2 nq}{\zeta^2}.$ So the variance bound holds with $U = nq$.

\end{proof}

\subsection{Stationary-Point Convergence}

\begin{cor}
Consider the construction as in $\eqref{eq:effAD1AD2}$ with BIBD, vertex-transitive SRG and coset bipartite graph assignment matrices, and the construction as in $\eqref{eq:hada_coset_cons}$,
and suppose that the spectral norm of $\mathbf{Z}$ is bounded, i.e.,
there exists a constant $c_z>0$ such that $\|\mathbf{Z}\|_2\leq c_z$. Furthermore, assume that $L$ is bounded below by $L_{\inf}$ and has
an $L_s$-Lipschitz-continuous gradient. Let
\begin{equation}
\sigma^2=\frac{c_z^2U}{\zeta^2}
\end{equation}
and define
\begin{equation}
D_L
=
\left(
\frac{2\left(L(\mathbf w^{(1)})-L_{\inf}\right)}
{L_s}
\right)^{1/2}.
\end{equation}
For a given number of iterations $T$, let $\mathbf{v}^{(t)}$ be the computed gradient obtained from the columns of $\frac{1}{\zeta}\mathbf{Z}^{(t)}\mathbf{B}_{\cF}^{(t)}\mathbf{R}^{(t)}$ at iteration $t$.
\begin{equation}
\mathbf w^{(t+1)}
=
\mathbf w^{(t)}
-
\eta
\mathbf{v}^{(t)},
\qquad t=1,\ldots,T,
\end{equation}
where
\begin{equation}
\eta
=
\min\left\{
\frac{1}{L_s},
\frac{\widetilde D}{\sigma\sqrt{T}}
\right\}
\end{equation}
for some $\widetilde D>0$. Let $R_T$ be uniformly distributed over
$\{1,\ldots,T\}$. Then,
\begin{equation}
\frac{1}{L_s}
\mathbb E
\left[
\left\|
\nabla L\left(\mathbf w^{(R_T)}\right)
\right\|_2^2
\right]
\leq
\frac{L_sD_L^2}{T}
+
\left(
\widetilde D+\frac{D_L^2}{\widetilde D}
\right)
\frac{\sigma}{\sqrt{T}}.
\end{equation}
Consequently,
\begin{equation}
\lim_{T\rightarrow\infty}
\mathbb E
\left[
\left\|
\nabla L\left(\mathbf w^{(R_T)}\right)
\right\|_2^2
\right]
=0.
\end{equation}
\end{cor}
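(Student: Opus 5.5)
The plan is to reduce the corollary to Corollary~2.2 of \cite{doi:10.1137/120880811}, the nonconvex SGD guarantee for a randomly selected iterate. That result needs four inputs: (i) $L$ is bounded below with $L_s$-Lipschitz gradient (assumed); (ii) the stochastic gradient $\mathbf{v}^{(t)}$ is conditionally unbiased, $\mathbb{E}[\mathbf{v}^{(t)}\mid\mathbf{w}^{(t)}]=\nabla L(\mathbf{w}^{(t)})$; (iii) the conditional variance is uniformly bounded by some $\sigma^2$; and (iv) the step size is constant and of the stated form, with $R_T$ uniform on $\{1,\dots,T\}$. Given these, the displayed inequality is exactly the cited bound with $D_f$ replaced by $D_L$. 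Letting $T\to\infty$ makes both terms vanish, since $\sigma$, $D_L$ and $\widetilde D$ do not depend on $T$.

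For (ii), fix an iteration $t$ and condition on $\mathbf{w}^{(t)}$. Then $\mathbf{Z}^{(t)}$ is deterministic, and the randomness comes from fresh draws of $\mathbf{D}$ (for construction \eqref{eq:effAD1AD2}) and of the straggler set $\mathcal{F}$, independent of the past. By Lemma~\ref{lemma:BIBD_unbiasedness}, Corollaries~\ref{cor:SRG_unbiased} and \ref{cor:COSET_unbiased}, and Lemma~\ref{lemma:Cost_Hadamard_Convergence} for \eqref{eq:hada_coset_cons}, we have $\mathbb{E}_{\mathbf{D}}\mathbb{E}_{\mathcal{F}}[\mathbf{B}_{\mathcal{F}}\mathbf{R}]=\zeta\mathbf{F}$ with $\zeta\neq0$, where $\zeta$ is $\alpha$ or $\beta$ as appropriate. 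Linearity then gives $\mathbb{E}[\frac{1}{\zeta}\mathbf{Z}\mathbf{B}_{\mathcal{F}}\mathbf{R}\mid\mathbf{w}^{(t)}]=\mathbf{Z}\mathbf{F}$. Concatenating the $m$ columns of $\mathbf{Z}\mathbf{F}$ recovers $\nabla L(\mathbf{w}^{(t)})$, so the vectorized $\mathbf{v}^{(t)}$ is unbiased.

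A point to check is that $\zeta$ does not depend on $t$ or on $\mathbf{w}$. It depends only on $\mathbf{A}$, $m$, $q$ and the law of $\mathbf{D}$, which the proofs of those lemmas make clear. For (iii), Theorem~\ref{thm:Bound_variance} gives $\mathbb{E}\|\mathbf{v}^{(t)}-\nabla L(\mathbf{w}^{(t)})\|_2^2=\mathbb{E}\|\frac{1}{\zeta}\mathbf{Z}\mathbf{B}_{\mathcal{F}}\mathbf{R}-\mathbf{Z}\mathbf{F}\|_F^2\le c_z^2U/\zeta^2=\sigma^2$, using that the vector $\ell_2$-norm of the concatenation equals the Frobenius norm. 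This holds at every iteration because $\|\mathbf{Z}^{(t)}\|_2\le c_z$ uniformly and $U$ does not depend on $t$.

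I expect the main obstacle to be making sure the hypotheses of the cited SGD result match exactly. Its bound is phrased in terms of $\mathbb{E}[\|\nabla L\|^2]$ scaled by $1/L_s$, with $D_f=(2(L(\mathbf{w}^{(1)})-L_{\inf})/L_s)^{1/2}$ and step size $\min\{1/L_s,\widetilde D/(\sigma\sqrt T)\}$. I will verify that this is precisely the constant-step version, and that the independence of the randomness across iterations, which comes from fresh $\mathbf{D}$ and $\mathcal{F}$ each round, supplies the martingale-type conditioning the result uses. Once that is checked, substituting the constants yields the displayed inequality, and the limit statement is immediate.
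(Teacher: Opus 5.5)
Your proposal is correct and takes the same route as the paper. The paper's proof combines the unbiasedness results (Lemma~\ref{lemma:BIBD_unbiasedness}, Corollaries~\ref{cor:SRG_unbiased} and~\ref{cor:COSET_unbiased}, Lemma~\ref{lemma:Cost_Hadamard_Convergence}) with the variance bound of Theorem~\ref{thm:Bound_variance} and then invokes Corollary~2.2 of \cite{doi:10.1137/120880811}. The additional checks you make---fresh draws of $\mathbf{D}$ and $\mathcal{F}$ at each iteration (needed for conditional unbiasedness), $\zeta$ not depending on $t$ or $\mathbf{w}$, and the Frobenius norm of $\frac{1}{\zeta}\mathbf{Z}\mathbf{B}_{\mathcal{F}}\mathbf{R}-\mathbf{Z}\mathbf{F}$ equalling the $\ell_2$ norm of its vectorization---are details the paper leaves implicit, and they are correct.
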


\begin{proof}
By Lemmas \ref{lemma:BIBD_unbiasedness} and $\ref{lemma:Cost_Hadamard_Convergence}$ and Corollaries \ref{cor:SRG_unbiased} and \ref{cor:COSET_unbiased}, the
rescaled computed gradient is unbiased. Theorem \ref{thm:Bound_variance} establishes that its variance is bounded. Therefore, under the stated assumptions, the result follows from
Corollary~2.2 in~\cite{doi:10.1137/120880811}.
\end{proof}


\section{Numerical Experiments}
\label{sec:numerical_exp}

In what follows, we refer to the construction obtained by stacking the underlying assignment matrix vertically $m$ times as the baseline construction (similar to  \eqref{eq:trivial_cons}). We present the results of numerical experiments to demonstrate the performance of our constructions and compare them with the baseline construction, along with the lower bound derived in Theorem \ref{thm:lowerbound} (see \cite{cE_agc_repo} for the source code used for the experiments). 


\begin{figure*}[t!]
    \centering

    \begin{subfigure}[t]{0.45\textwidth}
        \centering
        \includegraphics[
            width=\linewidth,
            ,
            clip
        ]{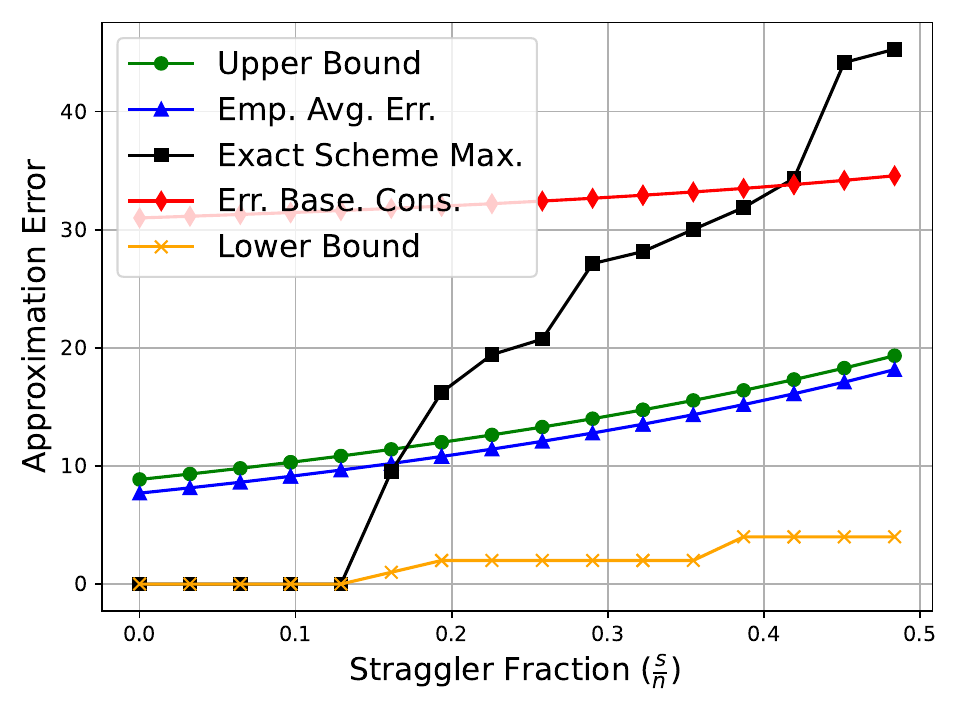}
        \caption{Comparison with the exact communication-efficient gradient coding scheme in \cite{tayyebehM21} for $m =2$.}
        \label{fig:bibd_31_exact}
        
    \end{subfigure}
    \hfill
    \begin{subfigure}[t]{0.45\textwidth}
        \centering
        \includegraphics[
            width=\linewidth,
          ,
            clip
        ]{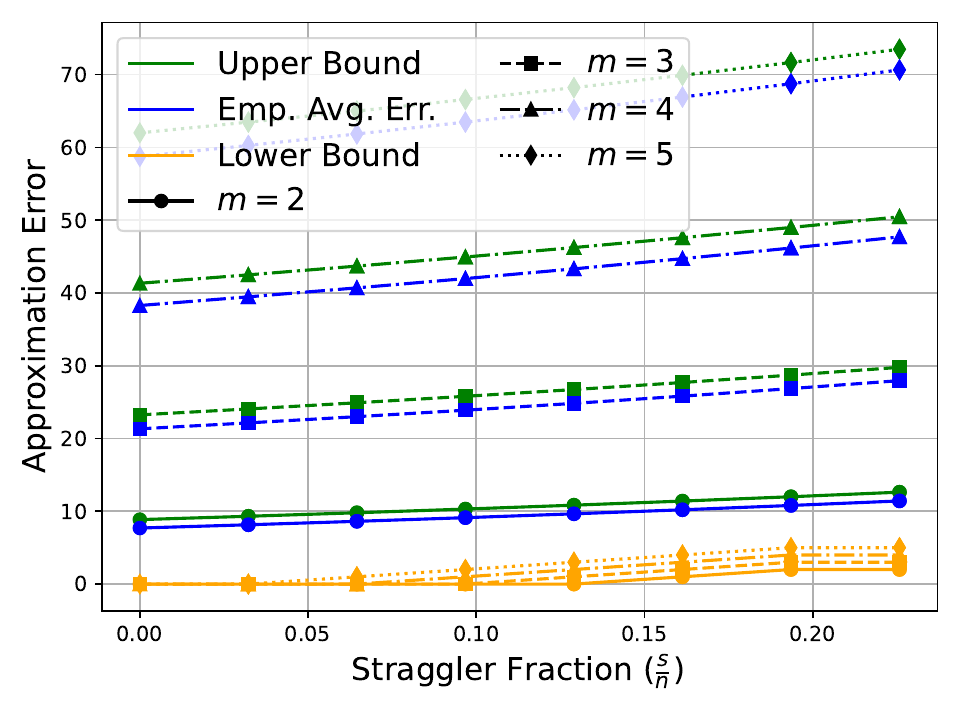}
        \caption{Performance for different values of $m$.}
        \label{fig:different_m}
        
    \end{subfigure}

    \caption{Performance of the construction as in \eqref{eq:effAD1AD2} with
    $\epsilon=0$ and a $(31,31,6,6,1)$ BIBD.}
    \label{fig:bibd_comparison}
\end{figure*}

Firstly, we chose a $(31,31,6,6,1)$ BIBD as the assignment matrix and $m = 2$ to evaluate the construction discussed in Section \ref{sec:construction_RDM} (see \eqref{eq:effAD1AD2}). To compare with the upper bound \eqref{eq:AD1AD2err_BIBD}, the empirical average error was calculated as follows. For a given number of stragglers $s$, $100$ different realizations of the encoding matrix $\mathbf{B}$ were generated by generating the matrices $\mathbf{D}_1, \mathbf{D}_2$ each time from the underlying distribution with $\epsilon = 0$. For each of these realizations, the approximation error \eqref{eq:ApproxError} was calculated for each of $1000$ random choices of stragglers. Finally, the average of the errors $(100 \times 1000)$ was calculated (see \autoref{fig:bibd_31_exact}). Notice that this average error lies close to the upper bound \eqref{eq:AD1AD2err_BIBD}. Also, the upper bound is significantly lower than the error for the baseline construction. We additionally compare this construction against the exact communication-efficient gradient coding scheme in~\cite{tayyebehM21}. For the same set of parameters, the exact scheme is 
designed to tolerate $s=\gamma-m=4$ stragglers. Since its polynomial encoder 
is linear, we represent it in encoding matrix form and evaluate its error using 
the same optimal least-squares decoder. For each value of $s$, we report the maximum approximation error of the exact scheme over the considered straggler sets. 
We point out that the condition number of the corresponding matrix is very high and can lead to numerical instability in scenarios with higher number of workers.
As shown in \autoref{fig:bibd_31_exact}, 
it achieves zero error within its designed tolerance, but its error increases 
sharply beyond this threshold, whereas the proposed construction exhibits a 
more gradual increase.

We further evaluate this construction for $m\in\{2,3,4,5\}$ using the same procedure. As shown in \autoref{fig:different_m}, the empirical average error remains close to the corresponding upper bound for all the considered values of $m$, while the lower bound is also shown for comparison.



\begin{figure*}[t!]
    \centering

    \begin{subfigure}[t]{0.45\textwidth}
        \centering
        \includegraphics[
            width=\linewidth,
            ,
            clip
        ]{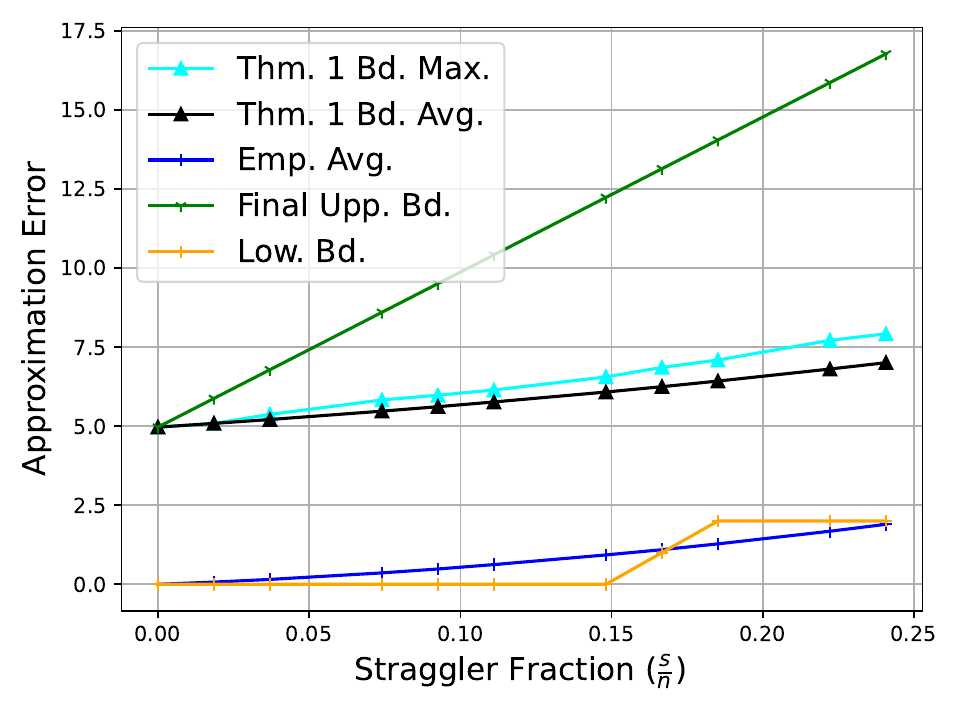}
        \caption{Empirical average error and corresponding bounds.}
        \label{fig:rdm_cos_bound}
        
    \end{subfigure}
    \hfill
    \begin{subfigure}[t]{0.45\textwidth}
        \centering
        \includegraphics[
            width=\linewidth,
          ,
            clip
        ]{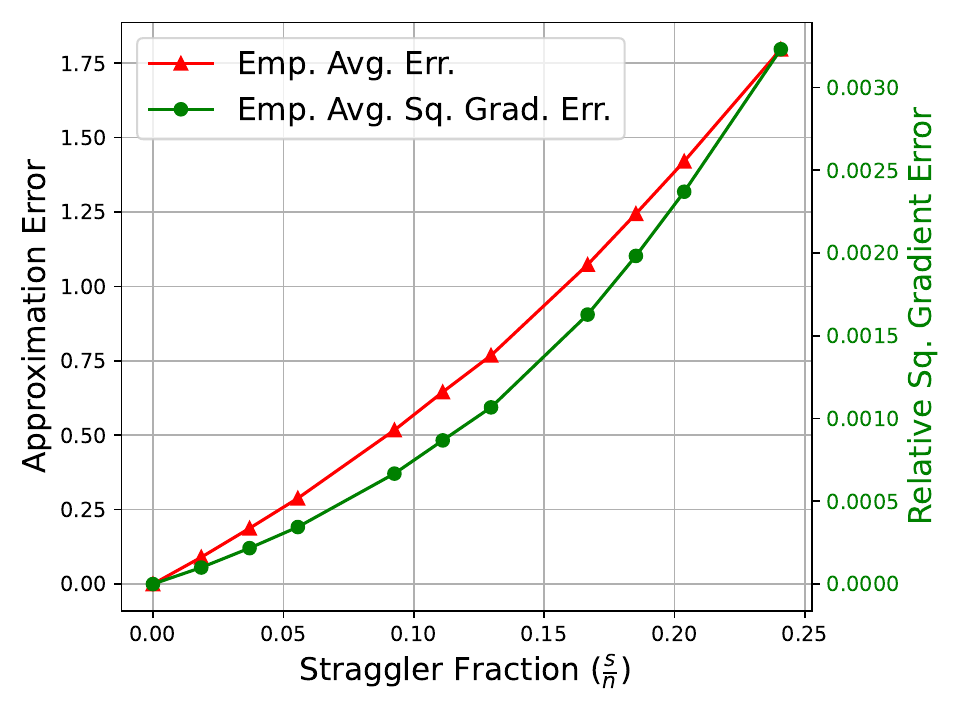}
        \caption{Comparison of empirical average error and relative squared gradient error.}
        \label{fig:err_gradd_err}
        
    \end{subfigure}

    \caption{{\small Performance of the construction as in \eqref{eq:effAD1AD2} for $\epsilon  = 0.1$, and $m = 2$ with a $(27, 2, 5) $ coset bipartite graph.}}
    \label{fig:coset_rdm}
\end{figure*}

\begin{figure*}[t!]
    \centering

    \begin{subfigure}[t]{0.45\textwidth}
        \centering
        \includegraphics[
            width=\linewidth,
            ,
            clip
        ]{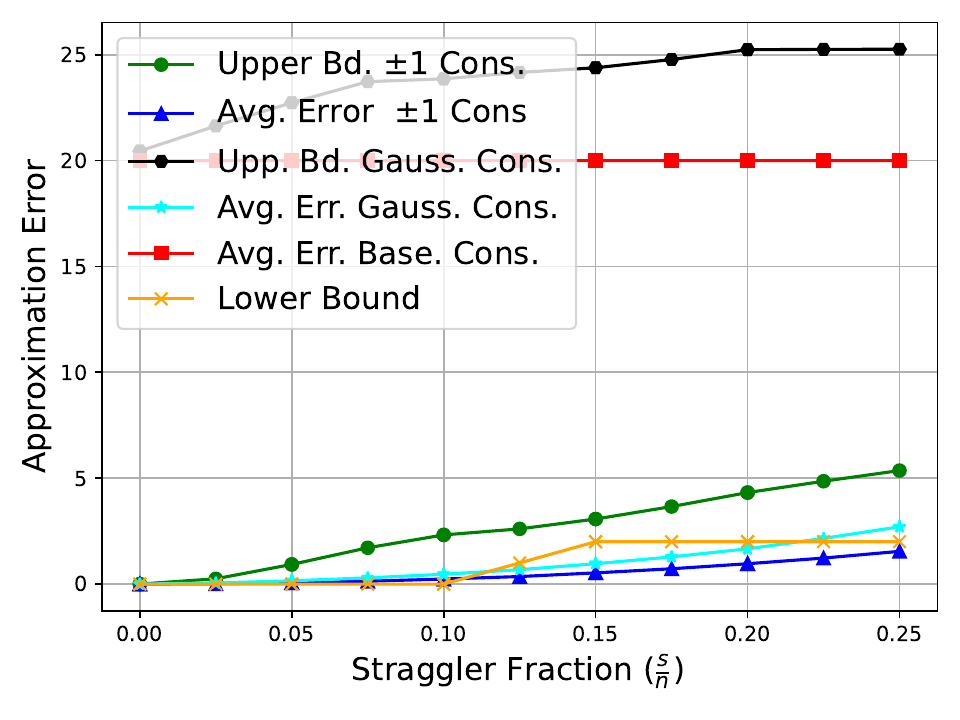}
        \caption{ Performance of the construction as in  $\eqref{eq:recons}$ for a $(40,20,3,6)$ bi-regular bipartite graph.}
        \label{fig:DiagDom}
        
    \end{subfigure}
    \hfill
    \begin{subfigure}[t]{0.45\textwidth}
        \centering
        \includegraphics[
            width=\linewidth,
          ,
            clip
        ]{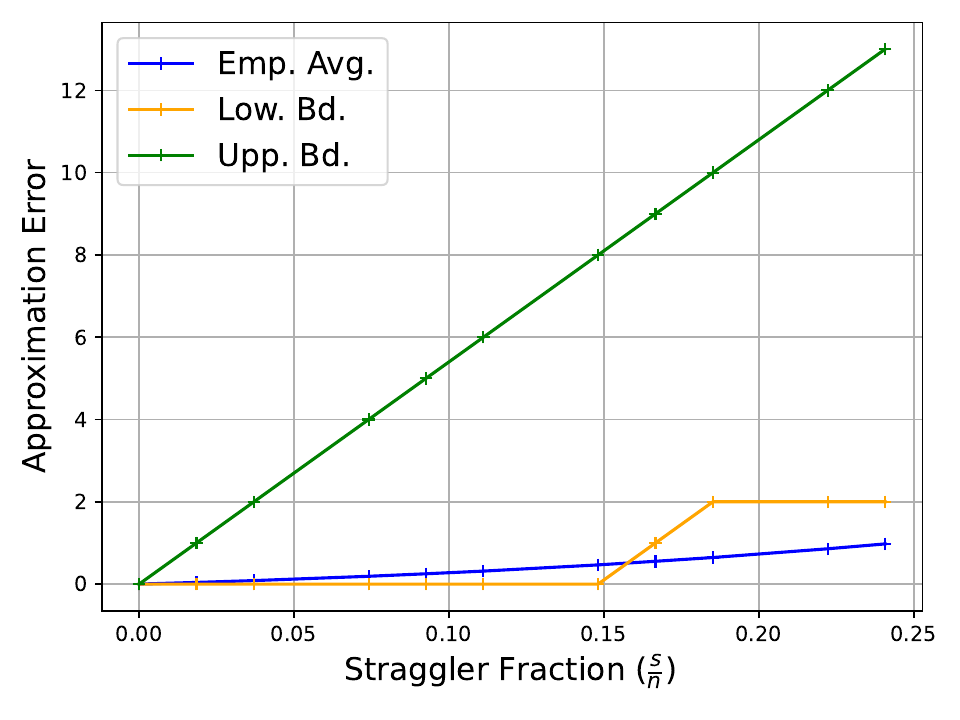}
        \caption{Performance of the construction as in  \eqref{eq:hada_coset_cons} for a $(27,2,5)$ coset bipartite graph.}
        \label{fig:hada_coset_cons}
        
    \end{subfigure}

    \caption{{\small Performance of the constructions in \eqref{eq:recons} and \eqref{eq:hada_coset_cons}.}}
    \label{fig:coset_cons}
   
\end{figure*}

    


%

Next, we chose a $(27, 2, 5)$ coset bipartite graph to  evaluate the construction as in \eqref{eq:effAD1AD2} with $\epsilon = 0.1$. In this case, $ m = 2$, $n = 54$, $k = 27$ and $\delta = 5$. The empirical average was calculated the same way as in \autoref{fig:bibd_31_exact}. Note that in this case the bound in \eqref{eq:AD1AD2err} depends on the specific sets of stragglers and the bound in \eqref{eq:AD1AD2err_Coset} depends on the number of stragglers only. Although the bound appears loose for zero stragglers, the approximation error is zero because the encoding matrix in this case is invertible (see \autoref{fig:rdm_cos_bound}). To examine how the approximation error translates to the actual
gradient computation, we also evaluate the relative squared gradient error,
defined as
$
\frac{\left\|\mathbf{Z}\mathbf{B}_{\mathcal{F}}\mathbf{R}
-\mathbf{Z}\mathbf{F}\right\|_F^2}
{\left\|\mathbf{Z}\mathbf{F}\right\|_F^2},
$
where $\mathbf{Z}\mathbf{F}$ is the true gradient and
$\mathbf{Z}\mathbf{B}_{\mathcal{F}}\mathbf{R}$ is the decoded gradient. 
This evaluation was performed using the MNIST dataset and a fully connected neural network with input dimension 784, one hidden layer with 128 neurons and ReLU activation, and an output layer with 10 neurons and softmax activation. We calculate the empirical average of this
relative squared gradient error over the sampled realizations. \autoref{fig:err_gradd_err} compares this empirical average relative squared gradient error with the empirical average approximation error. The two
quantities increase with the straggler fraction and exhibit a similar
overall trend.

We also evaluate the construction in \eqref{eq:hada_coset_cons} using the $(27,2,5)$
coset bipartite graph with $m=2$. For each number of stragglers, the
approximation error is averaged over 1000 random choices of the straggler set and
compared with the upper bound in Lemma \ref{lemma:hada_coset_bound} and the lower bound.
As shown in \autoref{fig:hada_coset_cons}, the empirical average error remains well below the
upper bound.

Next, we consider the construction in Section \ref{sec:construction_RHPNSC} (see Algorithm \ref{alg:example}). We chose the bi-adjacency matrix of a $(40,20,3,6)$ bi-regular bipartite graph as the assignment matrix and $m = 2$.
In Algorithm 1, we picked $\mathbf{v}_1 = \mathbf{1}_n$ and $\mathbf{v}_2$ such that the entries of $\mathbf{v}_2$ are $\pm 1$. Note that in this particular case, we were successful in finding a $\bfv_2$ that satisfied the null-space constraints. However, this is not always guaranteed. If no restrictions are placed on $\bfv_2$ other than the null-space constraints, then the existence of a nonzero null-space vector is guaranteed. 
For a given number of stragglers, the upper bound \eqref{eq:bounddiagdom} was calculated for each of 1000 random choices of stragglers and the maximum of them is shown (see \autoref{fig:DiagDom}). The average approximation error $\eqref{eq:ApproxError}$ was calculated for the same 1000 choices of stragglers, for this construction and the baseline construction. The upper bound in this case is significantly lower than the baseline scheme error in this case as well.
Next, we picked $\mathbf{v}_1$ from a Gaussian $\mathcal{N}(\mathbf{0}_n,\bfI_n)$ distribution and $\bfv_2$ according to the null-space constraints of Algorithm \ref{alg:example}. In this case, the simulation results show that the error is quite low. However, the upper bound based on \eqref{eq:bounddiagdom} is quite loose. For both cases, as expected, the error is zero when there are no stragglers. We note that in \autoref{fig:coset_rdm} and \autoref{fig:coset_cons}, the average errors are lower than the lower bound (obtained from \eqref{eq:lowerbound}) at slightly higher straggler fractions. This is because the lower bound corresponds to a constructed worst-case non-straggler set of size $n-s$ that has the corresponding approximation error, whereas the average error curves are generated by random choice of the non-stragglers. 



Finally, we compare the convergence of our scheme with the baseline construction. For training, we considered the MNIST dataset, and the fully connected neural network discussed above.
%
For comparison, we chose a  $(7, 7, 3,3, 1)$ BIBD and a $(27, 2, 5)$ coset bipartite graph with $m = 2$ for both cases. We set the straggling probability to $q = .25$. The encoding matrix follows construction  \eqref{eq:effAD1AD2}, with $\epsilon  = 0$ for the BIBD case and $\epsilon  = .1$ for the coset bipartite graph case, compared against the corresponding baseline construction, i.e., $\bfB^T = \begin{bmatrix} \bfA^T & \bfA^T\end{bmatrix}$. The results are averaged over 20 random initializations. It can be observed that in both cases, our construction converges faster (see  \autoref{fig:loss_coset} and \autoref{fig:loss_bibd}).


\begin{figure}[t!]
    \centering
     \begin{subfigure}[t]{0.48\linewidth}
        \centering
        \includegraphics[width= .9\linewidth]{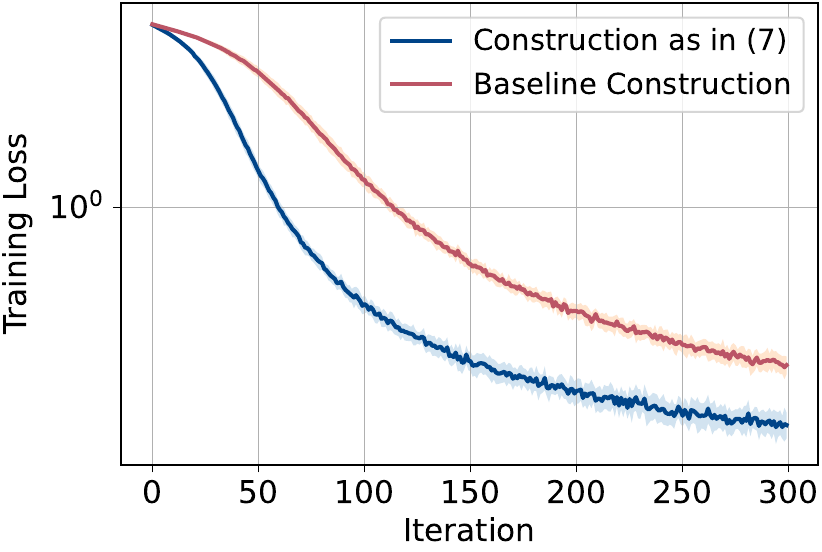}
    \caption{{\small $(27, 2, 5)$ coset bipartite graph}}
    \label{fig:loss_coset}
    
 \end{subfigure}
    \hfill
    \begin{subfigure}[t]{0.48\linewidth}
        \centering
        \includegraphics[width= .9\linewidth]{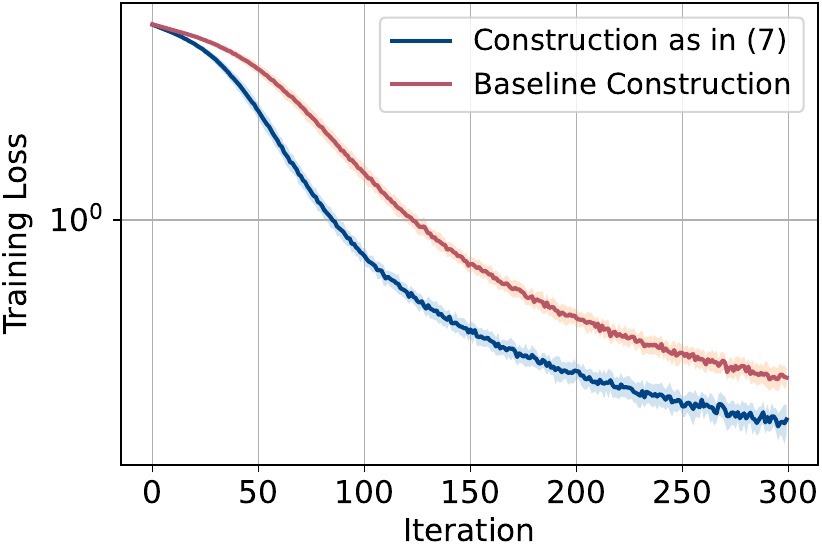}
    \caption{{\small  $(7, 7, 3,3, 1)$ BIBD }}
    \label{fig:loss_bibd}
    \end{subfigure}
    \caption{Training loss for a small neural network with different constructions, $m = 2$.}
    \label{fig:training_loss_comparison}

\end{figure}

\section{Conclusions and Future Work}
In this work, we propose approximate gradient coding schemes that are communication-efficient. Our constructions are based on post multiplication by diagonal matrices and Hadamard products with null-space constraints, applied to structured assignment matrices from BIBDs, strongly regular graphs, and coset bipartite graphs. Moreover, we prove convergence for some of our constructions with specific assignment matrices. We validate our constructions through numerical experiments, which show improved approximation error and faster convergence relative to the baseline constructions. 

There is ample scope for future work. We expect to develop constructions with tighter approximation error bounds and lower empirical error across a wide range of straggler sets. Additionally, schemes that incorporate partial computations from straggling workers (rather than ignoring the partial results entirely) are of particular interest. Extending our schemes to heterogeneous straggler settings while maintaining communication efficiency is also an important open problem.

\bibliographystyle{IEEEtran}
\bibliography{BG, merged_clean}



\appendices
\section{Expected Approximation Error Under Random Straggling}
\label{sec:exp_app_err_rand_strr}
We first average the fixed non-straggler set error bounds derived in Corollaries \ref{cor:BIBD}, \ref{cor:SRG}, and \ref{cor:bipartite} over the random set of non-straggling workers. The following lemma provides a common bound that is subsequently applied to the BIBD, SRG, and coset bipartite graph constructions.

\begin{lemma}
\label{lemma:expected_error_random_straggling}
Suppose that each worker straggles independently with probability $q$, and let $\mathcal{F}\subseteq[n]$ be the random set of non-straggling workers, and let $f = |\mathcal{F}|$. Suppose that there exist constants $C_0>0$, $a_0>0$, and $b_0\geq 0$ such that, for every nonempty $\cF$,
\[
\mathbb{E}_{\mathbf{D}}
\left[
\text{Err}_{\mathcal{F}}(\mathbf{B})
\right]
\leq
mk-\frac{C_0f}{a_0+b_0(f-1)}.
\]
Then,
\[
\mathbb{E}_{\mathcal{F}}
\left[
\mathbb{E}_{\mathbf{D}}
\left[
\text{Err}_{\mathcal{F}}(\mathbf{B})
\right]
\right]
\leq
mk-
\frac{C_0n(1-q)}
{a_0+b_0(n-1)}.
\]
\end{lemma}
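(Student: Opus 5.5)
The plan is to avoid any concavity or Jensen-type argument in $f$. Instead, I will replace the denominator $a_0+b_0(f-1)$ by its largest possible value, which makes the lower bound on the subtracted term linear in $f=|\mathcal{F}|$. Averaging over the binomial distribution of $f$ is then immediate.

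First I handle the empty non-straggler set. If $\mathcal{F}=\emptyset$, then (with the convention used in Lemmas~\ref{lemma:BIBD_unbiasedness} and \ref{lemma:SRG_COSET_unbiasedness}) $\mathbf{B}_{\mathcal{F}}$ is zero. Hence the only feasible reconstruction is $\mathbf{0}$ and $\text{Err}_{\emptyset}(\mathbf{B})=\|\mathbf{F}\|_F^2=\sum_{i=1}^m \mathbf{f}_i^T\mathbf{f}_i=mk$. This equals $mk-\frac{C_0\cdot 0}{a_0+b_0(n-1)}$, so for $f=0$ the inequality
\[
\mathbb{E}_{\mathbf{D}}[\text{Err}_{\mathcal{F}}(\mathbf{B})]\le mk-\frac{C_0 f}{a_0+b_0(n-1)}
\]
holds with equality.

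Next, for nonempty $\mathcal{F}$ we have $1\le f\le n$. Because $b_0\ge 0$, this gives $0<a_0+b_0(f-1)\le a_0+b_0(n-1)$, and positivity comes from $a_0>0$. Since $C_0f>0$, it follows that
\[
\frac{C_0 f}{a_0+b_0(f-1)}\ge \frac{C_0 f}{a_0+b_0(n-1)}.
\]
Combining this with the hypothesis shows that the displayed linear-in-$f$ bound holds for every $\mathcal{F}\subseteq[n]$, including the empty set.

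Finally I take $\mathbb{E}_{\mathcal{F}}$ of both sides. The right-hand side is affine in $f$, and under the independent straggling model $f\sim\text{Binomial}(n,1-q)$, so $\mathbb{E}_{\mathcal{F}}[f]=n(1-q)$. By linearity this yields
\[
\mathbb{E}_{\mathcal{F}}\bigl[\mathbb{E}_{\mathbf{D}}[\text{Err}_{\mathcal{F}}(\mathbf{B})]\bigr]\le mk-\frac{C_0 n(1-q)}{a_0+b_0(n-1)},
\]
which is the claimed bound.

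The only delicate point is the empty-set case: the hypothesis is stated only for nonempty $\mathcal{F}$, so the value $\text{Err}_\emptyset=mk$ must be verified directly as above. Everything else is monotonicity of the denominator in $f$ followed by linearity of expectation.
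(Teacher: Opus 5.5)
Your proof is correct and follows essentially the same route as the paper. You evaluate $\text{Err}_{\emptyset}(\mathbf{B})=\|\mathbf{F}\|_F^2=mk$ directly, replace the denominator $a_0+b_0(f-1)$ by its maximum $a_0+b_0(n-1)$ using $b_0\ge 0$, and average the resulting bound, which is linear in $f$, via $\mathbb{E}[f]=n(1-q)$. The only cosmetic difference is that you absorb the empty set into the linear bound, where it holds with equality, whereas the paper carries the $q^n mk$ term separately and evaluates $\sum_{f=1}^{n} f\binom{n}{f}(1-q)^f q^{n-f}$ explicitly.
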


\begin{proof}

When $\mathcal{F}=\emptyset$, no worker returns a message. So,
\[
\text{Err}_{\emptyset}(\mathbf{B})
=
\|\mathbf{F}\|_F^2
=
mk.
\]
Therefore,

\begin{align*}
&\mathbb{E}_{\mathcal{F}}
\left[
\mathbb{E}_{\mathbf{D}}
\left[
\operatorname{Err}_{\mathcal{F}}(\mathbf{B})
\right]
\right]
\\
&=
\sum_{{F}\subseteq[n]} \mathbb{P}(\mathcal{F} = F)
\mathbb{E}_{\mathbf{D}}
\left[
\operatorname{Err}_{{F}}(\mathbf{B})
\right]
\\
&=
q^n mk
+
\sum_{f=1}^{n}
\sum_{\substack{{F}\subseteq[n]\\|{F}|=f}} (1-q)^f q^{n-f}
\mathbb{E}_{\mathbf{D}}
\left[
\operatorname{Err}_{{F}}(\mathbf{B})
\right]
\\
&\le
q^n mk
+
\sum_{f=1}^{n}
\binom{n}{f} (1-q)^f q^{n-f}
\left(
mk-\frac{C_0f}{a_0+b_0(f-1)}
\right)
.
\end{align*}

For every $f\in\{1,\ldots,n\}$, since $b_0\geq0$,
\[
a_0+b_0(f-1)
\leq
a_0+b_0(n-1).
\]
Consequently,
\[
\frac{f}{a_0+b_0(f-1)}
\geq
\frac{f}{a_0+b_0(n-1)}.
\]
It follows that
\[
\begin{aligned}
&\mathbb{E}_{\mathcal{F}}
\left[
\mathbb{E}_{\mathbf{D}}
\left[
\text{Err}_{\mathcal{F}}(\mathbf{B})
\right]
\right]
\\
&\leq
q^nmk
+
\sum_{f=1}^{n}
\binom{n}{f} (1-q)^f q^{n-f}
\left(
mk-\frac{C_0f}{a_0+b_0(n-1)}
\right)
\\
&=
mk-
\frac{C_0}{a_0+b_0(n-1)}
\sum_{f=1}^{n}f\binom{n}{f} (1-q)^f q^{n-f}
\\
&=
mk-
\frac{C_0n(1-q) ((1-q)+q)^{n-1}}
{a_0+b_0(n-1)}.
\end{aligned}
\]

So,
\[
\mathbb{E}_{\mathcal{F}}
\left[
\mathbb{E}_{\mathbf{D}}
\left[
\text{Err}_{\mathcal{F}}(\mathbf{B})
\right]
\right]
\leq
mk-
\frac{C_0n(1-q)}
{a_0+b_0(n-1)}.
\]
\end{proof}

We now apply Lemma~\ref{lemma:expected_error_random_straggling} to the BIBD, SRG, and coset bipartite graph constructions.

\begin{cor}
\label{cor:expected_error_BIBD_strr}
Suppose that the assignment matrix $\mathbf{A}$ is the transpose of the incidence matrix of a $(n,k,\gamma,\delta,\lambda)$ BIBD. Then, for the construction as in \eqref{eq:effAD1AD2} with $\epsilon=0$,
\[
\mathbb{E}_{\mathcal{F}}
\left[
\mathbb{E}_{\mathbf{D}}
\left[
\text{Err}_{\mathcal{F}}(\mathbf{B})
\right]
\right]
\leq
mk-
\frac{m\delta^2n(1-q)}
{m\delta+\lambda(n-1)}.
\]
\end{cor}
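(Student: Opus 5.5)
The plan is to invoke Lemma~\ref{lemma:expected_error_random_straggling} directly, with the fixed-set bound of Corollary~\ref{cor:BIBD} supplying its hypothesis. Fix a nonempty non-straggler set $\mathcal{F}$ with $f=|\mathcal{F}|$, so that the number of stragglers is $s=n-f$. Corollary~\ref{cor:BIBD}, which applies for $\epsilon=0$ and any $\mathcal{F}$, gives
\[
\mathbb{E}_{\mathbf{D}}\left[\text{Err}_{\mathcal{F}}(\mathbf{B})\right]\le mk-\frac{m\delta^2 f}{m\delta+(f-1)\lambda}.
\]
Since the bound depends only on $f$, it is uniform over all $\mathcal{F}$ of a given size. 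The proof of that corollary established invertibility of $\mathbf{B}_{\mathcal{F}}^T\mathbf{B}_{\mathcal{F}}$ for every nonempty $\mathcal{F}$, so Theorem~\ref{thm:random_diag} applies for each such $\mathcal{F}$ without exception.

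I would then identify the constants in the lemma as $C_0=m\delta^2$, $a_0=m\delta$, and $b_0=\lambda$. The lemma's requirements hold: $C_0>0$ and $a_0>0$ because $m\ge1$ and $\delta\ge1$, and $b_0=\lambda\ge0$ because $\lambda$ is a nonnegative count. The empty set is handled inside the lemma itself, since $\text{Err}_{\emptyset}(\mathbf{B})=mk$. Substituting these constants into the conclusion of Lemma~\ref{lemma:expected_error_random_straggling} yields
\[
mk-\frac{m\delta^2 n(1-q)}{m\delta+\lambda(n-1)},
\]
which is the claimed bound.

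No step here is a real obstacle. The only point needing care is confirming that the bound from Corollary~\ref{cor:BIBD} holds for every nonempty $\mathcal{F}$, including $f=1$ and $f=n$. This follows from the corollary's positive-definiteness argument: $\mathbf{\Delta}_{\mathcal{F}}=(\delta-\lambda)\mathbf{I}+\lambda\mathbf{J}$ is positive definite because $\delta>\lambda$.
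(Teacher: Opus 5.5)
Your proposal is correct and matches the paper's proof. You apply Corollary~\ref{cor:BIBD} to each nonempty $\mathcal{F}$ with $f=|\mathcal{F}|$, then invoke Lemma~\ref{lemma:expected_error_random_straggling} with $C_0=m\delta^2$, $a_0=m\delta$, $b_0=\lambda$, and your explicit check that $\mathbf{B}_{\mathcal{F}}^T\mathbf{B}_{\mathcal{F}}$ is invertible for every nonempty $\mathcal{F}$ (via $\delta>\lambda$) is a point the paper leaves implicit.
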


\begin{proof}
For a fixed non-empty set $\mathcal{F}$ with $|\mathcal{F}|=f$, Corollary~\ref{cor:BIBD} gives
\[
\mathbb{E}_{\mathbf{D}}
\left[
\text{Err}_{\mathcal{F}}(\mathbf{B})
\right]
\leq
mk-
\frac{m\delta^2f}
{m\delta+\lambda(f-1)}.
\]
The result follows from Lemma~\ref{lemma:expected_error_random_straggling} by setting
\[
C_0=m\delta^2,
\qquad
a_0=m\delta,
\qquad
b_0=\lambda.
\]
For a BIBD, $\delta>0$ and $\lambda\geq0$. Hence, $a_0>0$, $b_0\geq0$ and $C_0 > 0$.
\end{proof}

\begin{cor}
\label{cor:expected_error_SRG_strr}
Suppose that the assignment matrix $\mathbf{A}$ is the adjacency matrix of a $(n,\delta,\lambda,\mu)$ SRG with $0<\delta\neq\mu$. Then, for the construction as in \eqref{eq:effAD1AD2} with $\epsilon=0$,
\[
\mathbb{E}_{\mathcal{F}}
\left[
\mathbb{E}_{\mathbf{D}}
\left[
\text{Err}_{\mathcal{F}}(\mathbf{B})
\right]
\right]
\leq
mk-
\frac{m\delta^2n(1-q)}
{m\delta+(\lambda-\mu)\theta+\mu(n-1)},
\]
where $\theta$ is defined as in Corollary~\ref{cor:SRG}.
\end{cor}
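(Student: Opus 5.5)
The argument mirrors the proof of Corollary~\ref{cor:expected_error_BIBD_strr}: rewrite the fixed-set bound of Corollary~\ref{cor:SRG} in the form required by Lemma~\ref{lemma:expected_error_random_straggling} and then invoke that lemma.

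Fix a nonempty non-straggler set $\mathcal{F}$ with $f=|\mathcal{F}|=n-s$. Corollary~\ref{cor:SRG} with $\epsilon=0$ gives
\[
\mathbb{E}_{\mathbf{D}}\left[\text{Err}_{\mathcal{F}}(\mathbf{B})\right]\le mk-\frac{m\delta^2 f}{(m\delta-\mu)+\mu f+(\lambda-\mu)\theta}.
\]
Write the denominator as $\mu f+(m\delta-\mu)=(m\delta+(\lambda-\mu)\theta)+\mu(f-1)$. We then set
\[
C_0=m\delta^2,\qquad a_0=m\delta+(\lambda-\mu)\theta,\qquad b_0=\mu .
\]
With these choices, the conclusion of Lemma~\ref{lemma:expected_error_random_straggling} has denominator $a_0+b_0(n-1)=m\delta+(\lambda-\mu)\theta+\mu(n-1)$, which is exactly the claimed bound. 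The empty set needs no separate treatment, since the lemma already handles it.

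It remains to verify the lemma's hypotheses. Clearly $C_0>0$ because $\delta>0$, and $b_0=\mu\ge 0$. The main obstacle is showing $a_0>0$.

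\emph{Case $\lambda\ge\mu$.} Here $\theta=\delta$, so $a_0=m\delta+(\lambda-\mu)\delta>0$.

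\emph{Case $\lambda<\mu$.} Here $\theta=\tilde s<0$, so $(\lambda-\mu)\tilde s>0$ and again $a_0>0$, since $m\delta>0$.

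In both cases $a_0>0$. As a consistency check, the per-$\mathcal{F}$ denominator is also positive. In the proof of Corollary~\ref{cor:SRG} it appears as an upper bound on $\lambda_{\max}(\mathbf{K})$ with $\mathbf{K}\succ 0$, so it is positive for every $f\ge1$.

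Having checked all hypotheses, we apply Lemma~\ref{lemma:expected_error_random_straggling} directly to finish the proof.
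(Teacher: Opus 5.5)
Your proposal is correct and follows the paper's proof essentially line for line: you rewrite the denominator from Corollary~\ref{cor:SRG} as $a_0+b_0(f-1)$ with $C_0=m\delta^2$, $a_0=m\delta+(\lambda-\mu)\theta$, $b_0=\mu$, verify $a_0>0$ by the same two cases, and invoke Lemma~\ref{lemma:expected_error_random_straggling}. The one step left implicit, both in your proposal and in the paper, is $\tilde s<0$ when $\lambda<\mu$; it holds because $\lambda-\mu<0$ and the square root is real and nonnegative, since $\mu\le\delta$ in any SRG.
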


\begin{proof}
For a fixed non-empty set $\mathcal{F}$ with $|\mathcal{F}|=f$, Corollary~\ref{cor:SRG} gives
\[
\mathbb{E}_{\mathbf{D}}
\left[
\text{Err}_{\mathcal{F}}(\mathbf{B})
\right]
\leq
mk-
\frac{m\delta^2f}
{m\delta+(\lambda-\mu)\theta+\mu(f-1)}.
\]
The result follows from Lemma~\ref{lemma:expected_error_random_straggling} by setting
\[
C_0=m\delta^2,
\qquad
a_0=m\delta+(\lambda-\mu)\theta,
\qquad
b_0=\mu.
\]

By definition, $\mu\geq0$. We next verify that $a_0>0$. If $\lambda\geq\mu$, then $\theta=\delta$, and hence
\[
a_0
=
m\delta+(\lambda-\mu)\delta
>0.
\]
If $\lambda<\mu$, then
\[
\theta
=
\frac{1}{2}
\left[
(\lambda-\mu)
-
\sqrt{(\lambda-\mu)^2+4(\delta-\mu)}
\right]
<0.
\]
Since $\lambda-\mu<0$, we have
\[
(\lambda-\mu)\theta>0,
\]
and therefore
\[
a_0=m\delta+(\lambda-\mu)\theta>0.
\]
Thus, the conditions of Lemma~\ref{lemma:expected_error_random_straggling} are satisfied.
\end{proof}

\begin{cor}
\label{cor:expected_error_coset_strr}
Suppose that the assignment matrix $\mathbf{A}$ is the bi-adjacency matrix of a $(k,m,\delta)$ coset bipartite graph such that $k=p^a$, where $p$ is a prime, $a\in\mathbb{Z}_{\geq1}$, and $p\nmid\delta$. Then, for the construction as in \eqref{eq:effAD1AD2} with $\epsilon > 0$,
\[
\mathbb{E}_{\mathcal{F}}
\left[
\mathbb{E}_{\mathbf{D}}
\left[
\text{Err}_{\mathcal{F}}(\mathbf{B})
\right]
\right]
\leq
mk-
\frac{m\delta^2n(1-q)}
{m\delta^2+c(m-1)\delta},
\]
where $n=mk$.
\end{cor}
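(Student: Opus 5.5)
The plan is to apply Lemma~\ref{lemma:expected_error_random_straggling} directly, mirroring the proofs of Corollaries~\ref{cor:expected_error_BIBD_strr} and~\ref{cor:expected_error_SRG_strr}. The one subtlety is that Corollary~\ref{cor:bipartite} is stated in terms of the number of stragglers $s$, not in the form $mk-\frac{C_0 f}{a_0+b_0(f-1)}$. So I will first rewrite it in terms of $f=|\mathcal{F}|=n-s$.

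\textbf{Step 1: rewrite the fixed-set bound.} For any nonempty $\mathcal{F}$ of size $f$, Corollary~\ref{cor:bipartite} with $n=mk$ and $s=n-f$ gives
\[
\mathbb{E}_{\mathbf{D}}[\text{Err}_{\mathcal{F}}(\mathbf{B})]\le mk-\frac{m\delta^2 f}{m\delta^2+c(m-1)\delta}.
\]
This bound is valid for every nonempty $\mathcal{F}$. The reason is that Proposition~\ref{prop: cos_inv} gives almost-sure invertibility of $\mathbf{B}$. Hence $\mathbf{B}_{\mathcal{F}}$ has full column rank and $\mathbf{B}_{\mathcal{F}}^T\mathbf{B}_{\mathcal{F}}$ is invertible almost surely, so Theorem~\ref{thm:random_diag} applies for every size $f\ge 1$.

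\textbf{Step 2: choose the constants.} To match the hypothesis of Lemma~\ref{lemma:expected_error_random_straggling}, set
\[
C_0=m\delta^2,\qquad a_0=m\delta^2+c(m-1)\delta,\qquad b_0=0.
\]
Then $\frac{C_0 f}{a_0+b_0(f-1)}$ coincides with the bound above.

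\textbf{Step 3: verify the sign conditions.} Since $\delta\ge 1$, $C_0>0$. Since $c\ge 1$ by Remark~\ref{rem:c_equals_one} and $m\ge 1$, we get $a_0\ge m\delta^2>0$. Finally $b_0=0\ge 0$.

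\textbf{Step 4: conclude.} Lemma~\ref{lemma:expected_error_random_straggling} then yields
\[
\mathbb{E}_{\mathcal{F}}\bigl[\mathbb{E}_{\mathbf{D}}[\text{Err}_{\mathcal{F}}(\mathbf{B})]\bigr]\le mk-\frac{m\delta^2 n(1-q)}{m\delta^2+c(m-1)\delta},
\]
with $n=mk$, which is the claim.

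The only point needing care is the validity of the fixed-set bound for every nonempty $\mathcal{F}$, including the invertibility requirement in Theorem~\ref{thm:random_diag}. The empty set is handled inside the lemma itself.
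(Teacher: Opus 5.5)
Your proposal is correct and matches the paper's proof: you rewrite the fixed-set bound of Corollary~\ref{cor:bipartite} in terms of $f=n-s$, then apply Lemma~\ref{lemma:expected_error_random_straggling} with $C_0=m\delta^2$, $a_0=m\delta^2+c(m-1)\delta$, and $b_0=0$. Your checks that the bound holds for every nonempty $\mathcal{F}$ (via almost-sure invertibility of $\mathbf{B}$) and that $a_0>0$ are the same verifications the paper relies on.
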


\begin{proof}
For a fixed non-empty set $\mathcal{F}$ with $|\mathcal{F}|=f$, Corollary~\ref{cor:bipartite} gives
\[
\mathbb{E}_{\mathbf{D}}
\left[
\text{Err}_{\mathcal{F}}(\mathbf{B})
\right]
\leq
mk-
\frac{m\delta^2f}
{m\delta^2+c(m-1)\delta}.
\]
The result follows from Lemma~\ref{lemma:expected_error_random_straggling} by setting
\[
C_0=m\delta^2,
\qquad
a_0=m\delta^2+c(m-1)\delta,
\qquad
b_0=0.
\]
Since $m,\delta,c>0$, we have $a_0>0$. Thus, the conditions of Lemma~\ref{lemma:expected_error_random_straggling} are satisfied.
\end{proof}

\begin{lemma}
\label{lemma:hada_coset_bound}
Suppose that the assignment matrix $\bfA$ is the bi-adjacency matrix of a
$(k,m,\delta)$ coset bipartite graph such that $k=p^a$, where $p$ is a prime,
$a\in\mathbb{Z}_{\geq 1}$, and $p\nmid\delta$. Then, for the construction as
in \eqref{eq:hada_coset_cons}, for any non-straggler set $\cF$ with
$|\cF|=n-s$,
\begin{equation}
\label{eq:Hada_Coset_err}
\text{Err}_{\cF}(\bfB)\leq s.
\end{equation}
Consequently,
\begin{equation}
\label{eq:Hada_Coset_exp_err}
\mathbb{E}_{\cF}\left[\text{Err}_{\cF}(\bfB)\right]\leq nq.
\end{equation}
\end{lemma}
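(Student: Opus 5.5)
Since $\mathbf{B}=\mathbf{H}_m^T\otimes\mathbf{C}$ is square ($mk\times mk$) and invertible (Proposition~\ref{prop: cos_inv}'s argument shows $\mathbf{C}$ is invertible under $k=p^a$, $p\nmid\delta$, and $\mathbf{H}_m$ is invertible), the error can be written as a projection residual and bounded by the norm of the part of $\mathbf{B}^{-1}\mathbf{F}$ supported on the stragglers. The plan is to bound the least-squares error by a feasible decoder rather than compute it exactly, and then average over stragglers.

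\textbf{Step 1 (explicit inverse).} Let $\mathbf{X}:=\mathbf{B}^{-1}\mathbf{F}$, so $\mathbf{B}\mathbf{X}=\mathbf{F}$. From $\mathbf{H}_m\mathbf{H}_m^T=m\mathbf{I}_m$ we get $\mathbf{B}^{-1}=\frac{1}{m}\mathbf{H}_m\otimes\mathbf{C}^{-1}$. Since $\mathbf{C}$ is circulant with row sums $\delta$, we have $\mathbf{C}\mathbf{1}_k=\delta\mathbf{1}_k$, hence $\mathbf{C}^{-1}\mathbf{1}_k=\frac{1}{\delta}\mathbf{1}_k$. The $i$-th column of $\mathbf{F}$ is $\mathbf{f}_i=\mathbf{e}_i\otimes\mathbf{1}_k$. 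Therefore
\[
\mathbf{X}(:,i)=\frac{1}{m\delta}\,\mathbf{H}_m(:,i)\otimes\mathbf{1}_k,
\]
a vector whose entries all have magnitude $\frac{1}{m\delta}$.

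\textbf{Step 2 (feasible decoder).} Choose the decoder $\mathbf{R}=\mathbf{X}-\mathbf{X}_{\mathcal{S}}$, where $\mathbf{X}_{\mathcal{S}}$ keeps only the rows of $\mathbf{X}$ indexed by the straggler set $\mathcal{S}=[n]\setminus\cF$. This $\mathbf{R}$ has support in $\cF$, and
\[
\mathbf{B}\mathbf{R}-\mathbf{F}=-\mathbf{B}\mathbf{X}_{\mathcal{S}}=-\sum_{j\in\mathcal{S}}\mathbf{B}(:,j)\,\mathbf{X}(j,:).
\]
Each column $\mathbf{B}(:,j)$ has $m\delta$ entries equal to $\pm1$, so $\|\mathbf{B}(:,j)\|_2^2=m\delta$. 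Each row $\mathbf{X}(j,:)$ has $m$ entries of magnitude $\frac{1}{m\delta}$, so $\|\mathbf{X}(j,:)\|_2^2=\frac{1}{m\delta^2}$. A single rank-one term therefore has squared Frobenius norm $\frac{1}{\delta}\le 1$.

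\textbf{Step 3 (main obstacle: the cross terms).} Bounding the sum of the $s$ terms via the triangle inequality gives $s^2/\delta$, which is too weak. Instead, use that $\text{Err}_{\cF}$ is exactly the projection residual. Write $\mathbf{F}=\mathbf{B}\mathbf{X}=\mathbf{B}_{\cF}\mathbf{X}_{\cF}+\mathbf{B}_{\mathcal{S}}\mathbf{X}(\mathcal{S},:)$. Projecting onto the orthogonal complement of $\operatorname{col}(\mathbf{B}_{\cF})$ kills the first term, so
\[
\text{Err}_{\cF}\le\|\mathbf{B}_{\mathcal{S}}\mathbf{X}(\mathcal{S},:)\|_F^2=\operatorname{tr}\!\bigl(\mathbf{X}(\mathcal{S},:)^T\mathbf{B}_{\mathcal{S}}^T\mathbf{B}_{\mathcal{S}}\mathbf{X}(\mathcal{S},:)\bigr).
\]
The Gram matrix is $\mathbf{B}^T\mathbf{B}=(\mathbf{H}_m\mathbf{H}_m^T)\otimes(\mathbf{C}^T\mathbf{C})=m\mathbf{I}_m\otimes\mathbf{C}^T\mathbf{C}$, so columns lying in different Hadamard blocks are orthogonal. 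Within a block the inner products are entries of $m\,\mathbf{C}^T\mathbf{C}$. Moreover $\mathbf{X}(\mathcal{S},:)\mathbf{X}(\mathcal{S},:)^T$ has entries $\frac{1}{m^2\delta^2}\langle\mathbf{H}_m(a,:),\mathbf{H}_m(b,:)\rangle$, which equal $\frac{1}{m\delta^2}$ when $a=b$ and $0$ otherwise. Here $a,b$ are the Hadamard block indices of the two straggler columns; the within-block offset does not matter because $\mathbf{1}_k$ is constant. Combining these two facts, the trace reduces to
\[
\frac{1}{\delta^2}\sum_{a}\mathbf{1}^T\bigl(\mathbf{C}^T\mathbf{C}\bigr)(\mathcal{S}_a,\mathcal{S}_a)\,\mathbf{1},
\]
where $\mathcal{S}_a$ denotes the stragglers falling in block $a$. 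Since $\mathbf{C}^T\mathbf{C}$ is nonnegative with row sums $\delta^2$, each block's contribution is at most $|\mathcal{S}_a|\,\delta^2/\delta^2=|\mathcal{S}_a|$. Summing over blocks gives $\text{Err}_{\cF}(\mathbf{B})\le s$, which is \eqref{eq:Hada_Coset_err}. The step I would check most carefully is this block-orthogonality bookkeeping.

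\textbf{Step 4 (averaging).} Since $\text{Err}_{\cF}(\mathbf{B})\le|\mathcal{S}|$ for every straggler set, taking expectation under the i.i.d.\ Bernoulli$(q)$ straggler model gives $\mathbb{E}_{\cF}[\text{Err}_{\cF}(\mathbf{B})]\le\mathbb{E}|\mathcal{S}|=nq$, which is \eqref{eq:Hada_Coset_exp_err}.
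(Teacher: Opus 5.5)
Your proof is correct, but it follows a different route from the paper's. The paper works directly with the closed-form least-squares residual. For each $i$ it lower-bounds $\mathbf{f}_i^T\mathbf{B}_{\mathcal{F}}(\mathbf{B}_{\mathcal{F}}^T\mathbf{B}_{\mathcal{F}})^{-1}\mathbf{B}_{\mathcal{F}}^T\mathbf{f}_i$ by $\|\mathbf{B}_{\mathcal{F}}^T\mathbf{f}_i\|_2^2/\lambda_{\max}(\mathbf{B}_{\mathcal{F}}^T\mathbf{B}_{\mathcal{F}})$. The numerator equals $\delta^2(n-s)$, because every entry of $\mathbf{B}^T\mathbf{f}_i=\delta(\mathbf{H}_m\mathbf{e}_i)\otimes\mathbf{1}_k$ has magnitude $\delta$. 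The denominator is at most $m\delta^2$, by interlacing with $\mathbf{B}^T\mathbf{B}=m\mathbf{I}_m\otimes\mathbf{C}^T\mathbf{C}$. Summing over $i$ gives $mk-(n-s)=s$.

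You instead exhibit the explicit full-system decoder $\mathbf{X}=\frac{1}{m\delta}\mathbf{H}_m\otimes\mathbf{1}_k$, zero out its straggler rows, and compute the resulting residual exactly. You do this using the block orthogonality of $\mathbf{B}^T\mathbf{B}$ and of $\mathbf{X}\mathbf{X}^T$, and your bookkeeping there is correct.

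Your route buys several things. First, you get a straggler-pattern-dependent bound $\frac{1}{\delta^2}\sum_a\mathbf{1}^T(\mathbf{C}^T\mathbf{C})(\mathcal{S}_a,\mathcal{S}_a)\mathbf{1}$. This lies between $s/\delta$ (the diagonal terms alone) and $s$, so it is never worse than the paper's bound and is strictly better when stragglers in the same block share few data subsets. Second, it handles $\mathcal{F}=\emptyset$ without a separate case. Third, it does not actually need invertibility of $\mathbf{C}$: $\mathbf{B}\mathbf{X}=\mathbf{F}$ follows directly from $\mathbf{H}_m^T\mathbf{H}_m=m\mathbf{I}_m$ and $\mathbf{C}\mathbf{1}_k=\delta\mathbf{1}_k$. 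So the bound $\text{Err}_{\mathcal{F}}(\mathbf{B})\le s$ holds for any coset bipartite graph. The paper, by contrast, uses invertibility of $\mathbf{B}$ to justify writing $(\mathbf{B}_{\mathcal{F}}^T\mathbf{B}_{\mathcal{F}})^{-1}$.

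Your projection argument in Step 3 is redundant. The Step 2 decoder already produces the residual $-\mathbf{B}_{\mathcal{S}}\mathbf{X}(\mathcal{S},:)$, which is exactly what you then bound.

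The paper's argument is shorter and uses only standard spectral facts, at the cost of giving only the uniform bound $s$.
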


\begin{proof}
First consider a non-empty set $\cF$. For $i\in[m]$, observe that
$\mathbf{f}_i^T\mathbf{f}_i=k$. Since $\bfB$ is invertible,
$\bfB_{\cF}^T\bfB_{\cF}$ is invertible.  Let
$\bfP_{\cF}:=\bfI_n(:,\cF)\in\mathbb{R}^{n\times(n-s)}$
be the selection matrix that selects the columns corresponding to $\cF$.
Thus, $\bfB_{\cF}=\bfB\bfP_{\cF}$. We have
\begin{align*}
&\mathbf{f}_i^T\mathbf{f}_i
-\mathbf{f}_i^T\bfB\bfP_{\cF}
(\bfB_{\cF}^T\bfB_{\cF})^{-1}
\bfP_{\cF}^T\bfB^T\mathbf{f}_i\\
&=k-\|\bfP_{\cF}^T\bfB^T\mathbf{f}_i\|_2^2
\frac{\mathbf{f}_i^T\bfB\bfP_{\cF}}
{\|\bfP_{\cF}^T\bfB^T\mathbf{f}_i\|_2}
(\bfB_{\cF}^T\bfB_{\cF})^{-1}
\frac{\bfP_{\cF}^T\bfB^T\mathbf{f}_i}
{\|\bfP_{\cF}^T\bfB^T\mathbf{f}_i\|_2}\\
&\leq
k-\|\bfP_{\cF}^T\bfB^T\mathbf{f}_i\|_2^2
\lambda_{\min}\left(
(\bfB_{\cF}^T\bfB_{\cF})^{-1}
\right).
\end{align*}

Recall that $\mathbf{f}_i=\mathbf{e}_i\otimes\mathbf{1}_k$. Since
$\bfB=\bfH^T\otimes\bfC$,
\begin{align*}
\bfB^T\mathbf{f}_i
&=(\bfH\otimes\bfC^T)
(\mathbf{e}_i\otimes\mathbf{1}_k)\\
&=(\bfH\mathbf{e}_i)\otimes
(\bfC^T\mathbf{1}_k)\\
&=\delta(\bfH\mathbf{e}_i)\otimes\mathbf{1}_k.
\end{align*}
Since every entry of $\bfH\mathbf{e}_i$ has magnitude one,
every entry of $\bfB^T\mathbf{f}_i$ has magnitude $\delta$. Therefore,
\begin{equation}
\label{eq:Hada_Coset_f_norm}
\|\bfP_{\cF}^T\bfB^T\mathbf{f}_i\|_2^2
=\delta^2(n-s).
\end{equation}

Moreover,
$
\bfB^T\bfB
=
\bfH\bfH^T\otimes\bfC^T\bfC
=
m\bfI_m\otimes\bfC^T\bfC.
$ Since $\bfC$ is $\delta$-regular,
$
\lambda_{\max}(\bfC^T\bfC)=\delta^2,
$
and hence
$
\lambda_{\max}(\bfB^T\bfB)=m\delta^2.
$
Also,
$
\bfB_{\cF}^T\bfB_{\cF}
=
\bfP_{\cF}^T\bfB^T\bfB\bfP_{\cF}.
$
Thus, by Cauchy's interlacing theorem,
$
\lambda_{\max}(\bfB_{\cF}^T\bfB_{\cF})
\leq m\delta^2
$. Consequently,
\[
\lambda_{\min}\left(
(\bfB_{\cF}^T\bfB_{\cF})^{-1}
\right)
=
\frac{1}{
\lambda_{\max}(\bfB_{\cF}^T\bfB_{\cF})
}
\geq \frac{1}{m\delta^2}.
\]

Using \eqref{eq:Hada_Coset_f_norm}, for each $i\in[m]$,
\begin{align*}
&\mathbf{f}_i^T\mathbf{f}_i
-\mathbf{f}_i^T\bfB\bfP_{\cF}
(\bfB_{\cF}^T\bfB_{\cF})^{-1}
\bfP_{\cF}^T\bfB^T\mathbf{f}_i\\
&\leq
k-\delta^2(n-s)\frac{1}{m\delta^2}\\
&=
k-\frac{n-s}{m}.
\end{align*}
Summing over $i\in[m]$ gives
\[
\text{Err}_{\cF}(\bfB)
\leq
mk-(n-s).
\]
Since $n=mk$ for a $(k,m,\delta)$ coset bipartite graph,
$
\text{Err}_{\cF}(\bfB)\leq s.
$ For $\cF=\emptyset$, we have
$\text{Err}_{\emptyset}(\bfB)=\|\bfF\|_F^2=mk=n$. So \eqref{eq:Hada_Coset_err} also holds since in this case $s=n$. Since each worker straggles independently with probability $q$, 
$
\mathbb{E}_{\cF}\left[\text{Err}_{\cF}(\bfB)\right]
\leq
\mathbb{E}_{\cF}[s]
=
nq,
$ which proves \eqref{eq:Hada_Coset_exp_err}.
\end{proof}

\section{Condition Number of the Decoding Step}
\label{sec:appendix_cond_number}

In this appendix, we analyze the numerical stability of the least-squares decoding step for the construction in $\eqref{eq:effAD1AD2}$ with BIBD and SRG assignment matrices. For a fixed non-straggler set $\mathcal{F}$, we derive upper bounds on the condition number $\kappa(\mathbf{B}_{\mathcal{F}}^T\mathbf{B}_{\mathcal{F}})$. The following claim first bounds this condition number in terms of $\boldsymbol{\Delta}_{\mathcal{F}}=\mathbf{A}_{\mathcal{F}}^{T}\mathbf{A}_{\mathcal{F}}$.


\begin{claim}
\label{claim:cond_num}
Suppose that $\mathbf{\Delta}_{\mathcal{F}}$ is invertible. Then, for the construction in~(7),
$$
\kappa\!\left(\mathbf{B}_{\mathcal{F}}^{T}\mathbf{B}_{\mathcal{F}}\right)
\leq
\kappa\!\left(\mathbf{\Delta}_{\mathcal{F}}\right)
\frac{\sum_{j=1}^{m}\left\|\tilde{\mathbf{D}}_{j}\right\|_{2}^{2}}
{\sum_{j=1}^{m}\sigma_{\min}^{2}\!\left(\tilde{\mathbf{D}}_{j}\right)}
\leq
\left(\frac{1+\epsilon}{1-\epsilon}\right)^{2}
\kappa\!\left(\mathbf{\Delta}_{\mathcal{F}}\right).
$$
\end{claim}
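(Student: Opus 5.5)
We use the representation already derived in the proof of Theorem~\ref{thm:random_diag}, namely $\mathbf{B}_{\mathcal{F}}^T\mathbf{B}_{\mathcal{F}}=\sum_{j=1}^m \tilde{\mathbf{D}}_j\mathbf{\Delta}_{\mathcal{F}}\tilde{\mathbf{D}}_j$. The plan is to bound the largest and smallest eigenvalues of this sum separately, and then take their ratio.

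For the upper end, apply Weyl's inequality (subadditivity of $\lambda_{\max}$ for Hermitian matrices) to the sum. Then, for each term,
\[
\lambda_{\max}(\tilde{\mathbf{D}}_j\mathbf{\Delta}_{\mathcal{F}}\tilde{\mathbf{D}}_j)=\|\tilde{\mathbf{D}}_j\mathbf{\Delta}_{\mathcal{F}}\tilde{\mathbf{D}}_j\|_2\le \|\tilde{\mathbf{D}}_j\|_2^2\,\lambda_{\max}(\mathbf{\Delta}_{\mathcal{F}}),
\]
so that $\lambda_{\max}(\mathbf{B}_{\mathcal{F}}^T\mathbf{B}_{\mathcal{F}})\le\lambda_{\max}(\mathbf{\Delta}_{\mathcal{F}})\sum_j\|\tilde{\mathbf{D}}_j\|_2^2$.

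For the lower end, use the Rayleigh quotient. For a unit vector $\mathbf{x}$,
\[
\mathbf{x}^T\Big(\sum_j\tilde{\mathbf{D}}_j\mathbf{\Delta}_{\mathcal{F}}\tilde{\mathbf{D}}_j\Big)\mathbf{x}=\sum_j(\tilde{\mathbf{D}}_j\mathbf{x})^T\mathbf{\Delta}_{\mathcal{F}}(\tilde{\mathbf{D}}_j\mathbf{x})\ge\lambda_{\min}(\mathbf{\Delta}_{\mathcal{F}})\sum_j\|\tilde{\mathbf{D}}_j\mathbf{x}\|_2^2\ge\lambda_{\min}(\mathbf{\Delta}_{\mathcal{F}})\sum_j\sigma_{\min}^2(\tilde{\mathbf{D}}_j).
\]
The hypothesis that $\mathbf{\Delta}_{\mathcal{F}}$ is invertible, together with $\mathbf{\Delta}_{\mathcal{F}}\succeq 0$, gives $\lambda_{\min}(\mathbf{\Delta}_{\mathcal{F}})>0$. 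Each $\tilde{\mathbf{D}}_j$ has nonzero diagonal entries, so $\sigma_{\min}(\tilde{\mathbf{D}}_j)>0$. Hence the lower bound is strictly positive, which also shows that $\mathbf{B}_{\mathcal{F}}^T\mathbf{B}_{\mathcal{F}}$ is positive definite and its condition number is well defined. Dividing the upper bound by the lower bound yields the first inequality, since $\kappa(\mathbf{\Delta}_{\mathcal{F}})=\lambda_{\max}(\mathbf{\Delta}_{\mathcal{F}})/\lambda_{\min}(\mathbf{\Delta}_{\mathcal{F}})$.

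For the second inequality, note that the diagonal entries of each $\tilde{\mathbf{D}}_j$ lie in $I=I_1\cup I_2$, so their absolute values lie in $[1-\epsilon,1+\epsilon]$. Consequently $\|\tilde{\mathbf{D}}_j\|_2\le 1+\epsilon$ and $\sigma_{\min}(\tilde{\mathbf{D}}_j)\ge 1-\epsilon>0$, because $\epsilon<1$. The ratio of sums is therefore at most $m(1+\epsilon)^2/(m(1-\epsilon)^2)$. No step is a real obstacle; the only point needing care is the lower bound, where one must keep $\lambda_{\min}(\mathbf{\Delta}_{\mathcal{F}})$ outside the sum via the substitution $\mathbf{y}_j=\tilde{\mathbf{D}}_j\mathbf{x}$ rather than attempting a cruder Weyl lower bound.
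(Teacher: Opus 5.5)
Your proof is correct and follows essentially the same route as the paper: the decomposition $\mathbf{B}_{\mathcal{F}}^T\mathbf{B}_{\mathcal{F}}=\sum_j\tilde{\mathbf{D}}_j\mathbf{\Delta}_{\mathcal{F}}\tilde{\mathbf{D}}_j$, per-term eigenvalue bounds via the substitution $\mathbf{y}=\tilde{\mathbf{D}}_j\mathbf{x}$, Weyl's inequality, and the entrywise bounds $1-\epsilon\le|\tilde{\mathbf{D}}_j(u,u)|\le 1+\epsilon$. The only difference is at the lower end. You apply the Rayleigh quotient to the whole sum, whereas the paper first bounds $\lambda_{\min}(\tilde{\mathbf{D}}_j\mathbf{\Delta}_{\mathcal{F}}\tilde{\mathbf{D}}_j)\ge\sigma_{\min}^2(\tilde{\mathbf{D}}_j)\lambda_{\min}(\mathbf{\Delta}_{\mathcal{F}})$ termwise and then uses Weyl's superadditivity of $\lambda_{\min}$. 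That gives exactly the same bound, so your closing remark that a Weyl-based lower bound would be cruder is not accurate.
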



\begin{proof}
Recall that
\[
\mathbf{B}_{\mathcal F}^{T}\mathbf{B}_{\mathcal F}
=
\sum_{j=1}^{m}
\tilde{\mathbf{D}}_j
\mathbf{\Delta}_{\mathcal F}
\tilde{\mathbf{D}}_j,
\]
where each $\tilde{\mathbf{D}}_j$ is an invertible diagonal matrix. Since $\mathbf{\Delta}_{\mathcal{F}}=\mathbf{A}_{\mathcal{F}}^T\mathbf{A}_{\mathcal{F}}$ is positive semidefinite and invertible, it is positive definite. For any vector $\mathbf{y}$,
\[
\lambda_{\min}(\mathbf{\Delta}_{\mathcal F})\|\mathbf{y}\|_2^2
\le
\mathbf{y}^{T}\mathbf{\Delta}_{\mathcal F}\mathbf{y}
\le
\lambda_{\max}(\mathbf{\Delta}_{\mathcal F})\|\mathbf{y}\|_2^2.
\]
Let $\mathbf{y}= \tilde{\mathbf{D}}_j\mathbf{x}$. Then, for any unit-norm vector $\mathbf{x}$,
\[
\mathbf{x}^{T}
\tilde{\mathbf{D}}_j
\mathbf{\Delta}_{\mathcal F}
\tilde{\mathbf{D}}_j
\mathbf{x}
=
\mathbf{y}^{T}{\mathbf{\Delta}}_{\mathcal F}\mathbf{y}
\le
\lambda_{\max}({\mathbf{\Delta}}_{\mathcal F})
\|\mathbf{y}\|_2^2.
\]
Since
\[
\|\mathbf{y}\|_2
=
\|\tilde{\mathbf{D}}_j\mathbf{x}\|_2
\le
\|\tilde{\mathbf{D}}_j\|_2\|\mathbf{x}\|_2
=
\|\tilde{\mathbf{D}}_j\|_2,
\]
we get
\[
\mathbf{x}^{T}
\tilde{\mathbf{D}}_j
\mathbf{\Delta}_{\mathcal F}
\tilde{\mathbf{D}}_j
\mathbf{x}
\le
\lambda_{\max}(\mathbf{\Delta}_{\mathcal F})
\|\tilde{\mathbf{D}}_j\|_2^2.
\]
Taking the maximum over all unit-norm vectors $\mathbf{x}$ gives
\[
\lambda_{\max}
\!\left(
\tilde{\mathbf{D}}_j
\mathbf{\Delta}_{\mathcal F}
\tilde{\mathbf{D}}_j
\right)
\le
\|\tilde{\mathbf{D}}_j\|_2^2
\lambda_{\max}(\mathbf{\Delta}_{\mathcal F}).
\]

Similarly,
\[
\mathbf{x}^{T}
\tilde{\mathbf{D}}_j
\mathbf{\Delta}_{\mathcal F}
\tilde{\mathbf{D}}_j
\mathbf{x}
=
\mathbf{y}^{T}\mathbf{\Delta}_{\mathcal F}\mathbf{y}
\ge
\lambda_{\min}(\mathbf{\Delta}_{\mathcal F})
\|\mathbf{y}\|_2^2.
\]
Since
\[
\|\mathbf{y}\|_2
=
\|\tilde{\mathbf{D}}_j\mathbf{x}\|_2
\ge
\sigma_{\min}(\tilde{\mathbf{D}}_j)\|\mathbf{x}\|_2
=
\sigma_{\min}(\tilde{\mathbf{D}}_j),
\]
we get
\[
\mathbf{x}^{T}
\tilde{\mathbf{D}}_j
\mathbf{\Delta}_{\mathcal F}
\tilde{\mathbf{D}}_j
\mathbf{x}
\ge
\lambda_{\min}({\mathbf{\Delta}}_{\mathcal F})
\sigma_{\min}^2(\tilde{\mathbf{D}}_j).
\]
Taking the minimum over all unit vectors $\mathbf{x}$ gives
\[
\lambda_{\min}
\!\left(
\tilde{\mathbf{D}}_j
\mathbf{\Delta}_{\mathcal F}
\tilde{\mathbf{D}}_j
\right)
\ge
\sigma_{\min}^2(\tilde{\mathbf{D}}_j)
\lambda_{\min}(\mathbf{\Delta}_{\mathcal F}).
\]

Now, by Weyl's inequality,
\[
\lambda_{\max}
\!\left(
\mathbf{B}_{\mathcal F}^{T}
\mathbf{B}_{\mathcal F}
\right)
\le
\sum_{j=1}^{m}
\lambda_{\max}
\!\left(
\tilde{\mathbf{D}}_j
\mathbf{\Delta}_{\mathcal F}
\tilde{\mathbf{D}}_j
\right),
\]
and therefore
\[
\lambda_{\max}
\!\left(
\mathbf{B}_{\mathcal F}^{T}
\mathbf{B}_{\mathcal F}
\right)
\le
\lambda_{\max}(\mathbf{\Delta}_{\mathcal F})
\sum_{j=1}^{m}\|\tilde{\mathbf{D}}_j\|_2^2.
\]
Similarly,
\[
\lambda_{\min}
\!\left(
\mathbf{B}_{\mathcal F}^{T}
\mathbf{B}_{\mathcal F}
\right)
\ge
\sum_{j=1}^{m}
\lambda_{\min}
\!\left(
\tilde{\mathbf{D}}_j
\mathbf{\Delta}_{\mathcal F}
\tilde{\mathbf{D}}_j
\right),
\]
and hence
\[
\lambda_{\min}
\!\left(
\mathbf{B}_{\mathcal F}^{T}
\mathbf{B}_{\mathcal F}
\right)
\ge
\lambda_{\min}(\mathbf{\Delta}_{\mathcal F})
\sum_{j=1}^{m}\sigma_{\min}^2(\tilde{\mathbf{D}}_j).
\]

Thus,
\[
\kappa
\!\left(
\mathbf{B}_{\mathcal F}^{T}
\mathbf{B}_{\mathcal F}
\right)
=
\frac{
\lambda_{\max}
\!\left(
\mathbf{B}_{\mathcal F}^{T}
\mathbf{B}_{\mathcal F}
\right)
}{
\lambda_{\min}
\!\left(
\mathbf{B}_{\mathcal F}^{T}
\mathbf{B}_{\mathcal F}
\right)
}
\le
\kappa(\mathbf{\Delta}_{\mathcal F})
\frac{
\sum_{j=1}^{m}\|\tilde{\mathbf{D}}_j\|_2^2
}{
\sum_{j=1}^{m}\sigma_{\min}^2(\tilde{\mathbf{D}}_j)
}.
\]

Since the absolute value of every diagonal entry of $\tilde{\mathbf{D}}_j$ is between $1-\epsilon$ and $1+\epsilon$, we have
$$
\|\tilde{\mathbf{D}}_j\|_2\leq 1+\epsilon
$$
and
$$
\sigma_{\min}(\tilde{\mathbf{D}}_j)\geq 1-\epsilon.
$$
Therefore,
$$
\sum_{j=1}^{m}\|\tilde{\mathbf{D}}_j\|_2^2
\leq
m(1+\epsilon)^2
$$
and
$$
\sum_{j=1}^{m}\sigma_{\min}^2(\tilde{\mathbf{D}}_j)
\geq
m(1-\epsilon)^2.
$$
Consequently,
$$
\begin{aligned}
\kappa\left(\mathbf{B}_{\mathcal{F}}^T
\mathbf{B}_{\mathcal{F}}\right)
&\leq
\kappa\left(\mathbf{\Delta}_{\mathcal{F}}\right)
\frac{
\sum_{j=1}^{m}\|\tilde{\mathbf{D}}_j\|_2^2
}{
\sum_{j=1}^{m}\sigma_{\min}^2(\tilde{\mathbf{D}}_j)
}\\
&\leq
\left(\frac{1+\epsilon}{1-\epsilon}\right)^2
\kappa\left(\mathbf{\Delta}_{\mathcal{F}}\right).
\end{aligned}
$$
This completes the proof.

\end{proof}
\begin{cor}
Suppose that the assignment matrix $\mathbf{A}$ is the transpose of the incidence matrix of a $(n,k,\gamma,\delta,\lambda)$ BIBD. Then, for the construction as in (7), for every nonempty $\mathcal{F}\subseteq\{1,\ldots,n\}$ of size $n-s$,
$$
\kappa\left(\mathbf{B}_{\mathcal{F}}^T
\mathbf{B}_{\mathcal{F}}\right)
\leq
\left(\frac{1+\epsilon}{1-\epsilon}\right)^2
\frac{\delta+(n-s-1)\lambda}{\delta-\lambda}.
$$
\end{cor}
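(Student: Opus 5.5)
The plan is to apply Claim~\ref{claim:cond_num}, which already gives
\[
\kappa\!\left(\mathbf{B}_{\mathcal{F}}^T\mathbf{B}_{\mathcal{F}}\right)\leq\left(\frac{1+\epsilon}{1-\epsilon}\right)^2\kappa\!\left(\mathbf{\Delta}_{\mathcal{F}}\right)
\]
whenever $\mathbf{\Delta}_{\mathcal{F}}$ is invertible. This reduces the statement to two tasks: verifying that $\mathbf{\Delta}_{\mathcal{F}}$ is invertible for every nonempty $\mathcal{F}$, and computing $\kappa(\mathbf{\Delta}_{\mathcal{F}})$ exactly for the BIBD assignment.

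Both tasks follow from \eqref{eq: bibd_prop}, which gives
\[
\mathbf{\Delta}_{\mathcal{F}}=\mathbf{A}_{\mathcal{F}}^T\mathbf{A}_{\mathcal{F}}=(\delta-\lambda)\mathbf{I}_{n-s}+\lambda\mathbf{J}_{n-s}.
\]
Its spectrum was already computed in the proof of Corollary~\ref{cor:BIBD}. The eigenvalue $\delta-\lambda$ has multiplicity $n-s-1$, with eigenvectors orthogonal to $\mathbf{1}_{n-s}$. The eigenvalue $\delta-\lambda+\lambda(n-s)=\delta+(n-s-1)\lambda$ is simple, with eigenvector $\mathbf{1}_{n-s}$. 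Since $\delta>\lambda\geq 0$ for a BIBD, all eigenvalues are strictly positive. Hence $\mathbf{\Delta}_{\mathcal{F}}$ is positive definite and in particular invertible, so the hypothesis of Claim~\ref{claim:cond_num} holds.

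Because $\lambda\geq 0$, the largest eigenvalue is $\delta+(n-s-1)\lambda$ and the smallest is $\delta-\lambda$. One edge case needs checking: if $n-s=1$, then $\mathbf{\Delta}_{\mathcal{F}}$ is the scalar $\delta$, so $\kappa=1$. This is at most $\delta/(\delta-\lambda)$, which is exactly the claimed expression evaluated at $n-s=1$, so the bound still holds. For $n-s\geq 2$ both eigenvalues are present and
\[
\kappa(\mathbf{\Delta}_{\mathcal{F}})=\frac{\delta+(n-s-1)\lambda}{\delta-\lambda}.
\]
Substituting this into Claim~\ref{claim:cond_num} gives the stated bound.

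There is no real obstacle here. The only point needing care is the degenerate case $|\mathcal{F}|=1$, where the smallest eigenvalue $\delta-\lambda$ is absent and the bound must be checked separately; as shown above, it still holds.
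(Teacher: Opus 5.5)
Your proposal is correct and follows the paper's proof. Both arguments invoke Claim~\ref{claim:cond_num} with $\mathbf{\Delta}_{\mathcal{F}}=(\delta-\lambda)\mathbf{I}_{n-s}+\lambda\mathbf{J}_{n-s}$, use $\delta>\lambda$ to get positive definiteness, and bound $\kappa(\mathbf{\Delta}_{\mathcal{F}})$ by $(\delta+(n-s-1)\lambda)/(\delta-\lambda)$. The paper writes only $\lambda_{\min}(\mathbf{\Delta}_{\mathcal{F}})\geq\delta-\lambda$, which already covers the case $|\mathcal{F}|=1$ that you treat separately.
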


\begin{proof}
Recall that
$$
\mathbf{\Delta}_{\mathcal{F}}
= (\delta-\lambda)\mathbf{I}_{n-s}
+\lambda\mathbf{J}_{n-s}.$$
The largest eigenvalue of $\mathbf{\Delta}_{\mathcal{F}}$ is
$$
\lambda_{\max} \left(\mathbf{\Delta}_{\mathcal{F}}\right)
= \delta+(n-s-1)\lambda.
$$
Moreover,
$$
\lambda_{\min}\left(\mathbf{\Delta}_{\mathcal{F}}\right)
\geq
\delta-\lambda.
$$
Since $\delta>\lambda$ for a BIBD, $\mathbf{\Delta}_{\mathcal{F}}$ is positive definite. Therefore,
$$
\begin{aligned}
\kappa\left(\mathbf{\Delta}_{\mathcal{F}}\right)
&=
\frac{
\lambda_{\max}\left(\mathbf{\Delta}_{\mathcal{F}}\right)
}{
\lambda_{\min}\left(\mathbf{\Delta}_{\mathcal{F}}\right)
}
\
&\leq
\frac{\delta+(n-s-1)\lambda}{\delta-\lambda}.
\end{aligned}
$$

The result follows from Claim \ref{claim:cond_num}.
\end{proof}

\remark{For a $(91, 91, 10,10,1)$ BIBD with $\epsilon = 0$, we have $\kappa(\bfB_{\cF}^T\bfB_{\cF}) \le \frac{100}{9} \approx 11.11$.}

\begin{cor}
Suppose that the assignment matrix $\mathbf{A}$ is the adjacency matrix of an $(n,\delta,\lambda,\mu)$ SRG with $\delta\neq\mu$. Let
$$
\theta_{+}
= \frac{1}{2}
\left(
(\lambda-\mu)
+
\sqrt{(\lambda-\mu)^2+4(\delta-\mu)}
\right)
$$
and
$$
\theta_{-}
= \frac{1}{2}
\left(
(\lambda-\mu)
- \sqrt{(\lambda-\mu)^2+4(\delta-\mu)}
\right).
$$
Also, let
$$
\rho
= \min\left\{
|\theta_{+}|,
|\theta_{-}|
\right\}.
$$
Then, for the construction as in (7), for every nonempty $\mathcal{F}\subseteq\{1,\ldots,n\}$,
$$
\kappa\left(\mathbf{B}_{\mathcal{F}}^T
\mathbf{B}_{\mathcal{F}}\right)
\leq
\left(\frac{1+\epsilon}{1-\epsilon}\right)^2
\frac{\delta^2}{\rho^2}.
$$
\end{cor}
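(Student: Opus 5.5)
The plan is to reduce everything to Claim~\ref{claim:cond_num}, so the only work is bounding $\kappa(\mathbf{\Delta}_{\mathcal F})$, where $\mathbf{\Delta}_{\mathcal F}=(\mathbf{A}^T\mathbf{A})(\mathcal F,\mathcal F)=(\mathbf{A}^2)(\mathcal F,\mathcal F)$ because $\mathbf{A}$ is symmetric. This requires controlling the extreme eigenvalues of this principal submatrix from above and below.

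\emph{Spectrum of $\mathbf{A}^2$.} By the SRG eigenvalue facts used in the proof of Corollary~\ref{cor:SRG}, the spectrum of $\mathbf{A}$ is $\{\delta,\theta_+,\theta_-\}$, where $\theta_\pm$ are exactly the $\tilde r,\tilde s$ appearing there. Hence the spectrum of $\mathbf{A}^2$ is $\{\delta^2,\theta_+^2,\theta_-^2\}$. Since $\delta\neq\mu$, we have $\theta_+\theta_-=\mu-\delta\neq 0$, so $\rho>0$ and $\mathbf{A}^2$ is positive definite. We also need $\delta^2$ to be the largest of these values. This holds because $|\theta_\pm|\le\delta$, either by Perron--Frobenius, since $\delta$ is the spectral radius of the nonnegative $\delta$-regular matrix $\mathbf{A}$, or by the bound $\|\mathbf{A}\|_2\le\max_i\sum_j|\mathbf{A}(i,j)|=\delta$ for a symmetric matrix. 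It follows that $\lambda_{\max}(\mathbf{A}^2)=\delta^2$ and $\lambda_{\min}(\mathbf{A}^2)=\rho^2$. In the degenerate case where the multiplicity of $\theta_+$ or $\theta_-$ is zero, the true minimum eigenvalue is at least $\rho^2$, and that inequality is all the argument uses.

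\emph{Interlacing.} Cauchy's interlacing theorem (Theorem 4.3.17 in \cite{horn2012matrix}), already used for Corollary~\ref{cor:SRG}, applies to the principal submatrix $\mathbf{\Delta}_{\mathcal F}$ of $\mathbf{A}^2$. It gives $\lambda_{\max}(\mathbf{\Delta}_{\mathcal F})\le\delta^2$ and $\lambda_{\min}(\mathbf{\Delta}_{\mathcal F})\ge\rho^2>0$. Therefore $\mathbf{\Delta}_{\mathcal F}$ is invertible for every nonempty $\mathcal F$, so the hypothesis of Claim~\ref{claim:cond_num} holds, and $\kappa(\mathbf{\Delta}_{\mathcal F})\le\delta^2/\rho^2$.

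\emph{Conclusion.} Substituting this bound into Claim~\ref{claim:cond_num} yields
\[
\kappa(\mathbf{B}_{\mathcal F}^T\mathbf{B}_{\mathcal F})\le\left(\frac{1+\epsilon}{1-\epsilon}\right)^2\frac{\delta^2}{\rho^2}.
\]
The only delicate point is the ordering $\rho\le|\theta_\pm|\le\delta$, which identifies $\delta^2$ and $\rho^2$ as the extreme eigenvalues of $\mathbf{A}^2$. The Perron--Frobenius (row-sum) bound settles it directly.
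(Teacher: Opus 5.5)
Your proof is correct and follows the paper's argument essentially step for step. You view $\mathbf{\Delta}_{\mathcal F}$ as a principal submatrix of $\mathbf{A}^2$, use $\theta_+\theta_-=\mu-\delta\neq 0$ to get $\rho>0$, and apply Cauchy interlacing to obtain $\lambda_{\max}(\mathbf{\Delta}_{\mathcal F})\le\delta^2$ and $\lambda_{\min}(\mathbf{\Delta}_{\mathcal F})\ge\rho^2$ before invoking Claim~\ref{claim:cond_num}; your explicit justification that $|\theta_\pm|\le\delta$ via the row-sum (Perron--Frobenius) bound fills in a step the paper leaves implicit.
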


\begin{proof}
We have,
$$
\mathbf{\Delta}_{\mathcal{F}}
=\mathbf{A}_{\mathcal{F}}^T\mathbf{A}_{\mathcal{F}} = \mathbf{A}^2(\mathcal{F},\mathcal{F}).
$$

The eigenvalues of $\mathbf{A}$ are $\delta$, $\theta_{+}$, and $\theta_{-}$. Therefore, the eigenvalues of $\mathbf{A}^2$ are $\delta^2$, $\theta_{+}^2$, and $\theta_{-}^2$. Moreover,
$$
\theta_{+}\theta_{-}
= \mu-\delta.
$$
Since $\delta\neq\mu$, both $\theta_{+}$ and $\theta_{-}$ are nonzero. Hence,
$$
\rho>0.
$$
By the Cauchy interlacing theorem,
$$
\lambda_{\max}\left(\mathbf{\Delta}_{\mathcal{F}}\right)
\leq
\delta^2
$$
and
$$
\lambda_{\min}\left(\mathbf{\Delta}_{\mathcal{F}}\right)
\geq
\rho^2.
$$
Therefore, $\mathbf{\Delta}_{\mathcal{F}}$ is positive definite, and
$$
\begin{aligned}
\kappa\left(\mathbf{\Delta}_{\mathcal{F}}\right)
&=
\frac{
\lambda_{\max}\left(\mathbf{\Delta}_{\mathcal{F}}\right)
}{
\lambda_{\min}\left(\mathbf{\Delta}_{\mathcal{F}}\right)
}
\
&\leq
\frac{\delta^2}{\rho^2}.
\end{aligned}
$$
The result follows from Claim \ref{claim:cond_num}.
\end{proof}

\remark{For a $(126, 25, 8,4)$ SRG with $\epsilon = 0$, we have $\kappa(\bfB_{\cF}^T\bfB_{\cF}) \le \frac{625}{9}  \approx 69.44 $.}

\section{Proof of Proposition~\ref{prop:coset-full-support}}
\label{app:coset-full-support}

\begin{proof}
Since $\mathbf{A}$ is the bi-adjacency matrix of a $(k,m,\delta)$ coset bipartite graph, its columns can be ordered such that
\[
\mathbf{A}
=
\left[
\mathbf{C}\ \cdots\ \mathbf{C}
\right],
\]
where $\mathbf{C}\in\mathbb{R}^{k\times k}$. Choose pairwise orthogonal vectors
$\mathbf{q}_1,\ldots,\mathbf{q}_m\in\mathbb{R}^m$ such that
$\mathbf{q}_1=\mathbf{1}_m$ and every entry of each $\mathbf{q}_j$ is nonzero. Such a collection can be obtained, for example, from a suitably scaled Householder matrix \cite{10.1145/320941.320947}. For each $j\in[m]$, let $
\mathbf{v}_j
=
\mathbf{q}_j\otimes\mathbf{1}_k.$ Since every entry of $\mathbf{q}_j$ is nonzero, every entry of $\mathbf{v}_j$ is also nonzero. For each $\ell\in[m]$, let
$
\mathbf{D}_{\mathbf{v}_\ell}
=
\operatorname{diag}(\mathbf{v}_\ell),
$
and let $\mathbf{R}_\ell$ denote the diagonal row-normalization matrix whose $i$-th diagonal entry is
\[
\mathbf{R}_\ell(i,i)
=
\frac{1}
{\left\|\mathbf{v}_\ell(\mathcal{H}_i)\right\|_2^2}.
\]
The matrix constructed at the $\ell$-th iteration of Algorithm \ref{alg:example} can then be written as
\[
\mathbf{A}_\ell
=
\mathbf{R}_\ell
\mathbf{A}
\mathbf{D}_{\mathbf{v}_\ell}.
\]
It suffices to show that,
$
\mathbf{A}_\ell\mathbf{v}_j=\mathbf{0}_k$, for every $\ell < j$.
For any $\ell<j$, we have
\[
\begin{aligned}
\mathbf{A}_\ell\mathbf{v}_j
&=
\mathbf{R}_\ell
\mathbf{A}
\mathbf{D}_{\mathbf{v}_\ell}
\mathbf{v}_j\\
&=
\mathbf{R}_\ell
\mathbf{A}
\left(
\mathbf{v}_\ell\circ\mathbf{v}_j
\right).
\end{aligned}
\]
By the definition of $\mathbf{v}_\ell$ and $\mathbf{v}_j$,
\[
\mathbf{v}_\ell\circ\mathbf{v}_j
=
\begin{bmatrix}
\mathbf{q}_\ell(1)\mathbf{q}_j(1)\mathbf{1}_k\\
\vdots\\
\mathbf{q}_\ell(m)\mathbf{q}_j(m)\mathbf{1}_k
\end{bmatrix}.
\]
Therefore, 
\[
\begin{aligned}
\mathbf{A}
\left(
\mathbf{v}_\ell\circ\mathbf{v}_j
\right)
&=
\left[
\mathbf{C}\ \cdots\ \mathbf{C}
\right]
\begin{bmatrix}
\mathbf{q}_\ell(1)\mathbf{q}_j(1)\mathbf{1}_k\\
\vdots\\
\mathbf{q}_\ell(m)\mathbf{q}_j(m)\mathbf{1}_k
\end{bmatrix}\\
&=
\sum_{t=1}^{m}
\mathbf{q}_\ell(t)\mathbf{q}_j(t)\mathbf{C}\mathbf{1}_k\\
&=
\left(
\sum_{t=1}^{m}\mathbf{q}_\ell(t)\mathbf{q}_j(t)
\right)
\mathbf{C}\mathbf{1}_k\\
&=
\left(
\mathbf{q}_\ell^T\mathbf{q}_j
\right)
\mathbf{C}\mathbf{1}_k.
\end{aligned}
\]
Since $\mathbf{q}_\ell$ and $\mathbf{q}_j$ are orthogonal,
\[
\mathbf{q}_\ell^T\mathbf{q}_j=0.
\]
It follows that
\[
\mathbf{A}_\ell\mathbf{v}_j
=
\mathbf{0}_k
\qquad
\text{for every }\ell<j.
\]
Consequently,
\[
\mathbf{v}_j
\in
\bigcap_{\ell=1}^{j-1}
\text{Null}(\mathbf{A}_\ell)
=
\text{Null}\!\left(
\left[
\mathbf{A}_1^T\ \cdots\ \mathbf{A}_{j-1}^T
\right]^T
\right).
\]

Thus, there exist vectors
$\mathbf{v}_1,\ldots,\mathbf{v}_m$ satisfying the required null-space conditions, with every entry of each $\mathbf{v}_j$ nonzero.
\end{proof}



\section{Proof that  ${\|\mathbf{B}_{\mathcal{C}} \mathbf{R}- \mathbf{F}\|_F^2}  = \sum_{i = 1}^k {\|\mathbf{B}^{(i)} \mathbf{R}- \mathbf{I}_m\|_F^2}$.}
\label{sec:appendix_claim_frob}
\begin{proof}

Let $\tilde{\mathbf{b}}_i^T$ and $\tilde{\mathbf{f}_i}^T$ be the $i^{th}$ row of $\mathbf{B}_C$ and $\mathbf{F}$ respectively, for $i \in [mk]$. Since the squared Frobenius norm of a matrix is the sum of the squared Frobenius norm of each row of that matrix, 

\begin{align*}
\|\mathbf{B}_{\mathcal{C}} \mathbf{R}- \mathbf{F}\|_F^2 &= \sum_{i = 1}^{mk} \|\tilde{\mathbf{b}}_i^T\mathbf{R}- \tilde{\mathbf{f}}_i^T\|_F^2\\
&= \sum_{i = 1}^k \sum_{j = 1}^m \|\tilde{\mathbf{b}}_{i+ k(j-1)}^T\mathbf{R}- \tilde{\mathbf{f}}_{i+ k(j-1)}^T\|_F^2.
\end{align*}

By definition, for $i \in [k]$, 
\begin{align*}
\mathbf{B}^{(i)}  = \begin{bmatrix}\tilde{\mathbf{b}}_{i}^T \\ \tilde{\mathbf{b}}_{i+ k}^T\\ \vdots \\ \tilde{\mathbf{b}}_{i+ k(m-1)}^T \end{bmatrix}.
\end{align*}

Also, note that, 
\begin{align*}
\begin{bmatrix}
\tilde{\mathbf{f}}_{i}^T \\ \tilde{\mathbf{f}}_{i+ k}^T\\ \vdots \\ \tilde{\mathbf{f}}_{i+ k(m-1)}^T \end{bmatrix} = \mathbf{I}_m.
\end{align*}

Consequently, 
\begin{align*}
\sum_{j = 1}^m \|\tilde{\mathbf{b}}_{i+ k(j-1)}^T\mathbf{R}- \tilde{\mathbf{f}}_{i+ k(j-1)}^T\|_F^2 = \|\mathbf{B}^{(i)}\mathbf{R} - \mathbf{I}_m\|_F^2.
\end{align*}
\end{proof}

\section{Hadamard Matrix Based Construction with Coset Bipartite Graph}
\label{app:hada_cos_unbiased}
\normalfont
In this section, we establish the unbiasedness property of the construction in \eqref{eq:hada_coset_cons}.



\begin{lemma}
\label{lemma:Cost_Hadamard_Convergence}

Suppose that the assignment matrix $\mathbf{A}$ is the bi-adjacency matrix of a $(k, m, \delta)$ coset bipartite graph such that $k = p^a$, where $p$ is a prime, $a \in \mathbb{Z}_{\ge 1}$ and $p \nmid \delta$. Then, for the construction as in $\eqref{eq:hada_coset_cons}$, we have 

\[\mathbb{E}_{\mathcal{F}}[\mathbf{B}_{\cF} \mathbf{R}]= \alpha_H \mathbf{F},\]

where $\alpha_H$ is a nonzero constant.
\end{lemma}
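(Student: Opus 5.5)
We need $\mathbb{E}_{\cF}[\bfB_{\cF}\bfR]=\alpha_H\bfF$ for $\bfB=\bfH_m^T\otimes\bfC$, where each worker independently straggles with probability $q<1$. The plan is to imitate Lemma~\ref{lemma:SRG_COSET_unbiasedness}, replacing the random sign symmetry of the diagonal matrices $\bfD_i$ by symmetries of the Hadamard and circulant factors. The randomness now comes only from $\cF$, so the symmetries have to act on worker indices.

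Fix $i\in[m]$. For nonempty $\cF$, $\bfB_{\cF}\mathbf{r}_i* = \mathbf{\Pi}_{\cF}\mathbf{f}_i$ with $\mathbf{\Pi}_{\cF}:=\bfB\bfP_{\cF}(\bfP_{\cF}^T\bfB^T\bfB\bfP_{\cF})^{-1}\bfP_{\cF}^T\bfB^T$. This is well defined because $\bfB$ is invertible: $\bfC$ is invertible by the argument in Proposition~\ref{prop: cos_inv}, and $\bfH_m$ is invertible. Set $\mathbf{\Pi}_\emptyset=\mathbf{0}$.

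The first step uses invertibility of $\bfB$. Writing $\mathbf{y}=\bfB^{-1}\mathbf{x}$ shows that $\mathbf{\Pi}_{\cF}$ is the orthogonal projection onto $\bfB\,\mathrm{span}\{\mathbf{e}_w:w\in\cF\}$, i.e., onto the span of the columns $\bfB(:,w)$, $w\in\cF$. The expectation is therefore
\[
\mathbf{M}:=\mathbb{E}_{\cF}[\mathbf{\Pi}_{\cF}],
\]
a fixed matrix, and the claim becomes $\mathbf{M}\mathbf{f}_i=\alpha_H\mathbf{f}_i$ with $\alpha_H$ independent of $i$.

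The second step identifies symmetries. Let $\bfT$ be the $k\times k$ cyclic shift. Because $\bfC$ is circulant, $\bfT\bfC\bfT^T=\bfC$, so $(\bfI_m\otimes\bfT)\bfB(\bfI_m\otimes\bfT)^T=\bfB$. Hence $(\bfI_m\otimes\bfT)\mathbf{\Pi}_{\cF}(\bfI_m\otimes\bfT)^T=\mathbf{\Pi}_{\cF'}$, where $\cF'$ is the shifted set, which has the same probability. It follows that $\mathbf{M}$ commutes with $\bfI_m\otimes\bfT$. The same argument applies to any permutation $\pi$ of $[n]$ for which there is an orthogonal $\mathbf{O}$ with $\mathbf{O}\bfB\bfQ_\pi^T=\bfB$: then $\mathbf{O}\mathbf{M}\mathbf{O}^T=\mathbf{M}$.

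The third step exploits the Hadamard structure. For a column $\mathbf{h}$ of $\bfH_m$, set $\mathbf{S}=\mathrm{diag}(\mathbf{h})$. Then $\mathbf{S}\bfH_m^T$ is again $\bfH_m^T$ with its columns sign-flipped and permuted, provided $\bfH_m$ is taken to be a Sylvester (group) Hadamard matrix. Under that assumption, $(\mathbf{S}\otimes\bfI_k)\bfB=\bfB(\mathbf{P}'\otimes\bfI_k)\mathbf{\Sigma}$ for a permutation $\mathbf{P}'$ and a diagonal sign matrix $\mathbf{\Sigma}$. Sign flips of columns do not change the projection, and the column permutation preserves the law of $\cF$. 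Therefore $\mathbf{M}$ commutes with $\mathbf{S}\otimes\bfI_k$ for every such $\mathbf{S}$. These diagonal matrices range over all characters of the group indexing $[m]$, so they separate the $m$ blocks. This forces $\mathbf{M}$ to be block diagonal, $\mathbf{M}=\mathrm{blkdiag}(\mathbf{M}_1,\dots,\mathbf{M}_m)$. In particular, the off-diagonal blocks of $\mathbf{M}\mathbf{f}_i$ vanish; this plays the role of the odd-function argument in Lemma~\ref{lemma:SRG_COSET_unbiasedness}.

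The fourth step handles the diagonal block. Since $\mathbf{M}_i$ commutes with $\bfT$, it is circulant, so $\mathbf{M}_i\mathbf{1}_k=\alpha_i\mathbf{1}_k$. Positivity follows as in the earlier lemma: $\mathbf{f}_i^T\mathbf{M}\mathbf{f}_i\ge(1-q)^n\,\mathbf{f}_i^T\mathbf{\Pi}_{[n]}\mathbf{f}_i=(1-q)^n k>0$, because $\mathbf{\Pi}_{[n]}=\bfI$. Hence $\alpha_i>0$.

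The fifth step shows that $\alpha_i$ does not depend on $i$. The symmetry $\mathbf{S}\otimes\bfI_k$ does not do this, since it acts diagonally on the blocks. Instead, use the translation symmetry of the Sylvester matrix, $\mathbf{P}_g\bfH_m^T\mathbf{P}_g'=\mathbf{\Sigma}_g\bfH_m^T$, which moves block $i$ to block $j$ up to signs. Then $\mathbf{M}_i$ and $\mathbf{M}_j$ are conjugate by a sign matrix times $\bfI_k$, giving $\alpha_i=\alpha_j$.

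The main obstacle is step three. It requires $\bfH_m$ to carry a group structure, as a Sylvester matrix does for $m$ a power of two. For a general Hadamard matrix, I would first try to rely only on the row orthogonality $\bfH_m\bfH_m^T=m\bfI_m$, but I expect that to be insufficient, in which case I would restrict the lemma to the Sylvester-type $\bfH_m$ used in the construction.
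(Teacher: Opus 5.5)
\textbf{The gap: your argument only covers Sylvester-type $\bfH_m$.} Your argument is correct when $\bfH_m$ is a Sylvester-type Hadamard matrix. However, the lemma is about $\bfB=\bfH_m^T\otimes\bfC$ for an arbitrary Hadamard matrix of order $m$, and there your steps three and five fail. Step three needs a family of diagonal sign matrices $\mathrm{diag}(\mathbf{h})$ that map $\bfH_m^T$ to a signed column permutation of itself and separate the $m$ blocks; step five needs block translations. Such a family would form an elementary abelian $2$-group (modulo $\pm\bfI$) acting regularly on the columns of $\bfH_m^T$. That forces $m$ to be a power of two, so the symmetries do not exist, e.g., for $m=12$. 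Retreating to Sylvester matrices, as you suggest, therefore proves a strictly weaker statement. Moreover, your expectation is mistaken: the orthogonality $\bfH_m\bfH_m^T=\bfH_m^T\bfH_m=m\bfI_m$ is all that is needed. The point is to use it on the Gram matrix rather than to look for symmetries of $\bfH_m$.

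\textbf{The missing idea: the Gram matrix decouples the groups of workers.} We have $\bfB^T\bfB=\bfH_m\bfH_m^T\otimes\bfC^T\bfC=m\,\bfI_m\otimes\bfC^T\bfC$. This is block diagonal with respect to the $m$ groups of $k$ consecutive workers, so the least-squares problem decouples across groups. Set the following notation:
\begin{itemize}
\item $\cF_i\subseteq[k]$ is the set of surviving workers in group $i$;
\item $\bfP'_{\cF_i}:=\bfI_k(:,\cF_i)$;
\item $\mathbf{N}_{\cF_i}:=\bfP'_{\cF_i}(\bfP'^{T}_{\cF_i}\bfC^T\bfC\bfP'_{\cF_i})^{-1}\bfP'^{T}_{\cF_i}$, with $\mathbf{N}_{\emptyset}:=\mathbf{0}$.
\end{itemize}
Then for every $\cF$ one has $\mathbf{\Pi}_{\cF}=\frac{1}{m}\bfB\,\mathrm{blkdiag}(\mathbf{N}_{\cF_1},\dots,\mathbf{N}_{\cF_m})\bfB^T$. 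Since failures are i.i.d., all $\cF_i$ have the same law. Writing $\overline{\mathbf{N}}:=\mathbb{E}_{\cF}[\mathbf{N}_{\cF_1}]$, we get
\[
\mathbf{M}=\frac{1}{m}\bfB(\bfI_m\otimes\overline{\mathbf{N}})\bfB^T=\frac{1}{m}\bfH_m^T\bfH_m\otimes\bfC\overline{\mathbf{N}}\bfC^T=\bfI_m\otimes\bfC\overline{\mathbf{N}}\bfC^T .
\]
This single computation gives block diagonality and equality of the diagonal blocks for any Hadamard matrix, replacing your steps three and five. Your step two then shows that the common block $\bfC\overline{\mathbf{N}}\bfC^T$ commutes with the cyclic shift, so it is circulant and has $\mathbf{1}_k$ as an eigenvector. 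Your positivity bound $\alpha_H\ge(1-q)^n>0$ from step four goes through unchanged. This is essentially the paper's proof.
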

\begin{proof}

Here, 
$\mathbf{B}^T \mathbf{B} = (\mathbf{H}_m^T \otimes \mathbf{C})^T  (\mathbf{H}_m^T \otimes \mathbf{C}) = \mathbf{H}_m\mathbf{H}_m^T \otimes \mathbf{C}^T \mathbf{C} = m \mathbf{I}_m \otimes \mathbf{C}^T \bfC$. For $\cF =\emptyset$, we assume $\bfB_{\cF} \bfR = \mathbf{0}_{mk \times m}$. For non-empty $\cF$,  let $\bfP_{\cF} = \bfI_n(:, \cF)$. We have, 
\begin{align*}
\mathbf{B}_{\mathcal{F}}\mathbf{R} &= \bfB_{\cF}(\bfB_{\cF}^T\bfB_{\cF})^{-1} \bfB_{\cF}^T \mathbf{F}\\
&=\mathbf{B} \mathbf{P}_{\mathcal{F}}(\mathbf{P}_\mathcal{F}^T \mathbf{B}^T\mathbf{B}\mathbf{P}_\mathcal{F})^{-1}\mathbf{P}_{\mathcal{F}}^T \bfB^T\mathbf{F}\\
\end{align*}

For each $i\in[m]$, let
\[
\mathcal{F}_i
:=
\left\{
j\in[k] : (i-1)k+j\in\mathcal{F}
\right\}.
\]
Thus, $\mathcal{F}_i$ denotes the indices of the non-straggling workers within the $i$-th block of $k$ workers. For non-empty $\cF_i$, let $\bfP'_{\cF_i} = \bfI_k(:, \cF_i)$. Thus, when $\cF_i \ne \emptyset$ for all $i \in [m]$,

\[
\mathbf{P}_{\mathcal{F}}
=
\begin{bmatrix}
\mathbf{P}'_{\mathcal{F}_1}
    & \mathbf{0}_{k\times |\mathcal{F}_2|}
    & \cdots
    & \mathbf{0}_{k\times |\mathcal{F}_m|}\\
\mathbf{0}_{k\times |\mathcal{F}_1|}
    & \mathbf{P}'_{\mathcal{F}_2}
    & \cdots
    & \mathbf{0}_{k\times |\mathcal{F}_m|}\\
\vdots
    & \vdots
    & \ddots
    & \vdots\\
\mathbf{0}_{k\times |\mathcal{F}_1|}
    & \mathbf{0}_{k\times |\mathcal{F}_2|}
    & \cdots
    & \mathbf{P}'_{\mathcal{F}_m}
\end{bmatrix}.
\]

Let $\mathbf{G}:=\mathbf{C}^T\mathbf{C}.$ Since $\bfB^T\bfB = m \bfI_m \otimes \bfC^T \bfC$, it follows that
\begin{align*}
\mathbf{P}_{\mathcal{F}}^T\mathbf{B}^T\mathbf{B}
\mathbf{P}_{\mathcal{F}}
& = m
\begin{bmatrix}
\mathbf{P}'^{T}_{\mathcal{F}_1} \mathbf{G}\mathbf{P}'_{\mathcal{F}_1}
& & \mathbf{0}\\
&
\ddots
&\\
\mathbf{0}
& &
\mathbf{P}'^{T}_{\mathcal{F}_m}\mathbf{G}\mathbf{P}'_{\mathcal{F}_m}
\end{bmatrix}
\end{align*}
 Therefore,
\[
\mathbf{P}_{\mathcal{F}}
(\mathbf{P}_{\mathcal{F}}^T\mathbf{B}^T\mathbf{B}
\mathbf{P}_{\mathcal{F}})^{-1}
\mathbf{P}_{\mathcal{F}}^T
=
\frac{1}{m}
\begin{bmatrix}
\mathbf{P}'_{\mathcal{F}_1}
\left(
\mathbf{P}'^{T}_{\mathcal{F}_1}
\mathbf{G}
\mathbf{P}'_{\mathcal{F}_1}
\right)^{-1}
\mathbf{P}'^{T}_{\mathcal{F}_1} & & \mathbf{0}\\
& \ddots &\\
\mathbf{0} & & \mathbf{P}'_{\mathcal{F}_m}
\left(
\mathbf{P}'^{T}_{\mathcal{F}_m}
\mathbf{G}
\mathbf{P}'_{\mathcal{F}_m}
\right)^{-1}
\mathbf{P}'^{T}_{\mathcal{F}_m}
\end{bmatrix}.
\]

For $i\in[m]$, let

\[
\mathbf{T}_{\mathcal{F}_i} :=
\begin{cases}
\mathbf{P}'_{\mathcal{F}_i}
\left(
\mathbf{P}'^{T}_{\mathcal{F}_i}
\mathbf{G}
\mathbf{P}'_{\mathcal{F}_i}
\right)^{-1}
\mathbf{P}'^{T}_{\mathcal{F}_i}, & \mathcal{F}_i \neq \emptyset, \\[0.5em]
\mathbf{0}_{k \times k}, & \mathcal{F}_i = \emptyset.
\end{cases}
\]


Since for $\cF_i = \emptyset$, $\mathbf{T}_{\cF_i} = \mathbf{0}_{k \times k}$, for any $\cF \subseteq [n]$, 

\[\mathbf{B}_{\cF} \bfR = \frac{1}{m}\mathbf{B}\begin{bmatrix}
\mathbf{T}_{\cF_1} & & \mathbf{0}\\
& \ddots &\\
\mathbf{0} & & \mathbf{T}_{\cF_m} 
\end{bmatrix} \bfB^T \bfF\]

Since each worker is independently non-straggling with probability
$1-q$, the random sets $\mathcal{F}_1,\ldots,\mathcal{F}_m$ are
independent. Let
$
\overline{\mathbf{T}}
:=
\mathbb{E}_{\mathcal{F}}
[\mathbf{T}_{\mathcal{F}_i}]
.$ Thus,
\[
\mathbb{E}_{\mathcal{F}}
\left[
\bfB_{\cF}\bfR
\right]
=
\frac{1}{m} \bfB (\mathbf{I}_m\otimes\overline{\mathbf{T}}) \mathbf{B}^T \bfF.
\]

Next, we show that $\overline{\mathbf{T}}$ is circulant. Let
$\mathbf{Q}\in\{0,1\}^{k\times k}$ be the permutation matrix
corresponding to a cyclic shift by one position. Since $\mathbf{C}$
is circulant,
$
\mathbf{Q}^T\mathbf{C}\mathbf{Q}=\mathbf{Q}\mathbf{C}\mathbf{Q}^T =\mathbf{C}.
$ From this we can infer that $\mathbf{Q}^T\mathbf{G}\mathbf{Q} = \mathbf{Q}\mathbf{G}\mathbf{Q}^T
= \mathbf{G}$.

For $\mathcal{F}_i\subseteq[k]$, let $\mathcal{F}_i'$
denote the set obtained by cyclically shifting the elements of
$\mathcal{F}_i$ by one position. Then,
$\mathbf{Q}\mathbf{P}'_{\mathcal{F}_i}$ is an equivalent selection
matrix for $\mathcal{F}_i'$. 
Now,
\begin{align*}
\mathbf{Q} \bfT_{\mathcal{F}_i} \mathbf{Q}^T &= \mathbf{Q} \mathbf{P}'_{\mathcal{F}_i}
\left(
\mathbf{P}'^{T}_{\mathcal{F}_i}
\mathbf{G}
\mathbf{P}'_{\mathcal{F}_i}
\right)^{-1}
\mathbf{P}'^{T}_{\mathcal{F}_i} \mathbf{Q}^T\\
&= \mathbf{Q} \mathbf{P}'_{\mathcal{F}_i}
\left(
\mathbf{P}'^{T}_{\mathcal{F}_i}\mathbf{Q}^T\mathbf{Q}
\mathbf{G}\mathbf{Q}^T\mathbf{Q}
\mathbf{P}'_{\mathcal{F}_i}
\right)^{-1}
\mathbf{P}'^{T}_{\mathcal{F}_i} \mathbf{Q}^T\\
&= \mathbf{Q} \mathbf{P}'_{\mathcal{F}_i}
\left(
\mathbf{P}'^{T}_{\mathcal{F}_i}\mathbf{Q}^T
\mathbf{G}\mathbf{Q}
\mathbf{P}'_{\mathcal{F}_i}
\right)^{-1}
\mathbf{P}'^{T}_{\mathcal{F}_i} \mathbf{Q}^T\\
& = \bfT_{\cF_i'}.
\end{align*}


Since $\cF_i  \overset{d}{=} \cF_i'$, we have $\bfT_{\cF_i} \overset{d} {=} \bfT_{\cF_i'}$.   Thus, $\mathbb{E}_{\cF}\left[\bfT_{\cF_i}\right] = \mathbb{E}_{\cF}\left[\bfT_{\cF_i'}\right]$. Consequently, $
\mathbf{Q}\overline{\mathbf{T}}\mathbf{Q}^T
=
\overline{\mathbf{T}}.
$
So, by noting that $\mathbf{Q}$ corresponds to a cyclic shift by one position, we can conclude that $\overline{\mathbf{T}}$ is circulant. So,
\begin{align*}
\mathbb{E}_{\mathcal{F}}
[\mathbf{B}_{\mathcal{F}}\mathbf{R}]
&=
\frac{1}{m} \bfB (\mathbf{I}_m\otimes\overline{\mathbf{T}}) \mathbf{B}^T \bfF\\
&=
\frac{1}{m}(\mathbf{H}_m^T\otimes\mathbf{C})
\left(\mathbf{I}_m\otimes\overline{\mathbf{T}}
\right)
(\mathbf{H}_m\otimes\mathbf{C}^T)(\bfI_m \otimes \mathbf{1}_k)\\
&=\frac{1}{m} \bfH_m^T\bfH_m \otimes \bfC \overline{\mathbf{T}} \bfC^T\mathbf{1}_k = \bfI_m \otimes  \bfC \overline{\mathbf{T}} \bfC^T\mathbf{1}_k .
\end{align*}

Let
$
\tilde{\mathbf{T}}:=
\mathbf{C}\overline{\mathbf{T}}\mathbf{C}^T.$
Since $\mathbf{C}$ and $\overline{\mathbf{T}}$ are circulant,
$\tilde{\mathbf{T}}$ is also circulant. Therefore, $\mathbf{1}_k$ is an
eigenvector of $\tilde{\mathbf{T}}$, and there exists a constant $\alpha_H$
such that $\tilde{\mathbf{T}}\mathbf{1}_k=\alpha_H\mathbf{1}_k.$

It remains to show that $\alpha_H$ is nonzero. Since $\mathbf{C}$ is
invertible, $\mathbf{G}=\mathbf{C}^T\mathbf{C}$ is positive definite.
Thus, for any non-empty $\mathcal{F}_i$,
$
\mathbf{1}_k^T\mathbf{T}_{\mathcal{F}_i}\mathbf{1}_k
=
\mathbf{1}_{|\mathcal{F}_i|}^T
\left(
\mathbf{P}'^{T}_{\mathcal{F}_i}
\mathbf{G}
\mathbf{P}'_{\mathcal{F}_i}
\right)^{-1}
\mathbf{1}_{|\mathcal{F}_i|}
>0.
$
Since $q<1$, $\mathcal{F}_i$ is non-empty with positive probability.
Therefore,
$
\mathbf{1}_k^T\overline{\mathbf{T}}\mathbf{1}_k>0
.$ Moreover, since $\mathbf{C}$ is $\delta$-regular,
$\mathbf{C}\mathbf{1}_k=\mathbf{C}^T\mathbf{1}_k
=\delta\mathbf{1}_k$. Hence,
$
\alpha_H k
=
\mathbf{1}_k^T\widetilde{\mathbf{T}}\mathbf{1}_k
=
\delta^2
\mathbf{1}_k^T\overline{\mathbf{T}}\mathbf{1}_k
>0.
$ Consequently, $\alpha_H$ is nonzero.



\end{proof}

\end{document}